\newtheorem{lemma}{Lemma}
\newtheorem{theorem}{Theorem}
\newtheorem{corollary}{Corollary}
\newtheorem{problem}{Problem}
\newtheorem{myclaim}{Claim}
\newtheorem{subclaim}{Subclaim}
\theoremstyle{definition}
\newtheorem{definition}[theorem]{Definition}
\newtheorem{remark}{Remark}
\newenvironment{proofof}[1]{\medskip\noindent\emph{Proof of #1. }\ignorespaces}{\hfill\qed\medskip\par\noindent\ignorespacesafterend}
\newcommand{\qedclaim}{\hfill $\diamond$ \medskip}
\newenvironment{proofclaim}{\noindent\ignorespaces{\em Proof.}}{\hfill\qedclaim\par\noindent} 
\newcommand{\qedsubclaim}{\hfill $\circ$ \medskip}
\newenvironment{proofsubclaim}{\noindent{\em Proof.}}{\qedsubclaim}
\newsavebox{\mybox}
\begin{document}

\title{Diameter computation on $H$-minor free graphs and graphs of bounded (distance) VC-dimension}

\author[1]{Guillaume Ducoffe}
\author[2]{Michel Habib}
\author[3]{Laurent Viennot}

\affil[1]{University of Bucharest, Faculty of Mathematics and Computer Science, and National Institute for Research and Development in Informatics, Romania\footnote{This work was supported by project PN 19 37 
04 01 ``New solutions for complex problems in current ICT research fields based on modelling and optimization'', funded by the Romanian Core Program of the Ministry of Research and Innovation (MCI) 2019-2022. This work was also supported by a grant of Romanian Ministry of Research and Innovation CCCDI-UEFISCDI. project no. 17PCCDI/2018.}}
\affil[2]{Paris University, and IRIF CNRS, France\footnote{Supported by Inria Gang project-team, and ANR project DISTANCIA (ANR-17-CE40-0015).}}
\affil[3]{Inria, and Paris University, France\footnote{Supported by Irif laboratory from CNRS and Paris University, and ANR project Multimod (ANR-17-CE22-0016).}}

\date{}

\maketitle

\begin{abstract}
Under the Strong Exponential-Time Hypothesis, the diameter of general unweighted graphs cannot be computed in truly subquadratic time.
Nevertheless there are several graph classes for which this can be done such as bounded-treewidth graphs, interval graphs and planar graphs, to name a few.
We propose to study unweighted graphs of constant {\em distance VC-dimension} as a broad generalization of many such classes
-- where the distance VC-dimension of a graph $G$ is defined as the VC-dimension of its ball hypergraph: whose hyperedges are the balls of all possible radii and centers in $G$.
In particular for any fixed $H$, the class of $H$-minor free graphs has distance VC-dimension at most $|V(H)|-1$.
\begin{itemize}
\item Our first main result is a Monte Carlo algorithm that on graphs of distance VC-dimension at most $d$, for any fixed $k$, either computes the diameter or concludes that it is larger than $k$ in time $\tilde{\cal O}(k\cdot mn^{1-\varepsilon_d})$, where $\varepsilon_d \in (0;1)$ only depends on $d$. 
We thus obtain a {\em truly subquadratic-time parameterized} algorithm for computing the diameter on such graphs.

\item Then as a byproduct of our approach, we get the first truly subquadratic-time randomized algorithm for {\em constant} diameter computation on all the {\em nowhere dense} graph classes. The latter classes include all proper minor-closed graph classes, bounded-degree graphs and graphs of bounded expansion.

\item Finally, we show how to remove the dependency on $k$ for {\em any} graph class that excludes a fixed graph $H$ as a minor.
More generally, our techniques apply to any graph with constant distance VC-dimension and {\em polynomial expansion} (or equivalently having strongly sublinear balanced separators).
As a result for all such graphs one obtains a truly subquadratic-time randomized algorithm for computing their diameter.
\end{itemize}
We note that all our results also hold for {\em radius} computation.
Our approach is based on the work of Chazelle and Welzl who proved the existence of spanning paths with strongly sublinear {\em stabbing number} for every hypergraph of constant VC-dimension.
We show how to compute such paths efficiently by combining known algorithms for the stabbing number problem with a clever use of $\varepsilon$-nets, region decomposition and other partition techniques.
\end{abstract}

\section{Introduction}\label{sec:intro}

In this paper we present new results on exact diameter computation within several classes of unweighted (undirected) graphs with a geometric flavor.
We recall that the diameter of an unweighted graph is the maximum number of edges on a shortest path.
Beyond its many practical applications, this fundamental problem in Graph Theory has attracted a lot of attention in the fine-grained complexity study of polynomial-time solvable problems~\cite{AVW16,BRSV+18,BCH16,CGR16,CLRG+14,Dah6,Duc19,EvD16,RoV13}.
More precisely, for every $n$-vertex $m$-edge unweighted graph the textbook algorithm for computing its diameter runs in time ${\cal O}(nm)$.
In a seminal paper~\cite{RoV13} this roughly quadratic running-time (in the size $n+m$ of the input) was matched by a quadratic {\em lower-bound}, assuming the Strong Exponential-Time Hypothesis (SETH).
We stress that for graphs with millions of nodes and edges, quadratic time is already prohibitive.

The conditional lower-bound of~\cite{RoV13} also holds for sparse graphs {\it i.e.}, with only $m = {\cal O}(n)$ edges~\cite{AVW16}.
However it does {\em not} hold for many well-structured graph classes~\cite{AVW16,BCD98,BHM18,CDHP01,Cab18,CDP18,Dam16,Duc19,Epp00,FaP80,GKHM+18,Ola90}. 
Our work proposes some new advances on the characterization of graph families for which we can compute the diameter in truly subquadratic time.

\subsection{Related work}

Before we detail our contributions, we wish to mention a few recent (and not so recent) results that are most related to our approach.

\paragraph{Interval graphs.}
An early example of linear-time solvable special case for diameter computation is the class of interval graphs~\cite{Ola90}.
For every interval graph $G$ and for any integer $k$, if we first compute an interval representation for $G$ in linear-time~\cite{HMPV00} then we can compute by dynamic programming, for every vertex $v$, the contiguous segment of all the vertices at a distance $\leq k$ from $v$ in $G$.
It takes almost linear-time and it implies a straightforward quasi linear-time algorithm for diameter computation.
More efficient algorithms for diameter computation on interval graphs and related graph classes were proposed in~\cite{CDHP01}.
Nevertheless we will show in what follows that interval orderings are a powerful tool for diameter computation on more general geometric graph classes. 

\paragraph{Bounded-treewidth graphs.}
More recently, quasi linear-time algorithms for diameter computation on bounded-treewidth graphs were presented in~\cite{AVW16,BHM18} with almost optimal dependency on the treewidth parameter.
The cornerstone of these algorithms is the use of $k$-range trees in order to detect the furthest pairs that are disconnected by some small-cardinality separators.
Since then a few other applications of $k$-range trees and, more generally, orthogonal range searching for diameter computation, have been presented in~\cite{Duc19,DHV19+}.
In our work we uncover deeper connections between diameter computation and range searching techniques from computational geometry.

\paragraph{Planar graphs.}
Finally, in a recent breakthrough paper~\cite{Cab18}, Cabello presented the first truly subquadratic algorithm for diameter computation on planar graphs (see also~\cite{GKHM+18} for improvements on his work).
For that he combined $r$-divisions: a recursive decomposition technique for planar graphs and other hereditary graph classes with sublinear balanced separators, with a clever use of additively weighted Voronoi diagrams.
Cabello conjectured that his algorithm could be generalized to bounded-genus graphs. The long version of \cite{GKHM+18} indicates that their techniques could allow such a generalization if computing the diameter of a graph embedded onto a surface of genus $g$ reduces to the planar case with $O(g)$ holes in the regions of some $r$-division. Although it is known that such a graph can be decomposed into planar subgraphs by removing $2g$ shortest paths~\cite{KKS11,Eppstein03}, such reduction is not clear, and we could not find references formally supporting this.
More recently, Li and Parter proposed a distributed algorithm for planar diameter which is based on metric compression~\cite{LiP19} and uses a VC-dimension argument to bound the number distance profiles with respect to a given subset of nodes.
Following the basics of planar algorithms, we partly reuse $r$-divisions within our algorithms.
However we replace the intricate use of Voronoi diagrams with a quite different approach that is based on some interval representations of the balls of a given radius in a graph. Our approach is also based on a VC-dimension argument but in a very different way than~\cite{LiP19}.
In doing so, we can obtain truly subquadratic-time algorithms for diameter computation on bounded genus graphs (and more generally, on any proper minor-closed graph family) while avoiding a great deal of topological complications. Note that our approach works similarly for computing the radius whereas it not clear whether the Voronoi diagram approach does.

\medskip
We stress that for the three aforementioned graph classes, the techniques used for computing their diameter are quite different from each other.
Our work is a first step toward unifying all these previous results for unweighted graphs in a single framework (note that some of the aforementioned results also hold in the directed weighted case).

\subsection{Our contributions}

We study the parameterization of graph diameter by the {\em VC-dimension} of various hypergraphs.
More precisely, a set $Y$ is {\em shattered} by a hypergraph ${\cal H}$ if by intersecting $Y$ with all hyperedges of ${\cal H}$ one obtains the power-set of $Y$.
The VC-dimension of ${\cal H}$ is then defined as the largest cardinality of a subset shattered by ${\cal H}$.
This powerful notion was first introduced by Vapnik and Chervonenkis in~\cite{VaC15}.
Since then it has found applications in sampling complexity and machine learning, among other domains.
We refer to~\cite{KKR+97} for early work on VC-dimension in graphs.
In particular, the VC-dimension of a graph $G$ is defined as the VC-dimension of its closed neighbourhood hypergraph: whose hyperedges are the closed neighbourhoods of vertices in $G$.
Graphs of bounded interval number and proper minor-closed graph classes are two examples of graph families with a {\em constant} upper-bound on their VC-dimension~\cite{KKR+97,CEV07}.

\paragraph{First example.}
As an appetizer we first consider an $n$-vertex split graph with clique-number $\log^{{\cal O}(1)}{n}$, that is a notouriously hard case for diameter computation~\cite{BCH16}.
Given such a split graph $G$ with stable set $S$ and maximal clique $K$, we can pre-process $G$ in linear-time so as to partition the vertices of $S$ into {\em twin classes}: with two vertices in $S$ being called twins if and only if they have the same neighbourhood in $K$ ({\it e.g.}, see~\cite{CDP18}).
If the VC-dimension of $G$ is at most $d$ then, by the Sauer-Shelah-Perles Lemma~\cite{Sau72,She72} the number of twin classes is in ${\cal O}(|K|^d) = \log^{{\cal O}(d)}{n}$.
Therefore, after some linear-time preprocessing, we are left with computing the diameter on a graph of {\em polylogarithmic order}!
Unfortunately, such simple brute-force arguments are no longer sufficient for split graphs of arbitrary clique-size.

\paragraph{Overview of our techniques.}
In order to generalize our approach to any graph of constant VC-dimension, we use the central notion of {\em spanning paths with low stabbing number}. 
Chazelle and Welzl~\cite{ChW89} defined a spanning path for a hypergraph ${\cal H}$ as a total ordering of its vertex-set.
The {\em stabbing number} of such a path is, up to $1$, the maximum number of maximal intervals of which a hyperedge in ${\cal H}$ can be the union (we refer to Sec.~\ref{sec:def} for a formal definition).

Assume for now that we are given a spanning path with stabbing number $t$ for the closed neighbourhood hypergraph of $G$.
Then in linear time, we can compute for every vertex $v$ the ends of the ${\cal O}(t)$ intervals of which $N_G[v]$ is the union.
We denote this set of intervals by $I(v)$ in what follows.
Then, in order to decide whether $G$ has diameter at most two, it is sufficient to check whether for every vertex $u$ we have $\bigcup_{v \in N_G[u]} I(v) = V$.
Since we only need to consider the ends of such intervals, this verification phase takes time ${\cal O}(deg_G(u) \cdot t)$ for a vertex of degree $deg_G(u)$, and so, ${\cal O}(tm)$ total time.
Note that such running-time is always subquadratic if $t$ is sublinear in $n$.
Overall, we reduced the diameter-two problem to the computation of a spanning path with low stabbing number for the closed neighbourhood hypergraph.

Motivated by range searching problems, Chazelle and Welzl proved the existence of spanning paths with {\em strongly sublinear} stabbing number for every hypergraph of constant VC-dimension~\cite{ChW89}!
Following this approach, we obtain our first main result in this paper:

\begin{restatable}{theorem}{diamTwo}
\label{thm:diamTwo}
For every $d > 0$, there exists a constant $\varepsilon_d \in (0;1)$ such that in deterministic time $\tilde{\cal O}(mn^{1-\varepsilon_d})$ we can decide whether a graph of VC-dimension at most $d$ has diameter two.
\end{restatable}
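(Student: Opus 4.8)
The plan is to reduce the diameter-two problem to the computation of a spanning path with strongly sublinear stabbing number for the closed neighbourhood hypergraph ${\cal N}(G)$ of $G$, exactly as sketched in the overview. Recall $G$ has diameter at most two if and only if for every vertex $u$, every vertex of $V$ is dominated by some vertex in $N_G[u]$, i.e. $\bigcup_{v \in N_G[u]} N_G[v] = V$. The first step is to invoke the Chazelle--Welzl theorem: since ${\cal N}(G)$ has VC-dimension at most $d$, there exists a spanning path (a total order $v_1, \dots, v_n$ of $V$) whose stabbing number is ${\cal O}(n^{1-1/d})$, meaning each closed neighbourhood $N_G[v]$ is a union of ${\cal O}(n^{1-1/d})$ maximal intervals of this order. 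The bulk of the work, and the main obstacle, is to make this algorithmic: I need to actually \emph{compute} such a path (or one with comparable stabbing number) in truly subquadratic time. The existence proof of Chazelle--Welzl is not directly constructive within our time budget, so I plan to appeal to known algorithms for the minimum-stabbing-number / low-crossing spanning tree (or path) problem. These typically run in time polynomial in $n$ and in the number of hyperedges; here the relevant hypergraph has $n$ vertices and $n$ hyperedges (one per closed neighbourhood), but building and manipulating it explicitly costs $\Theta(n^2)$ in the worst case, so some care — sampling via $\varepsilon$-nets / $\varepsilon$-approximations, or working with an implicitly represented hypergraph, or a divide-and-conquer reweighting scheme — is needed to stay under $\tilde{\cal O}(mn^{1-\varepsilon_d})$ while only mildly degrading the stabbing number from $n^{1-1/d}$ to $n^{1-\varepsilon_d}$ for a slightly smaller constant $\varepsilon_d$.

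Once the ordering $v_1,\dots,v_n$ and, for each $v$, the list $I(v)$ of ${\cal O}(n^{1-\varepsilon_d})$ interval endpoints representing $N_G[v]$ are in hand, the verification phase is straightforward. For a fixed $u$, I form the union $\bigcup_{v \in N_G[u]} I(v)$ of the interval sets over the neighbours of $u$: this is a collection of ${\cal O}(\deg_G(u)\cdot n^{1-\varepsilon_d})$ intervals on the line $\{1,\dots,n\}$, and I can test whether their union equals the whole line by a standard sort-and-sweep over their endpoints in time $\tilde{\cal O}(\deg_G(u)\cdot n^{1-\varepsilon_d})$. Summing over all $u$ gives $\tilde{\cal O}(m\cdot n^{1-\varepsilon_d})$ total time for the verification, matching the claimed bound. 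If some $u$ fails the test, the graph has diameter strictly greater than two (there is a vertex not dominated by $N_G[u]$, hence at distance $\geq 3$ from $u$, unless $u$ is already isolated in which case the diameter is infinite); if every $u$ passes, then either the diameter is at most two, or $G$ is disconnected — the latter can be ruled out with one BFS/DFS in ${\cal O}(m+n)$ time. We also need to separately handle the trivial cases (diameter $0$ or $1$), which a single BFS or a check that $G$ is complete settles in linear time.

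The step I expect to be the genuine difficulty is the efficient construction of the low-stabbing-number path: turning the existential Chazelle--Welzl bound into an algorithm running in $\tilde{\cal O}(mn^{1-\varepsilon_d})$ without blowing up either the running time (by materializing all $n$ neighbourhoods, each of size up to $n$) or the stabbing number. I anticipate this is where $\varepsilon$-nets, region decomposition, and other partition techniques — the tools advertised in the abstract — come into play: one builds the path incrementally by repeatedly selecting a small $\varepsilon$-net-guided "test set" of hyperedges, using a multiplicative-weights / reweighting argument à la Welzl to find a short segment of the order that few test hyperedges cross, and recursing, so that the per-round cost is kept subquadratic. Verifying that the accumulated stabbing number stays $n^{1-\varepsilon_d}$ under this approximate, time-bounded construction is the crux; everything downstream is routine.
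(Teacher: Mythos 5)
Your reduction and verification phase are exactly right and match the paper's Lemma~\ref{lem:reduction}: with a spanning path of stabbing number $\tilde{\cal O}(n^{1-\varepsilon_d})$ for ${\cal N}_1(G)$ in hand, representing each $N_G[v]$ by interval endpoints and doing a sort-and-sweep per vertex $u$ gives $\tilde{\cal O}(mn^{1-\varepsilon_d})$ total, and the connectivity/trivial cases are handled as you say. The problem is that the statement you yourself identify as ``the genuine difficulty'' --- actually computing such a path in truly subquadratic deterministic time --- is never proved; you only list candidate techniques ($\varepsilon$-net-guided test sets, a Welzl-style multiplicative-weights/reweighting scheme, implicit representations) and explicitly defer the crux (``verifying that the accumulated stabbing number stays $n^{1-\varepsilon_d}$ \ldots is the crux''). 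This is the entire content of the paper's Theorem~\ref{thm:stabbingNb}, so as it stands the proposal has a gap at its core. Worse, the route you gesture at is precisely the one the paper argues does \emph{not} fit the budget: the LP-relaxation/rounding approximation algorithms run in superquadratic time (and are randomized), and the iterative reweighting construction underlying Chazelle--Welzl requires, in each of roughly $n$ rounds, scanning essentially all hyperedges to find a low-crossing pair, which is not obviously subquadratic and for which you give no analysis. (A minor point: for primal VC-dimension $d$ the existential bound is $\tilde{\cal O}(n^{1-1/2^{d+1}})$ via the dual VC-dimension, not ${\cal O}(n^{1-1/d})$.)

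The paper's resolution is much simpler than a time-bounded reweighting scheme, and it is worth seeing what you missed. Arbitrarily partition $V$ into $p = {\cal O}(n^{1-\eta})$ groups $X_1,\dots,X_p$ of size ${\cal O}(n^{\eta})$. By hereditarity of VC-dimension (Lemma~\ref{lem:vc-sub}) and Sauer--Shelah--Perles, each induced subhypergraph ${\cal H}[X_i]$ has only ${\cal O}(n^{\eta d})$ distinct hyperedges, hence size ${\cal O}(n^{\eta(d+1)})$, so the \emph{polynomial-time} deterministic Chazelle--Welzl algorithm can be run on each piece in time ${\cal O}(n^{c\eta(d+1)})$, producing a path of stabbing number $\tilde{\cal O}(n^{\eta(1-1/2^{d+1})})$. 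Concatenating the $p$ paths gives a spanning path of ${\cal H}$ with stabbing number $\tilde{\cal O}\left(p\cdot n^{\eta(1-1/2^{d+1})}\right) = \tilde{\cal O}\left(n^{1-\eta/2^{d+1}}\right)$, and choosing $\eta$ to balance $2-\eta/2^{d+1} = 1+\eta[c(d+1)-1]$ yields the constant $\varepsilon_d$ and total time $\tilde{\cal O}(m+n^{2-\varepsilon_d})$. No reweighting, sampling, or implicit hypergraph machinery is needed for Theorem~\ref{thm:diamTwo}; the acceptance of a slightly worse (but still strongly sublinear) stabbing number is what buys the running time.
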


We stress that in contrast to Theorem~\ref{thm:diamTwo}, under the Strong Exponential-Time we cannot decide whether a {\em general} graph has diameter at most two in truly subquadratic time~\cite{RoV13}. 

\smallskip
On our way to prove Theorem~\ref{thm:diamTwo} our main difficulty was to show how to compute for a hypergraph ${\cal H}$ a spanning path of low stabbing number.
Computing a spanning path of minimum stabbing number is NP-hard~\cite{BGRS04}.
However, there exist approximation algorithms for this problem that run in polynomial time~\cite{BGRS04,Har09}.
Their approximation ratio is logarithmic, that is fine for our applications.
Unfortunately, the fastest known algorithms require us to solve a linear program.
So far, the best known algorithms for this intermediate problem run in superquadratic time~\cite{CLS19}.
We show how to decrease the running-time of this part, at the price of a slightly increased stabbing number. For that, we carefully apply the deterministic algorithm resulting from~\cite{ChW89} to some arbitrary partition of ${\cal H}$ in subhypergraphs of sublinear size. 
This nice trick might be of independent interest. We thus state the following theorem where the size of a hypergraph is defined as the sum of its hyperedge cardinalities.

\begin{restatable}{theorem}{stabbingNb}
\label{thm:stabbingNb}
For every $d > 0$, there exists a constant $\varepsilon_d \in (0;1)$ such that in $\tilde{\cal O}(m + n^{2-\varepsilon_d})$ deterministic time, for every $n$-vertex hypergraph ${\cal H}$ of VC-dimension at most $d$ and size $m$, we can compute a spanning path of stabbing number $\tilde{\cal O}(n^{1-\varepsilon_d})$.

Moreover, $\varepsilon_d = \frac 1 {2^{d+1}[ c(d+1) - 1 ] + 1}$ for some universal constant $c > 2$.
\end{restatable}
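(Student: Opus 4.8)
The plan is to invoke, as a black box, the deterministic construction of spanning paths with low stabbing number implied by Chazelle and Welzl~\cite{ChW89}: on any $N$-vertex hypergraph of VC-dimension at most $d$ it outputs a spanning path of stabbing number $\tilde{\cal O}(N^{1-1/2^{d+1}})$ in time $\tilde{\cal O}(N^{c(d+1)})$, for the universal constant $c>2$ of the statement. Since this running time is superlinear we never run it on ${\cal H}$ itself; instead we first cut ${\cal H}$ into many pieces of sublinear size. Concretely, fix a parameter $s$ (to be chosen at the end) and partition $V({\cal H})$ arbitrarily into $\lceil n/s\rceil$ blocks $V_1,V_2,\dots$ of size at most $s$. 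For each block $V_i$ let ${\cal H}_i=\{\,E\cap V_i : E\in{\cal H}\,\}$ be the induced subhypergraph. Since restricting to a subset cannot increase the VC-dimension, each ${\cal H}_i$ still has VC-dimension at most $d$, so by the Sauer-Shelah-Perles Lemma~\cite{Sau72,She72} it has only ${\cal O}(s^d)$ distinct hyperedges, hence size ${\cal O}(s^{d+1})$, \emph{independently of $m$}. Moreover the whole family $\{{\cal H}_i\}_i$, with duplicates removed inside each block, can be built in $\tilde{\cal O}(m+n)$ time: scan each hyperedge once and dispatch its elements to the blocks containing them (total work $\sum_E|E|=m$), then radix-sort the restrictions within each block to deduplicate.

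Next I would run the deterministic Chazelle-Welzl algorithm on each ${\cal H}_i$ to obtain a spanning path $\pi_i$ of $V_i$ of stabbing number $\tilde{\cal O}(s^{1-1/2^{d+1}})$; since $|V_i|\le s$ and ${\cal H}_i$ has size ${\cal O}(s^{d+1})$ after the deduplication above, this costs $\tilde{\cal O}(s^{c(d+1)})$ per block and $\tilde{\cal O}\big((n/s)\,s^{c(d+1)}\big)=\tilde{\cal O}\big(n\,s^{c(d+1)-1}\big)$ overall. The output is the spanning path $\pi$ of $V({\cal H})$ obtained by concatenating $\pi_1,\pi_2,\dots$ in this order. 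To bound its stabbing number with respect to ${\cal H}$, fix a hyperedge $E$: inside the consecutive segment of $\pi$ occupied by $V_i$, the trace of $E$ is exactly the trace of $E\cap V_i$ on $\pi_i$, hence a union of at most $\tilde{\cal O}(s^{1-1/2^{d+1}})$ maximal intervals, and merging across block boundaries can only decrease this count; summing over the $\lceil n/s\rceil$ blocks shows $E$ is a union of at most $\tilde{\cal O}\big((n/s)\,s^{1-1/2^{d+1}}\big)=\tilde{\cal O}\big(n\,s^{-1/2^{d+1}}\big)$ maximal intervals.

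It remains to balance the two estimates. Writing $s=n^{\gamma}$, the running time is $\tilde{\cal O}\big(m+n^{1+\gamma(c(d+1)-1)}\big)$ and the stabbing number is $\tilde{\cal O}\big(n^{1-\gamma/2^{d+1}}\big)$. Taking $\gamma=2^{d+1}\varepsilon_d$ turns the stabbing number into $\tilde{\cal O}(n^{1-\varepsilon_d})$, and substituting it into the running-time exponent gives $1+2^{d+1}\varepsilon_d(c(d+1)-1)=2-\varepsilon_d$ precisely when $\varepsilon_d=\frac{1}{2^{d+1}[c(d+1)-1]+1}$; one checks that this $\varepsilon_d$ lies in $(0;1)$ and that the corresponding $\gamma$ lies in $(0;1)$ as well (so $1\le s\le n$ after rounding, with small $n$ handled directly), yielding the announced bounds $\tilde{\cal O}(m+n^{2-\varepsilon_d})$ and $\tilde{\cal O}(n^{1-\varepsilon_d})$.

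The conceptual heart, and the main thing to get right, is this reduction to sublinear blocks together with the Sauer-Shelah size bound that makes the per-block running time of Chazelle-Welzl independent of $m$; once that is in place, the additivity of stabbing numbers across concatenated blocks and the optimization of $s$ are routine (the $\pm1$ slack in the definition of stabbing number is harmless and absorbed into $\tilde{\cal O}$). The one external ingredient I would want to pin down carefully is the precise guarantee of the ``deterministic algorithm resulting from~\cite{ChW89}'' for spanning \emph{paths} (not trees) of abstract VC-dimension-$d$ hypergraphs: the exponents $1-1/2^{d+1}$ in its stabbing number and $c(d+1)$ in its running time are dictated entirely by that black box, and the whole argument above is designed around them.
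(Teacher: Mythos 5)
Your proposal is correct and follows essentially the same route as the paper's proof: an arbitrary partition into blocks of sublinear size $n^{\gamma}$, the Sauer--Shelah--Perles bound to make each block's induced subhypergraph have size independent of $m$, the deterministic Chazelle--Welzl algorithm run per block, concatenation of the resulting paths, and the same balancing of exponents yielding $\varepsilon_d = \frac{1}{2^{d+1}[c(d+1)-1]+1}$. The black-box guarantee you flag at the end is exactly what the paper supplies via its Lemma on the deterministic polynomial-time construction of spanning paths of stabbing number ${\cal O}(n^{1-1/2^{d+1}}\log n)$ derived from Theorem~4.3 of Chazelle and Welzl together with the dual VC-dimension bound.
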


\paragraph{From VC-dimension to {\em distance} VC-dimension.}
In order to go beyond Theorem~\ref{thm:diamTwo}, we need to consider a stronger notion of VC-dimension for graphs.
The {\em distance VC-dimension}\footnote{Our definition of distance VC-dimension is slightly weaker than the one proposed in~\cite{BoT15}.} of $G$ is equal to the VC-dimension of its {\em ball hypergraph}: of which the hyperedges are all possible balls in $G$.
Note that a bounded distance VC-dimension implies a bounded VC-dimension, but the converse a priori does not hold.
Nevertheless, and perhaps surprisingly, there are still many classes of graphs with a constant distance VC-dimension.
These classes include, among others: interval graphs, planar graphs~\cite{CEV07} and, more generally, any proper minor-closed graph family(from Remark~3 in~\cite{CEV07}), as well as graphs of bounded rank-width~\cite{BoT15}.

\begin{restatable}{theorem}{boundedDiam}
\label{thm:boundedDiam}
There exists a Monte Carlo algorithm such that, for every positive integers $d$ and $k$, we can decide whether a graph of distance VC-dimension at most $d$ has diameter at most $k$. The running time is in $\tilde{\cal O}(k \cdot mn^{1-\varepsilon_d})$, where $\varepsilon_d \in (0;1)$ only depends on $d$.
\end{restatable}

Eppstein proved in~\cite{Epp00} that for any {\em constant} $k$, we can decide in linear time whether the diameter of a planar graph is at most $k$.
Our result can be seen as a generalization of his to any graph class of constant distance VC-dimension -- but at the price of a superlinear running-time.
Furthermore, our techniques also apply to superconstant diameters, say polylogarithmic in $n$, or even polynomial in $n$ provided the exponent is in $o(\varepsilon_d)$.

\smallskip
Our main technical contribution in this part is the efficient computation of spanning paths with strongly sublinear stabbing number for some {\em dense} hypergraphs of constant VC-dimension. 
More precisely, the {\em $\ell$-neighbourhood hypergraph} of $G$ has for hyperedges the balls of radius $\ell$ in $G$.  
For instance, the $1$-neighbourhood hypergraph of $G$ is exactly its closed neighbourhood hypergraph.
In order to prove Theorem~\ref{thm:boundedDiam}, we reduce the problem of deciding whether a graph has diameter at most $k$ to the computation of a spanning path with low stabbing number for its $(k-1)$-neighbourhood hypergraph.
In this sense, the proofs of Theorems~\ref{thm:diamTwo} and~\ref{thm:boundedDiam} are very similar.
However, an additional difficulty here is that we cannot have direct access to this $(k-1)$-neighbourhood hypergraph.
Indeed, in the worst case all hyperedges of this hypergraph may have a cardinality in $\Omega(n)$, and then storing the hypergraph itself would already require quadratic space.

We overcome this issue by computing an {\em $\varepsilon$-net}~\cite{HaW87,VaC15} in order to partition the vertices of the graph in a small number of groups, with every two vertices in the same group having almost the same ball of radius $k-1$.
By selecting only one vertex per group, we so reduce the number of hyperedges ({\it i.e.}, balls of radius $k-1$) to be considered. 
Finally, once a spanning path was computed for this smaller hypergraph, for every unselected vertex we compute the symmetric difference between its ball of radius $k-1$ and the one of the unique vertex taken in its group.
Our solution in order to do that efficiently is to first compute a spanning path with low stabbing number for the {\em $(k-2)$-neighbourhood hypergraph}.
This is where the dependency on $k$ occurs, as overall we will need to compute a spanning path for $k-1$ consecutive hypergraphs.
Our algorithm is randomized and succeeds with high probability. The use of randomization comes from the $\varepsilon$-net construction. Although deterministic algorithms do exist for that~\cite{BCM99}, it is not clear whether they can be used as efficiently as the simple sampling technique of the randomized algorithm. We leave open the question of finding a deterministic variant of Theorem~\ref{thm:boundedDiam}.

\medskip
We note that this above technique can be applied under slightly weaker hypothesis than the one we state in Theorem~\ref{thm:boundedDiam}.
For instance, Ne\v{s}et\v{r}il and Ossona de Mendez proved that for all {\em nowhere dense} graph classes (i.e., a broad generalization of proper minor-closed graph classes and bounded-degree graphs), for any graph in the class and for any constant $k$, the VC-dimension of the $k$-neighbourhood hypergraph is constantly upper-bounded~\cite{NeO16}.
It allows us to derive the following weaker version of our Theorem~\ref{thm:boundedDiam}:

\begin{restatable}{theorem}{nowhereDense}
\label{thm:nowhereDense}
Let ${\cal G}$ be a class of nowhere dense graphs.
There exists a Monte Carlo algorithm such that, for every constant $k = {\cal O}(1)$, for any graph in ${\cal G}$ we can decide whether its diameter is at most $k$ in time $\tilde{\cal O}(mn^{1-\varepsilon_{\cal G}(k)})$, for some constant $\varepsilon_{\cal G}(k) \in (0;1)$ that only depends on $k$. 
\end{restatable}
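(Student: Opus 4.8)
The plan is to reduce Theorem~\ref{thm:nowhereDense} to the machinery already developed for Theorem~\ref{thm:boundedDiam}, using as the only new ingredient the Ne\v{s}et\v{r}il--Ossona de Mendez result that in a nowhere dense class ${\cal G}$, for every fixed $\ell$ the $\ell$-neighbourhood hypergraph of any $G \in {\cal G}$ has VC-dimension at most some constant $d_{\cal G}(\ell)$. The first step is to observe that the proof of Theorem~\ref{thm:boundedDiam} never really used a \emph{uniform} bound on the distance VC-dimension over all radii: it only needs, for each of the $O(k)$ hypergraphs it touches (the $j$-neighbourhood hypergraphs for $0 \le j \le k-1$), a constant bound on the VC-dimension of \emph{that particular} hypergraph, together with Theorem~\ref{thm:stabbingNb} to turn such a bound into an efficiently computable spanning path of strongly sublinear stabbing number. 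So I would first restate the key lemma behind Theorem~\ref{thm:boundedDiam} in the slightly more general form: if $G$ is such that the $j$-neighbourhood hypergraph has VC-dimension $\le d_j$ for all $j \le k-1$, then one can decide ``$\mathrm{diam}(G)\le k$'' by a Monte Carlo algorithm running in time $\tilde{\cal O}\big(k \cdot m \, n^{1 - \varepsilon}\big)$ where $\varepsilon = \min_{j} \varepsilon_{d_j}$ with $\varepsilon_{d}$ the constant from Theorem~\ref{thm:stabbingNb}.

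The second step is to plug in the nowhere dense hypothesis. For a fixed constant $k$, set $d := \max_{0 \le j \le k-1} d_{\cal G}(j)$, which is a constant depending only on ${\cal G}$ and $k$ because there are only finitely many values of $j$ to take the maximum over. Then every one of the relevant $j$-neighbourhood hypergraphs has VC-dimension at most $d$, so the generalized lemma applies with $\varepsilon_{\cal G}(k) := \varepsilon_d \in (0;1)$ as given by the closed form in Theorem~\ref{thm:stabbingNb}. Since $k = O(1)$, the factor $k$ in the running time is absorbed into the constant, giving the stated bound $\tilde{\cal O}\big(m \, n^{1 - \varepsilon_{\cal G}(k)}\big)$. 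One subtlety to address explicitly: the $\varepsilon$-net subroutine inside the reduction needs, at each radius $j$, access to a range space whose ranges are the balls of radius $j$; here we only need to \emph{sample} vertices and then compute the relevant balls by BFS, so no quadratic-space storage of the hypergraph is incurred, exactly as in the proof of Theorem~\ref{thm:boundedDiam}. Another point worth a sentence: nowhere dense classes are sparse, i.e.\ $m = n^{1+o(1)}$, so $m \, n^{1-\varepsilon_{\cal G}(k)}$ is genuinely truly subquadratic; but this sparsity is not actually needed for the statement, which is phrased in terms of $m$ and $n$.

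The main obstacle, and the only place where real care is required, is verifying that the proof of Theorem~\ref{thm:boundedDiam} is indeed \emph{parametric in the per-radius VC-dimension bounds} rather than secretly using a single distance VC-dimension parameter in some more global way --- for instance, if the $\varepsilon$-net sizes, the recursion that passes from the $j$-neighbourhood hypergraph to the $(j-1)$-neighbourhood hypergraph, or the symmetric-difference bookkeeping relied on a bound valid \emph{simultaneously} at all radii (including radii larger than $k-1$). I expect this to go through: each recursive layer manipulates one fixed radius at a time, the $\varepsilon$-net at layer $j$ only sees balls of radius $j$, and the spanning-path computation at layer $j$ only needs the VC-dimension of the $j$-neighbourhood hypergraph. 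Once this is checked, the theorem follows immediately; the rest is cosmetic, mainly fixing notation for $\varepsilon_{\cal G}(k)$ and noting the harmless dependence of constants on $k$. I would also remark that, unlike Theorem~\ref{thm:boundedDiam}, here we genuinely cannot remove the restriction that $k$ be constant, since the bound $d_{\cal G}(\ell)$ from~\cite{NeO16} need not stay bounded as $\ell \to \infty$ for a general nowhere dense class --- this is precisely why Theorem~\ref{thm:nowhereDense} is stated as the weaker version.
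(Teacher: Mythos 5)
Your proposal is correct and follows essentially the same route as the paper: the paper likewise observes that the proof of Theorem~\ref{thm:boundedDiam} only needs a per-radius bound on the VC-dimension of the $i$-neighbourhood hypergraphs for $1 \leq i \leq k-1$, invokes the result of Ne\v{s}et\v{r}il and Ossona de Mendez~\cite{NeO16} for nowhere dense classes, and sets $d := \max_{1 \leq i \leq k-1} d_i$ exactly as you do. The subtlety you flag (that the $\varepsilon$-net/partition lemma and the spanning-path computations at layer $i$ only involve balls of radius $i$, so no global distance VC-dimension bound is used) is precisely the ``closer look'' the paper appeals to, so nothing further is missing.
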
 

Let us mention that under SETH, Theorem~\ref{thm:nowhereDense} is the best result that we can hope for nowhere dense graph classes.
Indeed, bounded-degree graphs are nowhere dense and, under SETH, we cannot compute their diameter in truly subquadratic time even if it is in $\omega(\log{n})$~\cite{EvD16}.

\medskip
We conjecture that on every graph family of constant distance VC-dimension, we can compute the diameter in truly subquadratic time.
Our next main result shows the conjecture to be true for any monotone graph family with strongly sublinear balanced separators, {\it a.k.a} the graphs of {\em polynomial expansion}~\cite{DvN16}.

\begin{restatable}{theorem}{subLin}
\label{thm:subLin}
Let ${\cal G}$ be a monotone graph class with strongly sublinear balanced separators.
Then there exists a Monte Carlo algorithm such that, for every $d > 0$, we can compute the diameter of any graph in ${\cal G}$ of distance VC-dimension at most $d$ in time $\tilde{\cal O}(n^{2-\varepsilon_{\cal G}(d)})$, for some constant $\varepsilon_{\cal G}(d) \in (0;1)$ that only depends on $d$.
\end{restatable}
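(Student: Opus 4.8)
The plan is to follow the high-level strategy of Cabello's planar diameter algorithm~\cite{Cab18} -- decompose the graph with an $r$-division and route long-range distances through the small piece boundaries -- but to replace his additively weighted Voronoi diagrams by the interval machinery of Theorems~\ref{thm:stabbingNb} and~\ref{thm:boundedDiam}. The point is that this replacement carries no dependency on the diameter, which is what keeps the running time truly subquadratic even when the diameter is $\Omega(n)$. Recall that a monotone class with strongly sublinear balanced separators has polynomial expansion~\cite{DvN16}, hence ${\cal O}(n)$ edges and, for every parameter $r$, an $r$-division computable in near-linear time: ${\cal O}(n/r)$ edge-covering pieces, each on ${\cal O}(r)$ vertices and with a boundary of size ${\cal O}(r^{1-\sigma})$ for some constant $\sigma>0$ depending only on ${\cal G}$. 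I would fix $r=n^{\theta}$ for a small constant $\theta$ optimized at the end, let $B$ be the union of all piece boundaries (so $|B|$ is strongly sublinear in $n$), and start by running a BFS from every vertex of $B$; since the class is sparse this costs $\tilde{\cal O}(|B|\cdot n)$, which is truly subquadratic, and it resolves at once every eccentricity involving a boundary vertex. Next, running a Dijkstra from each vertex of every piece $P$ in the \emph{completion} of $P$ -- the graph $G[P]$ augmented with a clique on $\partial P$ weighted by the $d_G$-distances, which preserves all distances between vertices of $P$ -- resolves every pair contained in a common piece and supplies, for every $u$ in a piece $P_i$, its \emph{profile} $(d_G(u,b))_{b\in\partial P_i}$; the total cost of this part is $\tilde{\cal O}(n\cdot\mathrm{poly}(r))$. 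What remains is to compute, for every vertex $u$ lying in a piece $P_i$, the \emph{cross-eccentricity} $F(u):=\max_{v\notin P_i}d_G(u,v)$, and then output $\max_u$ (resp.\ $\min_u$) of the three contributions.

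The key reduction is that every shortest $u$--$v$ path with $u\in P_i$, $v\notin P_i$ crosses $\partial P_i$, so $d_G(u,v)=\min_{b\in\partial P_i}\bigl(x_b+d_G(b,v)\bigr)$ where $x_b:=d_G(u,b)$ -- all these distances being already known. Consequently $F(u)\ge t$ if and only if $V\setminus P_i$ is \emph{not} covered by the union of the ${\cal O}(r^{1-\sigma})$ balls $\bigl\{\mathrm{Ball}_G(b,\,t-1-x_b):b\in\partial P_i\bigr\}$, a property that is monotone in $t$. So if I have in hand a single spanning path $\pi$ of $V$ whose stabbing number with respect to \emph{every} ball of $G$ is strongly sublinear, say $\tilde{\cal O}(n^{1-\varepsilon_d})$, together with the ability to list the ${\cal O}(n^{1-\varepsilon_d})$ intervals of any prescribed ball in time proportional to the output, then each coverage test costs $\tilde{\cal O}(r^{1-\sigma}\cdot n^{1-\varepsilon_d})$, a binary search on $t$ computes one $F(u)$ in the same time up to a logarithmic factor, and processing all $n$ vertices costs $\tilde{\cal O}(n^{2-\varepsilon_d}\,r^{1-\sigma})$ -- truly subquadratic once $r$ is a small enough power of $n$. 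This is exactly the role played by Cabello's Voronoi diagrams, but now nothing depends on $t$ ranging up to the diameter.

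The main obstacle -- and the only place where ideas beyond Theorem~\ref{thm:stabbingNb} are needed -- is to compute that spanning path, because the hypergraph of all balls of $G$ can be dense: balls can have size $\Omega(n)$ and radii can range up to the diameter, so its total size is superquadratic and Theorem~\ref{thm:stabbingNb} cannot be applied to it directly. Here I would reuse, in a stronger form, the $\varepsilon$-net idea behind Theorem~\ref{thm:boundedDiam}. Since the family of all balls of $G$ has VC-dimension ${\cal O}(d)$ -- and so does the family of pairwise symmetric differences of balls -- a random sample $S$ of $\tilde{\cal O}(d/\varepsilon)$ vertices is, with high probability, an $\varepsilon$-net for symmetric differences of balls; clustering balls by their trace on $S$ then yields only $(1/\varepsilon)^{{\cal O}(d)}$ clusters, each consisting of balls that pairwise differ in fewer than $\varepsilon n$ vertices. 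Choosing one representative ball per cluster and setting $\varepsilon=n^{-\varepsilon_d}$, one obtains $n^{{\cal O}(d\varepsilon_d)}$ representative balls of total size subquadratic (since $d\varepsilon_d\ll 1$); feeding these to Theorem~\ref{thm:stabbingNb} produces, in subquadratic time, a spanning path $\pi$ whose stabbing number with respect to the representatives is $\tilde{\cal O}(n^{1-\varepsilon_d})$, and since any actual ball equals a representative up to a symmetric difference of size $<\varepsilon n=n^{1-\varepsilon_d}$, its stabbing number along $\pi$ is still $\tilde{\cal O}(n^{1-\varepsilon_d})$. Crucially this treats all radii simultaneously, so -- unlike in Theorem~\ref{thm:boundedDiam} -- there is no cascade through $k$ consecutive neighbourhood hypergraphs, and hence no dependency on the diameter. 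The remaining technical work is to turn ``equals a representative up to a small symmetric difference'' into an output-sensitive procedure that lists the intervals of a queried ball $\mathrm{Ball}_G(b,\cdot)$ in time $\tilde{\cal O}(n^{1-\varepsilon_d})$: since for a fixed $b\in B$ the traces on $S$ (hence the representatives) change at only ${\cal O}(|S|)$ values of the radius, one can precompute, per radius-range, a compact description of the relevant symmetric differences along $\pi$ -- this is the delicate step, and it is also the source of the randomization, in line with the remark following Theorem~\ref{thm:boundedDiam}.

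Finally I would balance $r=n^{\theta}$ and $\varepsilon=n^{-\varepsilon_d}$ against the four costs incurred -- the BFS from $B$, the in-piece work $\tilde{\cal O}(n\cdot\mathrm{poly}(r))$, the spanning-path construction, and the cross-eccentricity phase $\tilde{\cal O}(n^{2-\varepsilon_d}r^{1-\sigma})$ -- which for $\theta=\Theta(\varepsilon_d/(1-\sigma))$ makes every term of the form $\tilde{\cal O}(n^{2-\varepsilon_{\cal G}(d)})$, with $\varepsilon_{\cal G}(d)>0$ depending only on $d$ (through $\varepsilon_d$) and on ${\cal G}$ (through its separator exponent $\sigma$). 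The same computation applied to $\min_u$ instead of $\max_u$ yields the radius, as claimed.
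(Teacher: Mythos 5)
Your route is genuinely different from the paper's. The paper never builds a spanning path for the whole ball hypergraph ${\cal B}(G)$: it binary-searches the diameter value $k$, builds a spanning path for ${\cal N}_{k-1}(G)$ through ${\cal O}(\log k)$ radius-doubling steps (each step combining an $r$-division, the boundary hypergraph of Lemma~\ref{lem:boundary-spanning-path}, and the $\varepsilon$-net refinement of Subclaim~\ref{subclaim:apply-eps-net}), and finishes with the coverage test of Lemma~\ref{lem:reduction}. You instead construct a single spanning path $\pi$ good for \emph{all} balls at once (via representatives of trace-classes on an $\varepsilon$-net) and then run a Cabello-style per-vertex cross-eccentricity computation with a binary search on $t$. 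Most of your pieces are sound: the $r$-division reduction of $F(u)$ to covering $V\setminus P_i$ by boundary-centered balls, the Sauer--Shelah bound on the number of representatives, and the fact that a ball differing from a representative in $\tilde{\cal O}(\varepsilon n)$ vertices still has stabbing number $\tilde{\cal O}(n^{1-\varepsilon_d})$ along $\pi$, all check out (the cross-radius version of Lemma~\ref{lem:partition-lemma} is exactly what its proof gives).

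The gap is the step you yourself flag as delicate, and it is not merely a technicality. Knowing that $\mathrm{Ball}_G(b,\rho)$ is $\varepsilon n$-close to its representative does not let you \emph{list} its intervals: you must actually produce the elements of the symmetric difference, for adaptively chosen radii $\rho=t-1-x_b$ arising during the binary searches. Your suggested fix -- precompute, per boundary vertex $b$ and per radius-range between trace changes, the relevant symmetric differences -- does not fit the parameter balance as stated: computing one such per-range snapshot costs $\Theta(n)$ (you must decide membership in the representative for all candidate vertices), so the precomputation costs about $|B|\cdot|S|\cdot n = \tilde{\cal O}(n^{2+\varepsilon_d}/r^{1-\sigma})$, which is subquadratic only if $r^{1-\sigma}\gg n^{\varepsilon_d}$, while your cross-eccentricity phase $\tilde{\cal O}(n^{2-\varepsilon_d}r^{1-\sigma})$ requires $r^{1-\sigma}\ll n^{\varepsilon_d}$ -- the two constraints are incompatible. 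Closing this needs a genuinely new ingredient: either an amortized sweep over the radii of each boundary vertex (with a telescoping bound on the total size of the difference updates and a persistent structure to answer arbitrary-radius queries), or the paper's own machinery -- the iterative partition refinement of Lemma~\ref{lem:boundary-spanning-path}, which uses Sauer--Shelah to bound the number of distinct traces per block and thereby computes all boundary-centered ball traces without paying $n$ per ball, together with the binary-search-tree enumeration of Subclaim~\ref{subclaim:apply-eps-net}. Alternatively, replacing your per-vertex binary search by a single global dichotomy on the diameter value makes the needed radii non-adaptive ($k-1-x_b$, at most ${\cal O}(r)$ per boundary vertex), but then the set of balls you must preprocess is exactly the paper's $\ell$-boundary hypergraph, and you are back to needing Lemma~\ref{lem:boundary-spanning-path} anyway. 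So the architecture is attractive (and would indeed avoid the $\log k$ cascade), but as written the key oracle is unproved and its naive implementation provably does not meet the time bound.
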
 

Let us recall that $H$-minor free graphs have a constant distance VC-dimension from Remark~3 in~\cite{CEV07} (see also \cite{BoT15}), and that they all have strongly sublinear balanced separators~\cite{AST90,KaR10,Wul11}. Therefore, as an important consequence of Theorem~\ref{thm:subLin}, we get a truly subquadratic-time algorithm for computing the diameter on all the proper minor-closed graph classes.

\smallskip
It might be tempting, in the above Theorem~\ref{thm:subLin}, to drop the assumption that the distance VC-dimension must be bounded. 
Unfortunately, this cannot be done assuming SETH.
Indeed, there is also an equivalence between the graphs of strongly sublinear treewidth and those monotone graph classes with strongly sublinear balanced separators~\cite{DvN19}; however it follows from~\cite{AVW16} that under SETH, we cannot compute the diameter in truly subquadratic time already for $n$-vertex graphs of treewidth $\omega(\log{n})$.
Conversely, not all graph classes with constant distance VC-dimension have strongly sublinear separators.  
This can be seen, {\it e.g.}, with interval graphs.

\medskip
The speed-up of Theorem~\ref{thm:subLin} follows from a faster computation of spanning paths for the neighbourhood hypergraphs.
More precisely, we explain how to compute a spanning path for the $2k$-neighbourhood hypergraph of $G$ from a spanning path of its $k$-neighbourhood hypergraph.
Note that in doing so, we only need to consider {\em logarithmically} many intermediate hypergraphs in order to compute such spanning path.
Our approach for that consists in computing a first (suboptimal) representation of the $2k$-neighbourhood of every vertex.
Then, as for Theorem~\ref{thm:boundedDiam}, we partition the vertices into a small number of groups and we select a unique vertex in each group.
The suboptimal representations are used at the end of the algorithm in order to compute, for every unselected vertex, the symmetric difference between its ball of radius $2k$ and the one of the unique vertex taken in its group. 
So the problem becomes how to compute efficiently these suboptimal representations?

\smallskip
For that, we use a rather classical divide-and-conquer approach.
Federickson~\cite{Fed87} proved that a planar graph can be edge-covered with ${\cal O}(n/r)$ subgraphs of order at most $r$ such that at most ${\cal O}(\sqrt{r})$ vertices of each subgraph can be contained in another subgraph of this decomposition.
His construction directly follows from the planar separator theorem of Lipton and Tarjan~\cite{LiT79}, and as such it can be easily adapted for any monotone graph family with sublinear balanced separators~\cite{HKRS97}\footnote{Note that Federickson proposed several refinements of his construction in~\cite{Fed87}, some of which do use the fact that the input graph is planar. We will use in our proofs an even weaker version of his result than the one presented in this introduction.}.
For illustrating our method, we now focus in this introduction on the planar case.
We can first compute, for some well-chosen $r = n^{\gamma}, \ \gamma \in (0;1)$, a decomposition as described above.
For every two vertices in a same subgraph, we can check whether they are at distance at most $2k$ by checking whether their balls of radius $k$ intersect; assuming $r$ is small enough, and we precomputed a spanning path with low stabbing number for the $k$-neighbourhood hypergraph, this phase can be implemented in order to run in truly subquadratic time.
Then for every subgraph of the decomposition, we compute a breadth-first search from each of the ${\cal O}(\sqrt{r})$ boundary vertices that are also contained in another subgraph.
Overall, there can only be ${\cal O}(n/\sqrt{r})$ such boundary vertices, and so, it takes truly subquadratic time.
Furthermore in doing so, we computed for every subgraph of the decomposition the ${\cal O}(r\sqrt{r})$ distances between the boundary vertices and all the others. 
For any vertex $v$ that is {\em not} on the boundary, we observe that a vertex in another subgraph can be at a distance $\leq 2k$ from $v$ if and only if it is at distance $\leq 2k - dist_G(v,x)$ from some vertex $x$ on the boundary (${\cal O}(\sqrt{r})$ balls to be considered).
Our strategy consists in computing a spanning path with low stabbing number for some ``boundary hypergraph'' whose hyperedges are the ${\cal O}(r\sqrt{r} \times (n/r)) = {\cal O}(n\sqrt{r})$ balls that we need to consider.
We encounter a similar problem as for Theorem~\ref{thm:boundedDiam} because storing this hypergraph may require superquadratic space.
Fortunately, we can encode this hypergraph in a much more compact way by taking advantage of \texttt{(i)} the fact that we can only have ${\cal O}(n/\sqrt{r})$ different centers for the balls, and \texttt{(ii)} that all the balls with a same center have a chain-like inclusion structure.

\medskip
Although we keep the focus on computing the diameter, we shall stress in Sec.~\ref{sec:stabbing-num} that all our techniques can also be applied to {\em radius} computation (i.e., see Remark~\ref{rk:radius}).
Our algorithms almost need no particular information about the graph structure in order to be applied.
In fact, we do not even need to compute the (distance) VC-dimension of the input graph!
From the applicative point of view, this observation (further discussed in Sec.~\ref{sec:stabbing-num}) is quite important.
Indeed, computing the VC-dimension is W[1]-hard~\cite{DEF93} and LogNP-hard~\cite{PaY96}.

\subsection{Organization of the paper}

In Sec.~\ref{sec:def} we formally introduce the concepts of (distance) VC-dimension and stabbing number, along with some of their basic properties.
Then, we explain in Sec.~\ref{sec:spanning-path} how to compute a spanning path with strongly sublinear stabbing number for a hypergraph of constant VC-dimension (Theorem~\ref{thm:stabbingNb}).
As a direct application, we give a short proof of Theorem~\ref{thm:diamTwo}.
Our techniques are generalized in Sec.~\ref{sec:diam-comput} so as to prove Theorems~\ref{thm:boundedDiam} and~\ref{thm:nowhereDense}.
Finally, our main technical result (Theorem~\ref{thm:subLin}) is proved in Sec.~\ref{sec:refinements}.
For that, we will need to recall some useful results on the graphs of {\em polynomial expansion}~\cite{DvN16}.
We discuss some possible future work in Sec.~\ref{sec:ccl}.

\section{Preliminaries}\label{sec:def}

After recalling a few basic definitions about graphs and hypergraphs (Sec.~\ref{sec:graph} and~\ref{sec:hypgraph}) we introduce our framework for computing the diameter of a graph in Sec.~\ref{sec:vc-dim} and~\ref{sec:stabbing-num}. 

\subsection{Graphs and Diameter}\label{sec:graph}
For any undefined graph terminology, see~\cite{BoM08}.
Throughout all this paper we only consider graphs that are undirected, unweighted and connected.
For every graph $G=(V,E)$, let $n := |V|$ be its order and $m := |E|$ be its size.
We denote by $N_G(v)$ and $N_G[v] := N_G(v) \cup \{v\}$ the open and closed neighbourhoods of vertex $v$, respectively.
The degree of $v$ is equal to $|N_G(v)|$ and is denoted by $deg_G(v)$ in what follows.
The length of a path is its number of edges, and the distance $dist_G(u,v)$ between $u,v \in V$  is equal to the length of a shortest $uv$-path.
For every $v \in V$ and $k \geq 0$, the $k$-neighbourhood of $v$, also known as the ball of center $v$ and radius $k$, is defined as $N_G^k[v] = \{ u \in V \mid dist_G(u,v) \leq k \}$.
For instance, $N_G^1[v]$ is exactly the closed neighbourhood of $v$.
The diameter of $G$ is equal to $diam(G) = \max_{u,v \in V} dist_G(u,v)$.

\begin{center}
	\fbox{
		\begin{minipage}{.95\linewidth}
			\begin{problem}[\textsc{Diameter}]\
				\label{prob:diam} 
					\begin{description}
					\item[Input:] A graph $G=(V,E)$.
					\item[Output:] The diameter of $G$.
				\end{description}
			\end{problem}     
		\end{minipage}
	}
\end{center}

\begin{theorem}[\cite{RoV13}]
Under the Strong Exponential-Time Hypothesis, we cannot decide whether a graph has diameter at most two in time ${\cal O}(mn^{1-\varepsilon})$, for any $\varepsilon > 0$.
\end{theorem}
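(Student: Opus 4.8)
The plan is to obtain the lower bound by a reduction from the \textsc{Orthogonal Vectors} problem, whose quadratic hardness under SETH is the classical reduction of Williams: given sets $A, B \subseteq \{0,1\}^d$ with $|A| = |B| = N$ and $d = \mathrm{polylog}(N)$, one cannot decide in time $O(N^{2-\varepsilon})$ (for any fixed $\varepsilon > 0$) whether there exist $a \in A$ and $b \in B$ with $\langle a, b \rangle = 0$. Equivalently, one may reduce directly from $k$-CNF-SAT by splitting the $N'$ variables into two halves, letting the two ``sides'' be the $2^{N'/2}$ partial assignments to each half, letting the ``coordinates'' be the clauses, and declaring a partial assignment to have a $1$ in coordinate $c$ exactly when it fails to satisfy clause $c$; the graph below is then identical and the size bookkeeping goes through with $N = 2^{N'/2}$.

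From an instance $(A, B, d)$ I would build a graph $G$ on vertex set $A \sqcup B \sqcup \{u_1, \dots, u_d\} \sqcup \{x, y\}$: one vertex per vector of $A$, one per vector of $B$, one per coordinate, and two extra vertices $x, y$. The edges are: $a \sim u_i$ whenever $a_i = 1$ (for $a \in A$); $b \sim u_i$ whenever $b_i = 1$ (for $b \in B$); $x$ is made adjacent to every vertex of $A$, to every $u_i$, and to $y$; symmetrically $y$ is made adjacent to every vertex of $B$, to every $u_i$, and to $x$. This graph is connected, has $n = 2N + d + 2 = \Theta(N)$ vertices and $m = O(Nd) = \tilde{\cal O}(N)$ edges, and is constructed in time $\tilde{\cal O}(N)$.

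The key claim, which I would verify by a routine exhaustive distance analysis, is that every pair of vertices of $G$ lies at distance at most $2$ \emph{except possibly} a pair $(a, b) \in A \times B$: indeed any two vertices of $A \cup \{u_1,\dots,u_d\}$ have $x$ as a common neighbour, any two of $B \cup \{u_1,\dots,u_d\}$ have $y$, the pair $x,y$ is an edge, and $x$ (resp. $y$) reaches every vertex of $B$ (resp. $A$) through $y$ (resp. $x$). For a pair $(a,b)$, the only possible common neighbours are the coordinate vertices $u_i$, so $a$ and $b$ are at distance $2$ iff $a_i = b_i = 1$ for some $i$, that is iff $\langle a, b\rangle \neq 0$, and otherwise $dist_G(a,b) = 3$ via the edge $xy$ (this also covers the degenerate case of an all-zero vector, which is orthogonal to everything and correctly yields a ``yes'' instance with $diam(G) = 3$). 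Hence $diam(G) \le 2$ if and only if $A, B$ contain no orthogonal pair, so an algorithm deciding ``$diam(G) \le 2$'' in time $O(m\,n^{1-\varepsilon})$ would solve \textsc{Orthogonal Vectors} in time $\tilde{\cal O}(N) \cdot N^{1-\varepsilon} = \tilde{\cal O}(N^{2-\varepsilon}) = O(N^{2-\varepsilon/2})$, contradicting SETH. The two dominating vertices $x$ and $y$ are precisely what collapses all the irrelevant pairs to distance $\le 2$, so the only mildly delicate point is that case check; there is no genuine obstacle, the statement being a standard packaging of the split-and-coordinates SETH construction.
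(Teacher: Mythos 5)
Your reduction is correct: the distance case analysis is exhaustive (the two dominating vertices $x,y$ put every pair except those in $A\times B$ at distance at most $2$, and an $A\times B$ pair is at distance $2$ precisely when the vectors are non-orthogonal, distance $3$ otherwise), and the size/time bookkeeping $n=\Theta(N)$, $m=\tilde{\cal O}(N)$ makes an ${\cal O}(mn^{1-\varepsilon})$ diameter algorithm solve Orthogonal Vectors in truly subquadratic time, contradicting SETH. The paper states this theorem as a citation of~\cite{RoV13} without proof, and your argument is essentially the original split-the-variables/OV construction underlying that result, so there is nothing further to reconcile.
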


\subsection{Hypergraphs}\label{sec:hypgraph}

More generally, a hypergraph is a pair ${\cal H} = (X,R)$ with $X$ being the set of vertices and $R \subseteq 2^X$ being the set of hyperedges.
See also~\cite{Ber73} for any undefined hypergraph terminology.
Let $n := |X|$, $m := \sum_{q \in R} |q|$ and $r := |R|$ be the order, the size and the number of hyperedges of ${\cal H}$, respectively.
For every vertex $x \in X$, let $R_x := \{ q \in R \mid x \in q \}$.
The {\em dual} of ${\cal H}$ is the hypergraph ${\cal H}^* := (R,X^*)$, where $X^* := \{ R_x \mid x \in X \}$.
In particular, ${\cal H}$ and ${\cal H}^{**}$ are isomorphic.

Several hypergraphs can be related to a graph $G$:
\begin{itemize}
\item The closed neighbourhood hypergraph, denoted by ${\cal N}_1(G)$, has vertex-set $X = V$ and hyperedge-set $R = \{ N_G[v] \mid v \in V\}$;
\item More generally, for every fixed $\ell \geq 0$, the $\ell$-neighbourhood hypergraph of $G$ is defined as ${\cal N}_{\ell}(G) = (V, \{ N_G^{\ell}[v] \mid v \in V \})$.
We stress that ${\cal N}_{\ell}(G)$ and its dual ${\cal N}_{\ell}^*(G)$ are isomorphic~\cite{BoT15}. 
\item Finally, the ball hypergraph of $G$, simply denoted by ${\cal B}(G)$, has for hyperedges the balls of all possible centers and radii in $G$.
Equivalently, ${\cal B}(G) = \bigcup_{\ell \geq 0} {\cal N}_{\ell}(G)$.
\end{itemize}

\subsection{VC-dimension}\label{sec:vc-dim}
Let ${\cal H} = (X,R)$ be a fixed hypergraph.
A subset $Y \subseteq X$ is {\em shattered} by ${\cal H}$ if, for every $Y' \subseteq Y$, there exists a hyperedge $q \in R$ such that $Y \cap q = Y'$.
Then, the Vapnik-Chervonenkis dimension of ${\cal H}$ (abbreviated in what follows to {\em VC-dimension}) is the largest cardinality of a shattered subset.
Similarly, the {\em dual VC-dimension} of ${\cal H}$ is the VC-dimension of its dual ${\cal H}^*$.
We will often use the following (easy) properties in our analysis: 

\begin{lemma}[Sauer-Shelah-Perles,~\cite{Sau72,She72}]\label{lem:sauer-lemma}
Every $n$-vertex hypergraph of VC-dimension at most $d$ has ${\cal O}(n^d)$ hyperedges.
\end{lemma}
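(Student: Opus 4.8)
The plan is to prove the Sauer--Shelah--Perles lemma by a direct combinatorial shifting argument. Recall that we must show: if $\mathcal{H} = (X, R)$ is a hypergraph on $|X| = n$ vertices with VC-dimension at most $d$, then $|R| = \mathcal{O}(n^d)$; more precisely, the standard bound is $|R| \le \sum_{i=0}^{d} \binom{n}{i}$, which is $\mathcal{O}(n^d)$ for fixed $d$. Since the number of distinct hyperedges is what we are bounding, we may identify $R$ with a set system $\mathcal{F} \subseteq 2^X$ (collapsing repeated hyperedges only decreases the count and does not change the VC-dimension).

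First I would set up the \emph{down-shift} operator. Fix an element $x \in X$. For a set system $\mathcal{F}$, define $S_x(\mathcal{F})$ by replacing each $A \in \mathcal{F}$ with $A \setminus \{x\}$ whenever $A \setminus \{x\} \notin \mathcal{F}$, and keeping $A$ otherwise. The key routine facts to verify are: (1) $|S_x(\mathcal{F})| = |\mathcal{F}|$, because the map is injective; (2) shifting does not increase the VC-dimension, i.e.\ if $Y$ is shattered by $S_x(\mathcal{F})$ then $Y$ is shattered by $\mathcal{F}$ — this is the one genuinely delicate point and I will come back to it; (3) repeatedly applying $S_x$ over all $x \in X$ terminates, because $\sum_{A \in \mathcal{F}} |A|$ strictly decreases at each nontrivial step and is bounded below by $0$. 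After exhaustively shifting, we obtain a \emph{down-closed} (monotone decreasing) family $\mathcal{F}^*$ with $|\mathcal{F}^*| = |\mathcal{F}|$ and VC-dimension at most $d$.

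The finish is then easy: a down-closed family $\mathcal{F}^*$ of VC-dimension at most $d$ can only contain sets of size at most $d$, since if some $A \in \mathcal{F}^*$ had $|A| \ge d+1$, then by down-closure every subset of $A$ lies in $\mathcal{F}^*$, so $A$ (or any $(d{+}1)$-subset of it) would be shattered, a contradiction. Hence $\mathcal{F}^* \subseteq \binom{X}{\le d}$, giving $|\mathcal{F}| = |\mathcal{F}^*| \le \sum_{i=0}^{d}\binom{n}{i} = \mathcal{O}(n^d)$.

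The main obstacle is property (2), that the shift does not increase VC-dimension. The clean way to argue it: suppose $Y \subseteq X$ is shattered by $\mathcal{F}' := S_x(\mathcal{F})$. If $x \notin Y$, then shattering of $Y$ is witnessed by traces on $X \setminus \{x\}$, and one checks the trace of $\mathcal{F}'$ on $X \setminus \{x\}$ is contained in that of $\mathcal{F}$, so $Y$ is shattered by $\mathcal{F}$ too. If $x \in Y$, the subtle case, one shows that in fact $Y$ is shattered by $\mathcal{F}$ by arguing that for each $Y' \subseteq Y$ the required witness set exists: pick $A' \in \mathcal{F}'$ with $A' \cap Y = Y'$; if $A' \in \mathcal{F}$ we are done, and otherwise $A' = A \setminus\{x\}$ for some $A \in \mathcal{F}$ with $x \in A$ and $A\setminus\{x\} \notin \mathcal{F}$ — then one locates a second set in $\mathcal{F}'$ whose trace is $Y' \cup \{x\}$ (using $x \in Y$), traces it back, and combines the two to exhibit both $Y'$ and $Y' \cup\{x\}$ as traces of genuine members of $\mathcal{F}$. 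Since in this paper we only need the asymptotic statement $\mathcal{O}(n^d)$, I would present the shifting argument at a level of detail sufficient to make (1)--(3) and the down-closed finish clear, and cite \cite{Sau72,She72} for the sharp constant.
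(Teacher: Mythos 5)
Your proposal is correct, but there is nothing in the paper to compare it against: Lemma~\ref{lem:sauer-lemma} is stated with a citation to \cite{Sau72,She72} and no proof is given, the result being used as a black box. Your shifting (down-compression) argument is one of the two standard proofs of the sharp bound $|R| \le \sum_{i=0}^{d}\binom{n}{i}$, and the three properties you isolate are exactly the right ones; the only delicate point, that $S_x$ does not increase the VC-dimension, is handled correctly, though your case $x \in Y$ can be streamlined: for $Y' \subseteq Y$ with $x \notin Y'$, take $A' \in S_x(\mathcal{F})$ with $A' \cap Y = Y' \cup \{x\}$; since $A'$ contains $x$ and survived the shift, both $A' \in \mathcal{F}$ and $A' \setminus \{x\} \in \mathcal{F}$, which realizes the traces $Y' \cup \{x\}$ and $Y'$ simultaneously, with no need to first examine a witness for $Y'$ itself. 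One should also note explicitly that a single pass of $S_x$ over all $x$ need not suffice; one iterates until no shift changes the family, at which point down-closedness follows because stability under every $S_x$ forces $A \setminus \{x\} \in \mathcal{F}^*$ for all $A \in \mathcal{F}^*$ and $x \in A$. The alternative classical route is induction on $n$ via the trace identity $|\mathcal{F}| \le |\mathcal{F}_{|X \setminus \{x\}}| + |\{A : A, A \cup \{x\} \in \mathcal{F}\}|$, which is shorter to write down but less transparent about where the extremal families live; either is a perfectly acceptable substitute for the citation, and for the paper's purposes only the weaker ${\cal O}(n^d)$ asymptotics are needed anyway.
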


\begin{lemma}[\cite{ChW89}]\label{lem:dual-vc}
Every hypergraph of VC-dimension $d$ has dual VC-dimension at most $2^{d+1}$.
\end{lemma}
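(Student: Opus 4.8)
The plan is to bound the dual VC-dimension $d^*$ of ${\cal H}=(X,R)$ directly, by extracting from a set of hyperedges shattered in ${\cal H}^*$ an (appropriately smaller) set of vertices shattered in ${\cal H}$ itself; the exponential loss will appear naturally as a $\log_2$ in this extraction.

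First I would unfold the definition of shattering in the dual. If the dual VC-dimension is $d^*$, there are hyperedges $q_1,\dots,q_{d^*}\in R$ such that the family $\{q_1,\dots,q_{d^*}\}$ is shattered by ${\cal H}^*=(R,\{R_x\mid x\in X\})$. Since $\{q_1,\dots,q_{d^*}\}\cap R_x=\{q_i\mid x\in q_i\}$, this means precisely: for every subset $A\subseteq\{1,\dots,d^*\}$ there is a vertex $x_A\in X$ with $x_A\in q_i\iff i\in A$. In particular the $2^{d^*}$ vertices $x_A$ are pairwise distinct: if $x_A=x_B$ and $i\in A\setminus B$, then $x_A\in q_i$ yet $x_B\notin q_i$, a contradiction. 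Informally, these vertices realize all $2^{d^*}$ cells of the ``Venn diagram'' of $q_1,\dots,q_{d^*}$.

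Next I would carve a shattered set out of these cell vertices. Put $s:=\lfloor\log_2 d^*\rfloor$, so $2^s\le d^*$, and fix an injection $T\mapsto j_T$ from the subsets of $\{1,\dots,s\}$ into $\{1,\dots,d^*\}$. For each $i\in\{1,\dots,s\}$ set $A_i:=\{\,j_T\mid i\in T\,\}\subseteq\{1,\dots,d^*\}$; the $A_i$ are pairwise distinct (for $i\neq i'$ the index $j_{\{i\}}$ lies in $A_i$ but not in $A_{i'}$), hence so are the vertices $x_{A_1},\dots,x_{A_s}$. The key identity is then $x_{A_i}\in q_{j_T}\iff j_T\in A_i\iff i\in T$, so that $q_{j_T}\cap\{x_{A_1},\dots,x_{A_s}\}=\{x_{A_i}\mid i\in T\}$. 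Letting $T$ range over all subsets of $\{1,\dots,s\}$, every subset of $\{x_{A_1},\dots,x_{A_s}\}$ is obtained as the trace of some hyperedge of ${\cal H}$, so $\{x_{A_1},\dots,x_{A_s}\}$ is shattered by ${\cal H}$. This yields $d\ge s=\lfloor\log_2 d^*\rfloor$, whence $\lfloor\log_2 d^*\rfloor\le d$, i.e. $d^*\le 2^{d+1}-1<2^{d+1}$. The degenerate cases $d^*\le 1$ are immediate since $2^{d+1}\ge 2$, and when $s=0$ there is nothing to build.

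The step I expect to be the crux is precisely this extraction: one must choose the ``coordinate'' vertices $x_{A_i}$ so that the original hyperedges $q_{j_T}$ act on them as \emph{all} $2^s$ indicator functions on $s$ points, which is exactly why only about $\log_2 d^*$ of the $2^{d^*}$ cell vertices can survive. Everything else --- distinctness of the $x_A$ and the $A_i$, and the final arithmetic converting $\lfloor\log_2 d^*\rfloor\le d$ into $d^*\le 2^{d+1}$ --- is routine bookkeeping.
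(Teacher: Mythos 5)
Your proof is correct. The paper gives no proof of Lemma~\ref{lem:dual-vc} --- it is quoted from~\cite{ChW89} --- and your argument is exactly the classical one behind that citation (indexing dual-shattered hyperedges by the subsets of an $s$-element set with $s=\lfloor\log_2 d^*\rfloor$ and reading off a primally shattered set of $s$ ``cell'' vertices), which in fact yields the slightly sharper bound $d^*\le 2^{d+1}-1$.
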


\begin{lemma}[\cite{Kle04}]\label{lem:vc-sub}
For every hypergraph ${\cal H} = (X,R)$ and $Y \subseteq X$, let $R[Y] = \{ q \cap Y \mid q \in R \}$.
Then, the VC-dimension of ${\cal H}[Y] := (Y,R[Y])$ is at most the VC-dimension of ${\cal H}$.
\end{lemma}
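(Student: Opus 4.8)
Proof plan for Lemma~\ref{lem:vc-sub} (the restriction lemma for VC-dimension).

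The plan is to argue directly from the definition of shattering, showing that any subset of $Y$ shattered by the restricted hypergraph ${\cal H}[Y]$ is already shattered by ${\cal H}$ itself; this immediately gives $\mathrm{VCdim}({\cal H}[Y]) \le \mathrm{VCdim}({\cal H})$. First I would fix a subset $Z \subseteq Y$ that is shattered by ${\cal H}[Y] = (Y, R[Y])$, where $R[Y] = \{ q \cap Y \mid q \in R\}$. By definition of shattering, for every $Z' \subseteq Z$ there is some hyperedge of ${\cal H}[Y]$ whose intersection with $Z$ equals $Z'$; since every hyperedge of ${\cal H}[Y]$ has the form $q \cap Y$ for some $q \in R$, this means there exists $q \in R$ with $(q \cap Y) \cap Z = Z'$.

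The key observation is that, because $Z \subseteq Y$, we have $(q \cap Y) \cap Z = q \cap (Y \cap Z) = q \cap Z$. Hence for every $Z' \subseteq Z$ there is a hyperedge $q \in R$ with $q \cap Z = Z'$, which is exactly the statement that $Z$ is shattered by ${\cal H} = (X,R)$. Therefore every subset of $Y$ shattered by ${\cal H}[Y]$ is shattered by ${\cal H}$, so the largest such subset for ${\cal H}[Y]$ has cardinality at most the VC-dimension of ${\cal H}$, as claimed.

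There is essentially no obstacle here: the only thing to be careful about is the set-theoretic identity $(q \cap Y) \cap Z = q \cap Z$ when $Z \subseteq Y$, which is what makes the restriction operation ``transparent'' to subsets already living inside $Y$. One could also note the mild subtlety that $R[Y]$ may have strictly fewer hyperedges than $R$ (distinct hyperedges of ${\cal H}$ can collapse to the same trace on $Y$), but this only decreases the family of available traces and so cannot increase the VC-dimension — the argument above handles it automatically since we never needed the hyperedges to be distinct.
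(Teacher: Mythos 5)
Your proof is correct: the identity $(q\cap Y)\cap Z = q\cap Z$ for $Z\subseteq Y$ is exactly the point, and it shows every set shattered by ${\cal H}[Y]$ is shattered by ${\cal H}$. The paper gives no proof of this lemma (it is stated with a citation to Kleinberg), and your argument is the standard one that the citation implicitly relies on, so there is nothing further to reconcile.
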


\paragraph{VC-dimension for graphs.}
The {\em VC-dimension} of a graph $G$ is defined as the VC-dimension of its closed neighbourhood hypergraph ${\cal N}_1(G)$.
For instance, $K_h$-minor free graphs (and so, $H$-minor free graphs for any $H$ of order at most $h$) have VC-dimension at most $h-1$~\cite{ABC95}.
Every $k$-interval graph has VC-dimension in ${\cal O}(k\log{k})$~\cite{DHV19+}.
Other classes of constant VC-dimension -- at most three -- are unit disk graphs, chordal bipartite graphs, $C_4$-free bipartite graphs, graphs of girth at least five and undirected path graphs~\cite{BLLP+15}. 

\smallskip
\noindent
The {\em distance VC-dimension} of a graph $G$ is defined as the VC-dimension of its ball hypergraph ${\cal B}(G)$. Chepoi, Estellon and Vax{\`e}s proved in~\cite{CEV07} that planar graphs have distance VC-dimension at most $4$, and remarked that more generally every $K_h$-minor free graph has distance VC-dimension at most $h-1$.
Bousquet and Thomass\'e proved in~\cite{BoT15} that graphs of bounded distance VC-dimension also generalize graphs of bounded rankwidth.
Indeed, every graph of rankwidth $k$ has distance VC-dimension at most $3 \cdot 2^{k+1} + 1$.
For purpose of illustration, we next adapt a proof from~\cite{BLLP+15} in order to show that interval graphs have distance VC-dimension at most two:

\begin{lemma}\label{lem:interval}
Every interval graph has distance VC-dimension at most $2$.
\end{lemma}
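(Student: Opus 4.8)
The plan is to fix an interval graph $G$ together with an interval representation, so that every vertex $v$ corresponds to an interval $J_v = [\ell_v, r_v]$ on the real line, and two vertices are adjacent iff their intervals intersect. I want to show that no set of three vertices can be shattered by the ball hypergraph ${\cal B}(G)$. First I would recall the classical fact that in an interval graph, for any vertex $v$ and radius $k$, the ball $N_G^k[v]$ is ``interval-convex'' in the following sense: order the vertices by, say, their left endpoints $\ell_v$; then each ball $N_G^k[v]$ is a contiguous block of vertices whose intervals sweep a single contiguous segment of the line. Concretely, one shows by induction on $k$ that the union $\bigcup_{u \in N_G^k[v]} J_u$ is an interval $[L_k, R_k]$, and that $N_G^k[v]$ consists of exactly those vertices $w$ with $J_w \cap [L_{k-1}, R_{k-1}] \neq \emptyset$. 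Hence every ball of $G$ is ``swept'' by an interval of the line, in the sense that it equals $\{\, w : J_w \cap I \neq \emptyset \,\}$ for some real interval $I$ (for $k=0$ take $I = \{$a point in $J_v\}$; more carefully one handles $k=0$ separately or notes $N^0_G[v]=\{v\}$ and checks small cases by hand).

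Next I would set up the shattering argument. Suppose for contradiction that $\{x, y, z\}$ is shattered by ${\cal B}(G)$, and without loss of generality order the three vertices so that $\ell_x \le \ell_y \le \ell_z$. The key structural claim is that a ball $N_G^k[v]$ containing both $x$ and $z$ must also contain $y$ — equivalently, the subset $\{x,z\}$ cannot be realized as $\{x,y,z\} \cap q$ for any hyperedge $q$, contradicting shattering. To prove this claim I would use the ``swept by an interval $I$'' description: if $x$ and $z$ both belong to the ball, then $J_x \cap I \neq \emptyset$ and $J_z \cap I \neq \emptyset$. I then need a short geometric lemma: among three intervals on the line, the ``middle'' one (in a suitable sense determined by the ordering of left endpoints together with the maximality of right endpoints — one may need to be slightly careful and choose $y$ to be a vertex whose interval lies between those of $x$ and $z$, using that we may assume the three intervals are pairwise incomparable under inclusion, since an included interval is a twin-like dominated vertex that can be discarded) cannot avoid an interval $I$ that meets both of the other two. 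That is, any connected segment $I$ meeting $J_x$ and $J_z$ must meet $J_y$. This is essentially the Helly property in dimension one applied to $\{J_x \cup I \cup J_z\}$, or more simply: $I$, $J_x$, $J_z$ all meet a common connected region spanning from $J_x$ to $J_z$, which contains a point of $J_y$.

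The main obstacle I anticipate is handling the boundary/degenerate cases cleanly: (a) the radius-$0$ balls, where the ``swept by an interval'' description is a single point and the betweenness lemma needs the intervals themselves rather than their reach; (b) intervals that are nested or share endpoints, which can spoil the naive ``middle interval'' choice, so I would first normalize the representation (perturb endpoints to be distinct, and delete vertices whose interval is contained in another's — these are dominated and removing them does not increase the distance VC-dimension, by Lemma~\ref{lem:vc-sub} applied appropriately, or one argues directly that such a vertex is always "between"); and (c) making precise which of $x,y,z$ plays the role of the middle vertex — after normalization so that $\ell_x < \ell_y < \ell_z$ and no containments, one gets $r_x < r_y < r_z$ as well, so $J_y$ genuinely separates $J_x$ from $J_z$, and then any segment meeting both $J_x$ and $J_z$ must cross $J_y$. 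Once these cases are dispatched, the contradiction is immediate: $\{x,z\}$ is not in the trace of ${\cal B}(G)$ on $\{x,y,z\}$, so $\{x,y,z\}$ is not shattered, and therefore the distance VC-dimension is at most $2$. Finally I would remark that the bound is tight (a path on four vertices, or a short induced path, already shatters a $2$-element set), so "at most $2$" is the right statement.
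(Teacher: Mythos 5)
Your proposal is correct in substance and follows essentially the same route as the paper's proof: both rest on the fact that, for $k\ge 1$, the ball $N_G^k[u]$ consists exactly of the vertices whose intervals meet the connected segment swept by $N_G^{k-1}[u]$, and both fall back on neighbourhood domination when one interval of the shattered triple is contained in another (the paper reaches that nested configuration by a short case analysis: a segment meeting $I(v_1)$ and $I(v_3)$ but avoiding $I(v_2)$ forces $I(v_2)\subseteq I(v_1)$, whence $N_G[v_2]\subseteq N_G[v_1]$).

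One caveat on your handling of nesting: of the two options you hedge between, only the direct one works. The containment that threatens your betweenness claim is a containment \emph{among the three shattered vertices themselves} (containments with vertices outside the triple are irrelevant once you only compare $J_x,J_y,J_z$), so the nested vertex may be one of $x,y,z$ and cannot simply be ``discarded''; moreover Lemma~\ref{lem:vc-sub} bounds the VC-dimension of an induced subhypergraph by that of the original, which is the wrong direction for a deletion-based reduction, and deleting a vertex of $G$ alters the ball hypergraph in a way that lemma does not control. The correct fix is the one you mention in passing and the one the paper uses: if $J_a\subseteq J_b$ with $a,b$ in the triple, then $N_G[a]\subseteq N_G[b]$, so every ball of radius at least one containing $a$ contains $b$, and the two-element subset consisting of $a$ and the third vertex is never realized (radius-zero balls being singletons), so the triple is not shattered. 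With that case dispatched, your normalized situation $\ell_x<\ell_y<\ell_z$, $r_x<r_y<r_z$, together with the observation that any connected segment meeting $J_x$ and $J_z$ must meet $J_y$, correctly rules out the trace $\{x,z\}$ and yields the same contradiction as the paper.
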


\begin{proof}
Let $G=(V,E)$ be an interval graph.
We fix an interval model for $G$.
For every $v \in V$, let $I(v) = [a_v,b_v]$ be the corresponding interval in the representation.
Suppose now by contradiction that there is a set $S = \{v_1,v_2,v_3\}$ that is shattered by ${\cal B}(G)$.
W.l.o.g., $a_{v_1} < a_{v_2} < a_{v_3}$.
Since $S$ is shattered, there exist some $u \in V$ and $k \geq 0$ such that $N_{G}^k[u] \cap S = \{v_1,v_3\}$.
But then, let $I_{k-1}(u) := \bigcup_{w \in N_G^{k-1}[u]}I(w)$ be the contiguous segment of all the vertices at a distance $\leq k-1$ from $u$.
Note that $I_{k-1}(u) \cap I(v_2) = \emptyset$ because we assume that $v_2 \notin N_{G}^k[u]$.
In this situation, either $I_{k-1}(u) \subseteq ]-\infty,a_{v_2}[$ or $I_{k-1}(u) \subseteq ]b_{v_2},\infty[$ where $]x,y[\ =[x,y]\setminus\{x,y\}$ denotes the open interval between $x$ and $y$. 
In fact we must have $I_{k-1}(u) \subseteq ]b_{v_2},\infty[$ because otherwise, $I_{k-1}(u) \cap I(v_3) = \emptyset$ and so, $v_3 \notin N_G^k[u]$, a contradiction.
Since $I_{k-1}(u) \cap I(v_1) \neq \emptyset$, it implies that $b_{v_1} > b_{v_2}$, and so, $I(v_2) \subseteq I(v_1)$.
As a result we have $N_{G}[v_2] \subseteq N_G[v_1]$.
But then, for any $w \in V$ and $\ell \geq 1$, we have $v_2 \in N_G^{\ell}[w] \Longrightarrow v_1 \in N_G^{\ell}[w]$.
The latter contradicts our hypothesis that $S$ is shattered.
\end{proof}

\subsection{Stabbing number and applications to {\sc Diameter}}\label{sec:stabbing-num}
A spanning tree of ${\cal H} = (X,R)$ is a tree $T$ whose node-set is exactly $X$.
The {\em stabbing number} of such spanning tree $T$ is the least $k$ such that, for every hyperedge $q \in R$, there exist at most $k$ edges $uv \in E(T)$ such that $|q \cap \{u,v\}| = 1$ (we also say that $uv$ is {\em stabbed} by $q$). Given a set $q\subseteq X$, we let $E_T(q) = \{ uv \in E(T) \mid u \in q, v \notin q \}$ of all edges stabbed by $q$. 
Finally, the stabbing number of ${\cal H}$ is the minimum stabbing number over its spanning {\em paths}. 
Indeed, as noted in~\cite{ChW89}, every spanning tree $T$ can be transformed into a spanning path of stabbing number at most twice bigger than for $T$. Therefore, there is essentially no loss of generality in restricting ourselves to spanning paths.

\begin{lemma}[\cite{ChW89}]\label{lem:stabbing-num}
Every $n$-vertex hypergraph of dual VC-dimension $d$ has stabbing number $\tilde{\cal O}(n^{1-\frac 1 d})$.
\end{lemma}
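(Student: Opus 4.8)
I want to prove that every $n$-vertex hypergraph $\mathcal{H} = (X,R)$ of dual VC-dimension $d$ admits a spanning path of stabbing number $\tilde{\mathcal{O}}(n^{1-1/d})$. The strategy is the classical Chazelle--Welzl ``halving'' argument, carried out on the dual side. The key combinatorial input is that a hypergraph of dual VC-dimension $d$ has, on any subset of $r$ hyperedges (equivalently, any subhypergraph of the dual on $r$ vertices), only $\mathcal{O}(r^d)$ distinct traces by Lemma~\ref{lem:sauer-lemma} applied to $\mathcal{H}^*$. This lets us find, via a packing/$\varepsilon$-net style averaging argument, a pair of vertices that can be merged while keeping the number of hyperedges that ``separate'' them small relative to the current number of active hyperedges.

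\textbf{Key steps.} First I would set up the iterative construction. We build the spanning path (or rather a spanning tree, then invoke the folklore tree-to-path conversion noted before the lemma, losing only a factor of $2$) by repeatedly picking an edge to add. Maintain a set of ``super-vertices'' obtained by contracting chosen edges; initially each of the $n$ vertices is its own super-vertex, and after $n-1$ steps we have a tree. At a generic step with $t$ super-vertices remaining, consider the hyperedges of $R$ restricted to these $t$ super-vertices (a hyperedge is now ``active'' if it separates at least two of them). The heart of the argument is a \emph{selection step}: among the $\binom{t}{2}$ candidate pairs of super-vertices, I claim there is a pair $\{u,v\}$ such that the number of active hyperedges separating $u$ from $v$ is $\mathcal{O}(r_{\mathrm{act}} \cdot t^{-1/d})$, where $r_{\mathrm{act}}$ is the current number of active hyperedges. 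To prove this I would use a weighting/reweighting (multiplicative weights) scheme on the hyperedges exactly as in Welzl's proof: assign each active hyperedge a weight, and show that a random/greedy pick of a small subset of super-vertices of size roughly $t^{1/d}$ is shattered-free, hence by Sauer--Shelah on the dual the fraction of hyperedges crossing a minimum-weight pair inside it is small. Summing the geometric-like series $\sum_t r \cdot t^{-1/d}$ over $t$ from $n$ down to $1$, and noting each hyperedge contributes its separation count additively to the final stabbing number, gives a total stabbing number $\tilde{\mathcal{O}}(n^{1-1/d})$ — the logarithmic factors absorbed into $\tilde{\mathcal{O}}$ come from the reweighting iterations needed to control all $r = \mathcal{O}(n^d)$ hyperedges simultaneously.

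\textbf{Main obstacle.} The delicate part is the selection step and the bookkeeping of the reweighting: one must argue that there is a single pair whose crossing weight is small for the \emph{current} weights, then double the weights of the hyperedges that do cross, and show this process terminates with the desired bound on the maximum (unweighted) crossing count per hyperedge. This is where the dual VC-dimension enters quantitatively: a set of $s$ super-vertices with $s > $ roughly $d \log(\text{something})$ that is not shattered by the active hyperedges must contain a pair separated by few of them, and Lemma~\ref{lem:dual-vc}-type bounds (or directly the hypothesis of dual VC-dimension $d$ plus Lemma~\ref{lem:sauer-lemma}) guarantee non-shattering once $s$ exceeds $d$. Balancing $s \approx t^{1/d}$ against the weight-doubling potential argument is the computation I would do carefully but not reproduce here. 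The rest — initializing, verifying that contracting never increases any hyperedge's stabbing count beyond the accumulated bound, and the final tree-to-path step — is routine given the excerpt's Lemmas~\ref{lem:sauer-lemma} and~\ref{lem:dual-vc}.
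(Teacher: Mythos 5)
The paper offers no proof of this lemma (it is quoted from Chazelle and Welzl), and your overall plan --- build a spanning tree by repeatedly merging a well-chosen pair, control the worst hyperedge by multiplicative reweighting, then convert the tree to a path at a factor $2$ --- is indeed the route of that reference. The problem is that the one step carrying all the content, the selection step, is both deferred and sketched incorrectly. What you need is: for the current weights $w$ on the hyperedges, with total weight $W$ and $t$ surviving super-vertices, there is a pair $\{u,v\}$ whose separating weight is $\tilde{\mathcal{O}}(W/t^{1/d})$. Under the hypothesis that the \emph{dual} VC-dimension is $d$, the correct mechanism samples roughly $s\approx t^{1/d}$ \emph{hyperedges} with probability proportional to $w$: Lemma~\ref{lem:sauer-lemma} applied to the dual hypergraph says these $s$ hyperedges cut the $t$ super-vertices into only $\mathcal{O}(s^d)<t$ equivalence classes, so some pair is separated by no sampled hyperedge, while a Chernoff/$\varepsilon$-net argument shows that with positive probability every pair of separating weight at least $cW\log t/s$ would have been cut by the sample; the uncut pair therefore has separating weight $\tilde{\mathcal{O}}(W/t^{1/d})$ (Haussler's packing lemma applied to the dual system gives the same conclusion without the logarithm). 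Your sketch instead samples a set of \emph{super-vertices} of size about $t^{1/d}$ and argues from its being ``not shattered by the active hyperedges''. That is the wrong object and the wrong property: whether a vertex set is shattered is governed by the \emph{primal} VC-dimension, which is not what is bounded here (Lemma~\ref{lem:dual-vc} converts primal to dual, not the reverse, and only with an exponential loss), and in any case non-shattering of a set does not yield a pair inside it separated by few hyperedges --- one needs the number of distinct traces on the vertex set to be smaller than its cardinality, which is exactly the dual shatter-function bound above.

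A smaller issue: summing $\mathcal{O}(W\,t^{-1/d})$ over $t=n,\dots,1$ bounds only the \emph{total} number of crossings, i.e.\ the average stabbing number; the maximum over hyperedges comes solely from the weight-doubling potential (a hyperedge crossing $k$ chosen edges ends with weight $2^k$, while the total weight stays below $|R|\prod_t\left(1+\mathcal{O}(t^{-1/d})\right)\leq |R|\,e^{\mathcal{O}(n^{1-1/d})}$, and $\log|R|=\mathcal{O}(\log n)$ after removing duplicate hyperedges, since the primal VC-dimension is also finite). You gesture at this, but as written the ``summing the series gives the stabbing number'' sentence conflates the two quantities. Repairing the selection step along the lines above and making the potential argument explicit would turn your outline into the Chazelle--Welzl proof.
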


Overall it follows from Lemmata~\ref{lem:dual-vc} and~\ref{lem:stabbing-num} that any $n$-vertex hypergraph of VC-dimension at most $d$ has strongly sublinear stabbing number in $\tilde{\cal O}(n^{1-\frac 1 {2^{d+1}}})$.  
We stress that the proof of Lemma~\ref{lem:stabbing-num} is constructive but that it cannot be transformed into a truly subquadratic-time algorithm.
Efficient computations of spanning paths with sublinear stabbing number -- or related data structures -- were proposed for many special cases from computational geometry~\cite{Cha12,Mat91,Wel92}.

\begin{center}
	\fbox{
		\begin{minipage}{.95\linewidth}
			\begin{problem}[\textsc{$f$-Approx Stabbing Number}]\
				\label{prob:stabbing-num} 
					\begin{description}
					\item[Input:] A hypergraph ${\cal H}=(X,R)$ of VC-dimension at most $d$.
					\item[Output:] A spanning path $P$ of stabbing number at most $\tilde{\cal O}(n^{1-\frac 1 {f(d)}})$ and, for every $q \in R$, the set $E_P(q) = \{ uv \in E(P) \mid u \in q, v \notin q \}$ of all edges stabbed by $q$. 
				\end{description}
			\end{problem}     
		\end{minipage}
	}
\end{center}

{\em For simplicity of exposition, we will assume throughout the remainder of this paper that the VC-dimension of all the hypergraphs considered is part of the input.}
However in practice, we can easily weaken this assumption as follows.
Given some ``guess'' $d$ on the VC-dimension of the input, we can modify our proposed solutions so that they either output a spanning path whose stabbing number is at most $\tilde{\cal O}(n^{1-\frac 1 {f(d)}})$, for some function $f$, or conclude that the VC-dimension of the input is larger than $d$.
By dichotomic search, we so can compute some minimum $d^*$ such that, for any $d \geq d^*$, our algorithms always output a spanning path of stabbing number $\tilde{\cal O}(n^{1-\frac 1 {f(d)}})$.
We stress that $d^*$ is at most the VC-dimension of $G$, but that it can be much smaller in practice.

\paragraph{Reduction from diameter computation.}
We now recall the following simple but beautiful approach that we use in order to solve {\sc Diameter} on graphs of constant VC-dimension.

\begin{lemma}\label{lem:reduction}
Let $G$ be a graph and $k \geq 2$.
If the hypergraph ${\cal N}_{k-1}(G)$ has VC-dimension at most $d$, and we can solve \textsc{$f$-Approx Stabbing Number} for ${\cal N}_{k-1}(G)$ in time $T(n,m)$, then we can decide whether $G$ has diameter at most $k$ in time $\tilde{\cal O}(T(n,m) + mn^{1-\frac{1}{f(d)}})$.
\end{lemma}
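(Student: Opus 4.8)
\textbf{Proof plan for Lemma~\ref{lem:reduction}.}

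The plan is to mimic the diameter-two argument sketched in the introduction, replacing $N_G[v]$ by $N_G^{k-1}[v]$ throughout. First I would invoke the hypothesis to solve \textsc{$f$-Approx Stabbing Number} for ${\cal N}_{k-1}(G)$: this produces in time $T(n,m)$ a spanning path $P$ of ${\cal N}_{k-1}(G)$ whose stabbing number is $\tilde{\cal O}(n^{1-1/f(d)})$, together with, for every vertex $v$, the stabbed-edge set $E_P\big(N_G^{k-1}[v]\big)$. Fixing the linear order on $V$ induced by $P$, the edges of $E_P\big(N_G^{k-1}[v]\big)$ are exactly the ``boundaries'' of $N_G^{k-1}[v]$ along $P$, so from them one reads off in time proportional to $|E_P(N_G^{k-1}[v])| = \tilde{\cal O}(n^{1-1/f(d)})$ the decomposition of $N_G^{k-1}[v]$ into ${\cal O}(n^{1-1/f(d)})$ maximal intervals of $P$; call this family of intervals $I(v)$. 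Summing over all $v$, computing all the $I(v)$ takes $\tilde{\cal O}(n^{2-1/f(d)})$ time, which is absorbed into the claimed bound since $n^{2-1/f(d)} \le n \cdot n^{1-1/f(d)} \le m n^{1-1/f(d)}$ (the graph being connected, $m \ge n-1$).

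Next comes the key observation turning diameter into a covering check: for a fixed vertex $u$, we have $\mathrm{ecc}_G(u) \le k$ if and only if $\bigcup_{v \in N_G[u]} N_G^{k-1}[v] = V$, because a shortest path from $u$ to any $w$ at distance $\le k$ passes through a neighbour (or $u$ itself) lying within distance $k-1$ of $w$. Hence $\mathrm{diam}(G)\le k$ iff this holds for every $u$. Now I would test, for each $u$, whether $\bigcup_{v \in N_G[u]} I(v)$ covers the whole path $P$. Since every $I(v)$ is a union of ${\cal O}(n^{1-1/f(d)})$ intervals, the multiset of interval endpoints contributed by the neighbours of $u$ has size ${\cal O}(deg_G(u)\cdot n^{1-1/f(d)})$; sorting these endpoints (or, better, merging along the already-sorted order of $P$) and sweeping once detects whether any point of $P$ is left uncovered. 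This costs $\tilde{\cal O}(deg_G(u)\cdot n^{1-1/f(d)})$ per vertex $u$, hence $\tilde{\cal O}\big(\sum_u deg_G(u)\cdot n^{1-1/f(d)}\big) = \tilde{\cal O}(m\,n^{1-1/f(d)})$ in total. Adding the cost $T(n,m)$ of the stabbing-number step gives the stated running time $\tilde{\cal O}\big(T(n,m) + m n^{1-1/f(d)}\big)$.

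I do not expect a serious obstacle here; the lemma is essentially a bookkeeping exercise once the interval viewpoint is in place. The one point requiring a little care is making sure the endpoint-sweep is implemented without an extra logarithmic (or worse) blow-up: rather than sorting the ${\cal O}(deg_G(u)\cdot n^{1-1/f(d)})$ endpoints from scratch for each $u$, one should exploit that the endpoints of each $I(v)$ are already listed in the order of $P$, so merging $deg_G(u)$ sorted lists and scanning suffices, keeping everything within the $\tilde{\cal O}$ bound. A second minor subtlety is that we are told $k \ge 2$, which is exactly what is needed for ${\cal N}_{k-1}(G)$ to be defined with $k-1 \ge 1$ (and for the neighbour-covering characterization to make sense); the case $k=1$ is trivial and excluded. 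Everything else is routine.
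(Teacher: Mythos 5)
Your proposal matches the paper's own proof essentially step for step: solve \textsc{$f$-Approx Stabbing Number} for ${\cal N}_{k-1}(G)$, convert each stabbed-edge set into an interval representation $I_{k-1}(v)$ of $N_G^{k-1}[v]$ along the spanning path, and then check for each $u$ whether $\bigcup_{v \in N_G[u]} I_{k-1}(v) = V$ by processing the ${\cal O}(deg_G(u)\cdot n^{1-1/f(d)})$ interval endpoints, for total time $\tilde{\cal O}(T(n,m)+mn^{1-1/f(d)})$. The argument and the bookkeeping are correct; the only cosmetic difference is that the paper simply sorts the endpoints lexicographically (the logarithmic factor being absorbed by the $\tilde{\cal O}$) rather than merging pre-sorted lists.
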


\begin{proof}
Let us first compute a spanning path $P$ of stabbing number at most $\tilde{\cal O}(n^{1-\frac 1 {f(d)}})$ for ${\cal N}_{k-1}(G)$.
By the hypothesis, it takes ${\cal O}(T(n,m))$ time.
For every $v \in V$ we can compute from $E_P(N_G^{k-1}[v])$ a set $I_{k-1}(v)$ of $t_v$ intervals, where $|E_P(N_G^{k-1}[v])| -1 \leq t_v \leq |E_P(N_G^{k-1}[v])| + 1$, such that $\bigcup I_{k-1}(v) = N_G^{k-1}[v]$.
This preprocessing phase takes time ${\cal O}(|E_P(N_G^{k-1}[v])|) = \tilde{\cal O}(n^{1-\frac 1 {f(d)}})$, and so, $\tilde{\cal O}(n^{2-\frac 1 {f(d)}})$ total time.
Then in order to decide whether $diam(G) \leq k$, we are left to decide whether for every $u \in V$ we have $\bigcup_{v \in N_G[u]} I_{k-1}(v) = V$.
For that, it suffices to collect the $\tilde{\cal O}(deg_G(u) \cdot n^{1-\frac 1 {f(d)}})$ ends of the intervals in $\bigcup_{v \in N_G[u]} I_{k-1}(v)$, and then to order them lexicographically.
As a result, this last verification phase can be done in total time $\tilde{\cal O}(mn^{1-\frac 1 {f(d)}})$.
\end{proof}

\begin{remark}\label{rk:radius}
The {\em radius} of a graph $G$ is equal to $rad(G) = \min_{u \in V}\max_{v \in V} dist_G(u,v)$.
Under the Hitting Set conjecture, we cannot compute the radius of a graph in truly subquadratic-time~\cite{AVW16}.
We here observe that we can easily modify the framework of Lemma~\ref{lem:reduction} in order to decide whether a graph has radius at most $k$.
Indeed, for that it suffices to check whether there exists at least one vertex $u$ such that $\bigcup_{v \in N_G[u]} I_{k-1}(v) = V$.
\end{remark}

Our main task in the remainder of this article will be to solve \textsc{$f$-Approx Stabbing Number} efficiently on $\ell$-neighbourhood hypergraphs, for some 
increasing function $f$.
Then, we can apply Lemma~\ref{lem:reduction} in order to efficiently solve {\sc Diameter}.

\section{Computation of Spanning paths with low Stabbing Number}\label{sec:spanning-path}

We prove in this section our first main result in the paper, whose statement is reminded below:

\diamTwo*

We will need the following result in our proofs:


\begin{lemma}[\cite{ChW89}]\label{lem:approx-stabbing-number}
There is a deterministic polynomial-time algorithm that outputs, for every $n$-vertex hypergraph ${\cal H}$ of VC-dimension at most $d$, a spanning path of stabbing number ${\cal O}(n^{1-1/2^{d+1}}\log n)$.
\end{lemma}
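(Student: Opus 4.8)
The plan is to follow the classical iterated-reweighting (multiplicative-weights) argument of Chazelle and Welzl: build a spanning \emph{tree} of small stabbing number greedily, then turn it into a path. Since the guarantee concerns how often hyperedges stab tree/path edges, it is natural to pass to the dual: by Lemma~\ref{lem:dual-vc} the hypergraph ${\cal H}=(X,R)$ has dual VC-dimension $d^*\le 2^{d+1}$ (and we may assume $d^*\ge 2$, which only weakens the target bound since $x\mapsto x^{1-1/x}$ is increasing). So it suffices to produce a spanning path of stabbing number ${\cal O}(n^{1-1/d^*}\log n)$. Throughout, for $p,p'\in X$ write $R(p,p')=\{q\in R : |q\cap\{p,p'\}|=1\}$ for the set of hyperedges that stab the potential edge $pp'$. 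We may also discard repeated hyperedges, so that $|R|={\cal O}(n^{d})$ by Lemma~\ref{lem:sauer-lemma}.

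\textbf{The existence step.} We maintain integer weights $w(q)\ge 1$ on the hyperedges (initially all $1$) and a partition of $X$ into components (initially the $n$ singletons), and we add $n-1$ edges \`a la Kruskal, each joining two distinct components. The heart of the proof is: whenever there are $N\ge 2$ components, there exist $p,p'$ in distinct components with $\sum_{q\in R(p,p')}w(q)={\cal O}\!\left(\tfrac{W\log n}{N^{1/d^*}}\right)$, where $W:=\sum_{q\in R}w(q)$. To see this, fix one representative per component, forming a set $Y$ with $|Y|=N$, and draw a multiset $A$ of $t=\Theta(N^{1/d^*})$ hyperedges, each chosen independently with probability $w(q)/W$. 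The at most $t$ distinct hyperedges of $A$ partition $Y$ into classes (points with the same membership pattern on $A$), and by Sauer--Shelah applied to the dual hypergraph (Lemmata~\ref{lem:vc-sub} and~\ref{lem:sauer-lemma}) there are only ${\cal O}(t^{d^*})$ such classes; choosing the hidden constant in $t$ small enough makes this ${}<N$, so with probability $1$ some class, hence some pair $p\ne p'\in Y$, is unseparated by every hyperedge of $A$. On the other hand, if \emph{every} pair of $Y$ satisfied $\sum_{q\in R(p,p')}w(q)>\tfrac{2W\ln N}{t}$, then a fixed pair would be unseparated by $A$ with probability $\bigl(1-\tfrac{\sum_{q\in R(p,p')}w(q)}{W}\bigr)^{t}<N^{-2}$, and a union bound over the fewer than $N^{2}/2$ pairs of $Y$ would give probability ${}<1/2$ of any unseparated pair --- contradicting the previous sentence. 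Hence $\min w(R(p,p'))\le \tfrac{2W\ln N}{t}={\cal O}(W\log n/N^{1/d^*})$. (For the ${\cal O}(1)$ steps with $N$ below a constant we just use the trivial bound $w(R(p,p'))\le W$.)

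\textbf{Bookkeeping and tree-to-path.} The algorithm itself is deterministic: at each step, brute-force over all ${\cal O}(n^{2})$ pairs (using a union--find structure and the characteristic vectors of the ${\cal O}(n^{d})$ hyperedges) to pick the pair $p,p'$ in distinct components minimizing $\sum_{q\in R(p,p')}w(q)$, add $pp'$ to $T$, merge the two components, and double $w(q)$ for every $q\in R(p,p')$; arithmetic stays over integers of polynomially many bits, so the whole thing runs in polynomial time, and the existence step above certifies that the greedy choice is always at least as good as the promised bound. Since doubling the stabbing hyperedges increases $W$ by exactly $\sum_{q\in R(p,p')}w(q)$, at the $j$-th step (with $N_j=n-j+1$ components) we get $W_j\le W_{j-1}\bigl(1+{\cal O}(\log n/N_j^{1/d^*})\bigr)$; telescoping and using $\sum_{N=2}^{n}N^{-1/d^*}={\cal O}(n^{1-1/d^*})$ (here $d^*\ge 2$) gives $\log_2 W_{n-1}={\cal O}(\log|R|)+{\cal O}(n^{1-1/d^*}\log n)={\cal O}(n^{1-1/d^*}\log n)$. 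Now a hyperedge $q$ has its weight doubled exactly once per edge of $T$ it stabs, so it stabs $\log_2 w_{\mathrm{final}}(q)\le \log_2 W_{n-1}={\cal O}(n^{1-1/d^*}\log n)$ edges of $T$; thus $T$ has stabbing number ${\cal O}(n^{1-1/d^*}\log n)$. Finally, taking an Euler tour of $T$ and shortcutting repeated vertices yields a spanning path whose stabbing number is at most twice that of $T$ (as recalled in Section~\ref{sec:stabbing-num}), and substituting $d^*\le 2^{d+1}$ gives the claimed ${\cal O}(n^{1-1/2^{d+1}}\log n)$.

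\textbf{Main obstacle.} The delicate point is the existence step: one must simultaneously (i) make the sampled hyperedges cut $Y$ into strictly fewer than $N$ pieces --- forcing a ``cheap'' pair to exist --- and (ii) use weighted sampling plus a union bound to certify that the cheapest pair really is cheap, which requires the two tuning parameters $t=\Theta(N^{1/d^*})$ and the threshold $2W\ln N/t$ to be compatible. It is exactly the union bound over $\binom{N}{2}$ pairs that introduces the $\log n$ factor not present in the non-constructive Lemma~\ref{lem:stabbing-num}. Everything else --- the Kruskal-style merging that keeps $T$ a forest, the telescoping of $W_j$, the small-$N$ tail, and the tree-to-path conversion --- is routine.
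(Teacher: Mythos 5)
Your proof is correct and follows essentially the same route the paper relies on: the paper simply invokes Theorem~4.3 of Chazelle--Welzl together with Lemma~\ref{lem:dual-vc}, and your argument is a faithful reconstruction of that iterated-reweighting proof (dual shatter bound via Lemmata~\ref{lem:vc-sub} and~\ref{lem:sauer-lemma}, weighted sampling plus union bound for the cheap-pair existence step, weight doubling and telescoping, then the standard tree-to-path doubling). The bookkeeping, including the derandomization by exhaustive pair search and the treatment of small $N$, checks out.
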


This above lemma is a consequence of Theorem~4.3 in~\cite{ChW89} and the discussion about the complexity of the algorithm resulting from their proof.
We note that their result applies to infinite range spaces too, with the initial step in their proof reducing to the finite case.
In order to derive Lemma~\ref{lem:approx-stabbing-number} from~\cite{ChW89}, we use the bound on the dual VC-dimension resulting from Lemma~\ref{lem:dual-vc} and the fact that no initial step is required as we start from a {\em finite} range space.
Better randomized algorithms can be obtained through the approximation results in~\cite{BGRS04,Har09}. They are expressed for spanning trees but easily convert to paths as previously noted.
The algorithms in~\cite{BGRS04,Har09} use LP relaxation and randomized rounding. It is not immediately clear if they can be derandomized using classical techniques. Indeed, the algorithm from~\cite{Har09} works by phases. During a phase, it needs to solve an ILP relaxation and then to apply some randomized rounding technique. In the worst case, this main phase is repeated ${\cal O}(\log{n})$ times.
We observe that even by using the best known upper-bounds on the time complexity of linear programming~\cite{CLS19}, this overall process takes super-quadratic time.
In what follows, we use the Sauer-Shelah-Perles Lemma (Lemma~\ref{lem:sauer-lemma}) in order to obtain better trade-offs between the running-time and the quality of our solution.

\stabbingNb*

\begin{proof}
Let $\eta \in (0;1)$ to be fixed later in the proof.
We arbitrarily partition the vertex-set $X$ into subsets $X_1,X_2,\ldots,X_p$ such that $p = {\cal O}(n^{1-\eta})$ and, for every $1 \leq i \leq p$, $|X_i| = {\cal O}(n^{\eta})$.
Our aim is to apply Lemma~\ref{lem:approx-stabbing-number} to the subhypergraphs ${\cal H}[X_1], {\cal H}[X_2], \ldots, {\cal H}[X_p]$.
We stress that all these subhypergraphs can be constructed in total ${\cal O}(m)$-time, as follows: we scan all the hyperedges $q$ once in order to compute $(q \cap X_i)_{1 \leq i \leq p}$; then, for every $i$, we use a linear-time sorting algorithm in order to suppress duplicated values in $\{ q \cap X_i \mid q \in R \}$.

\begin{myclaim}\label{claim:stabbing-partition}
Given ${\cal H}[X_1], {\cal H}[X_2], \ldots, {\cal H}[X_p]$, we can compute a spanning path for ${\cal H}$ of stabbing number $\tilde{\cal O}(n^{1-\frac{\eta}{2^{d+1}}})$.
Moreover, it takes ${\cal O}(n^{1 + \eta[c(d+1)-1]})$ time for some universal constant $c > 2$.
\end{myclaim}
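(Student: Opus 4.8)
The plan is to run the deterministic algorithm of Lemma~\ref{lem:approx-stabbing-number} once on each of the given subhypergraphs ${\cal H}[X_1],\dots,{\cal H}[X_p]$ and then to concatenate the $p$ spanning paths so obtained into a single spanning path of ${\cal H}$. First I would record that by Lemma~\ref{lem:vc-sub} every ${\cal H}[X_i]$ has VC-dimension at most $d$, so Lemma~\ref{lem:approx-stabbing-number} applies to it with the same parameter $d$; and by the Sauer--Shelah--Perles Lemma (Lemma~\ref{lem:sauer-lemma}), ${\cal H}[X_i]$ has only ${\cal O}(|X_i|^d)={\cal O}(n^{\eta d})$ hyperedges, hence size $s_i = {\cal O}(n^{\eta(d+1)})$. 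The call on ${\cal H}[X_i]$ thus returns a spanning path $P_i$ of $X_i$ whose stabbing number is ${\cal O}(|X_i|^{1-1/2^{d+1}}\log|X_i|) = {\cal O}(n^{\eta(1-1/2^{d+1})}\log n)$.

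For the running time, the key point I would make is that the Chazelle--Welzl reweighting procedure behind Lemma~\ref{lem:approx-stabbing-number} runs in polynomial time with a degree that does \emph{not} depend on the VC-dimension (the dimension only governs the quality of the output, not the cost of the greedy steps). Hence I can fix a universal constant $c > 2$ such that this algorithm runs in time ${\cal O}(s^c)$ on any hypergraph of size $s$ (enlarging $c$ if necessary so that $c > 2$). The $i$-th call then costs ${\cal O}(s_i^c)={\cal O}(n^{\eta c(d+1)})$, and summing over the $p={\cal O}(n^{1-\eta})$ calls gives a total of ${\cal O}(n^{1-\eta}\cdot n^{\eta c(d+1)})={\cal O}(n^{1+\eta[c(d+1)-1]})$ time; the ${\cal O}(n)$ cost of the concatenation below is absorbed, since $c(d+1)-1>0$.

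It then remains to concatenate the paths and bound the stabbing number of the result. I would let $P$ be the path listing the vertices of $P_1$, then those of $P_2$, \dots, then those of $P_p$; this is a valid spanning path of ${\cal H}$ because $X = X_1 \sqcup \dots \sqcup X_p$, and its edges are the edges internal to the $P_i$'s together with $p-1$ ``bridge'' edges, each joining the last vertex of some $P_i$ to the first vertex of $P_{i+1}$. For a fixed hyperedge $q\in R$, any internal edge of $P$ has both endpoints in a single $X_i$, so it is stabbed by $q$ precisely when it is stabbed, inside $P_i$, by the hyperedge $q\cap X_i\in R[X_i]$ of ${\cal H}[X_i]$; consequently the number of internal edges of $P$ stabbed by $q$ is at most $\sum_{i=1}^{p}\mathrm{stab}(P_i)={\cal O}(n^{1-\eta})\cdot{\cal O}(n^{\eta(1-1/2^{d+1})}\log n)={\cal O}(n^{1-\eta/2^{d+1}}\log n)$. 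At most $p-1={\cal O}(n^{1-\eta})={\cal O}(n^{1-\eta/2^{d+1}})$ bridge edges can be stabbed by $q$ (using $2^{d+1}\ge 1$), so $P$ has stabbing number $\tilde{\cal O}(n^{1-\eta/2^{d+1}})$, which is the bound claimed.

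The only genuinely delicate step is the running-time accounting above: Lemma~\ref{lem:approx-stabbing-number} is quoted merely as a ``polynomial-time'' algorithm, so one must be careful to extract that its exponent is a universal constant $c$ independent of $d$, and then pair this with the Sauer--Shelah bound $s_i = {\cal O}(n^{\eta(d+1)})$ on the size of each piece — it is exactly this pairing that produces the exponent $c(d+1)-1$ in the statement. The remaining ingredients (stability of VC-dimension under taking sub-hypergraphs, the ``locality'' of stabbing to a single block, and the negligible cost of the $p-1$ bridges) are routine.
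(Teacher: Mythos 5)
Your proof is correct and follows essentially the same route as the paper: restrict to each block (Lemma~\ref{lem:vc-sub}), bound each block's size via Sauer--Shelah, run the deterministic Chazelle--Welzl algorithm on each piece in time polynomial in its size with a universal exponent $c$, and concatenate, charging each hyperedge at most the sum of the per-block stabbing numbers plus the $p-1$ bridge edges. Your extra care in justifying that the exponent $c$ is independent of $d$ is exactly the (implicit) assumption the paper also makes when quoting the running time ${\cal O}(n^{c\eta(d+1)})$ per block.
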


\begin{proofclaim}
By Lemma~\ref{lem:vc-sub}, every ${\cal H}[X_i]$ has VC-dimension at most $d$.
This implies that ${\cal H}[X_i]$ has ${\cal O}(n^{\eta d})$ hyperedges (Lemma~\ref{lem:sauer-lemma}), and so it has size ${\cal O}(n^{\eta(d+1)})$.
Furthermore 
by Lemma~\ref{lem:approx-stabbing-number} we can compute deterministically a spanning path of stabbing number $\tilde{\cal O}\left(n^{\eta \left(1 - \frac 1 {2^{d+1}} \right)}\right)$, in time ${\cal O}(n^{c \eta (d+1)})$ for some universal constant $c$.

Let $P_1, P_2, \ldots, P_p$ be the spanning paths that we obtain.
We obtain a spanning path $P$ for ${\cal H}$ by concatenating all the $P_i$'s.
For every $1 \leq i \leq p$, we recall that the stabbing number of $P_i$ is in $\tilde{\cal O}\left(n^{\eta \left(1 - \frac 1 {2^{d+1}} \right)}\right)$.
Therefore by construction, the stabbing number of $P$ is in $\tilde{\cal O}\left( p \cdot n^{\eta \left(1 - \frac 1 {2^{d+1}} \right)} + p -1 \right) = \tilde{\cal O}\left(n^{1 - \frac{\eta}{2^{d+1}}}\right)$.
\end{proofclaim}

Let $P$ be the spanning path obtained with Claim~\ref{claim:stabbing-partition}.
Finally, for every $q \in R$ we compute the set $E_P(q)$ of all edges of $P$ stabbed by $q$, in total ${\cal O}(m)$-time, simply by scanning once all the hyperedges.
The total running-time is in ${\cal O}(m + p \cdot n^{c \eta (d+1)}) = {\cal O}(m + n^{1 + \eta[c(d+1)-1]})$.
Overall, we achieve a good trade-off between running-time and approximation factor if we have $2 - \frac{\eta}{2^{d+1}} = 1 + \eta[c(d+1)-1]$.
Therefore we set $\eta = \frac 1 {c(d+1) + \frac 1 {2^{d+1}} - 1}$, and then $\varepsilon_d = \frac{\eta}{2^{d+1}} = \frac 1 {2^{d+1}[ c(d+1) - 1 ] + 1}$.
\end{proof}

We observe that our analysis could be easily improved in some particular cases, {\it e.g.}, for all hypergraphs that are isomorphic to their dual.

We are now ready to prove the main result in this section:

\begin{proofof}{Theorem~\ref{thm:diamTwo}}
We apply Theorem~\ref{thm:stabbingNb} to the closed neighbourhood hypergraph of $G$.
Then, the result follows from Lemma~\ref{lem:reduction} applied to the function $f : d \to 1/\varepsilon_d$.
\end{proofof}

\section{Bounded {\sc Diameter} with $\varepsilon$-nets}\label{sec:diam-comput}

For graphs of bounded {\em distance} VC-dimension we now generalize Theorem~\ref{thm:diamTwo} from the previous section to larger values for the diameter.

\boundedDiam*

Our proof crucially relies on the concept of {\em $\varepsilon$-net}.
We recall that for a hypergraph ${\cal H} =(X,R)$, a subset $Y \subseteq X$ is called an $\varepsilon$-net if, for every $q \in R$, we have $|q| \geq \varepsilon n \Longrightarrow Y \cap q \neq \emptyset$.

\begin{lemma}[\cite{HaW87,VaC15}]\label{lem:eps-net}
For every hypergraph of VC-dimension at most $d$, any random subset of size ${\Omega}\left( \frac d {\varepsilon} \log{\left(\frac{1}{\varepsilon\delta}\right)} \right)$ is an $\varepsilon$-net with probability $1 - \delta$.
\end{lemma}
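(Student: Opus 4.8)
The plan is to prove this by the classical double-sampling (symmetrization) argument of Haussler and Welzl. Fix a hypergraph $\mathcal{H} = (X,R)$ with $|X| = n$ and VC-dimension at most $d$, and let $r$ denote the sample size; I will sample $Y$ as an ordered $r$-tuple drawn uniformly with replacement from $X$ (the statement for sampling without replacement follows with minor bookkeeping changes, or by a standard reduction). Call a hyperedge $q \in R$ \emph{heavy} if $|q| \geq \varepsilon n$, and say that $Y$ \emph{fails} if some heavy $q$ satisfies $Y \cap q = \emptyset$. I must show $\Pr[Y \text{ fails}] \leq \delta$ whenever $r = \Omega\!\big(\tfrac{d}{\varepsilon}\log\tfrac{1}{\varepsilon\delta}\big)$.

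First I would introduce an independent ``verification'' sample $Z$, also an ordered $r$-tuple drawn uniformly with replacement from $X$, and write $|Z \cap q|$ for the number of coordinates of $Z$ lying in $q$. Consider the event $E_2$ that there is a heavy $q$ with $Y \cap q = \emptyset$ and $|Z \cap q| \geq \varepsilon r/2$. The first step is to show $\Pr[E_2] \geq \tfrac12 \Pr[Y \text{ fails}]$: conditioning on $Y$ failing and letting $q^*$ be the first heavy hyperedge avoided by $Y$ under any fixed ordering of $R$, the count $|Z \cap q^*|$ is a binomial variable with mean $rp$ where $p = |q^*|/n \geq \varepsilon$, so by Chebyshev's inequality $\Pr[|Z \cap q^*| \geq \varepsilon r/2] \geq 1 - \tfrac{4}{\varepsilon r} \geq \tfrac12$ once $r \geq 8/\varepsilon$; since $q^*$ is a function of $Y$ alone, this conditional bound integrates to the claimed inequality.

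Next comes the symmetrization step, which I expect to be the crux. I would bound $\Pr[E_2]$ by regarding $(Y,Z)$ as generated in two stages: first draw the combined ordered $2r$-tuple $W = (Y,Z)$, then choose uniformly at random which $r$ of its positions form $Y$ and which form $Z$. Viewing $W$ as a set of at most $2r$ points, Lemma~\ref{lem:vc-sub} gives that $\mathcal{H}[W] = (W, R[W])$ has VC-dimension at most $d$, so by the Sauer--Shelah--Perles lemma (Lemma~\ref{lem:sauer-lemma}) we have $|R[W]| = O((2r)^d)$. A union bound over these $O((2r)^d)$ distinct traces then reduces the task to bounding, for one fixed $q$ with $|q \cap W| = \ell \geq \varepsilon r/2$, the probability over the random $Y/Z$ split that all $\ell$ of these marked positions are assigned to $Z$; this probability is at most $2^{-\ell} \leq 2^{-\varepsilon r/2}$. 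Hence $\Pr[E_2] = O\!\big((2r)^d\, 2^{-\varepsilon r/2}\big)$, and combining with the previous step, $\Pr[Y \text{ fails}] = O\!\big((2r)^d\, 2^{-\varepsilon r/2}\big)$.

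Finally I would pick the absolute constant in $r = \Theta\!\big(\tfrac{d}{\varepsilon}\log\tfrac{1}{\varepsilon\delta}\big)$ large enough that $(2r)^d\, 2^{-\varepsilon r/2} \leq \delta$: taking logarithms, this amounts to verifying $\tfrac{\varepsilon r}{2}\log e \geq d\log(2r) + \log(1/\delta)$, which holds because $\varepsilon r \gtrsim d\log\tfrac{1}{\varepsilon\delta}$ dominates both $d\log(2r)$ and $\log(1/\delta)$ for the stated choice of $r$. The only genuinely delicate points are making the ``first avoided heavy hyperedge'' selection well-defined and independent of $Z$, and the with/without-replacement accounting in the symmetrization; both are routine but must be handled carefully. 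A fully rigorous treatment, along with matching lower bounds, appears in Haussler--Welzl~\cite{HaW87}.
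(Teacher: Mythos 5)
The paper does not prove this lemma at all -- it is imported as a known result from~\cite{HaW87,VaC15} -- and what you have written is precisely the double-sampling argument of Haussler and Welzl from those sources, so your outline is the ``intended'' proof and is structurally sound: the Chebyshev step (valid once $r \geq 8/\varepsilon$), the exchangeability justifying the random $Y/Z$ split of $W$, the use of Lemmas~\ref{lem:vc-sub} and~\ref{lem:sauer-lemma} to bound the number of traces, and the $2^{-\ell}$ bound for a fixed trace are all correct. The one place where your argument, as literally written, does not quite deliver the stated bound is the final calibration: with the crude estimate $|R[W]| = {\cal O}\bigl((2r)^d\bigr)$ you need $\varepsilon r \gtrsim d\log(2r) + \log(1/\delta)$, and $d\log(2r)$ contains a $d\log d$ term that is \emph{not} dominated by $d\log\frac{1}{\varepsilon\delta}$ when $d$ is large compared to $\frac{1}{\varepsilon\delta}$; so this route only yields $r = {\cal O}\bigl(\frac d\varepsilon \log\frac d{\varepsilon\delta}\bigr)$ with a universal constant. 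The standard fix is to invoke the sharper form of the Sauer--Shelah--Perles bound, $|R[W]| \leq \bigl(\frac{2er}{d}\bigr)^d$, so that the $d$ inside the logarithm cancels and the choice $r = \Theta\bigl(\frac d\varepsilon\log\frac 1{\varepsilon\delta}\bigr)$ (for, say, $\varepsilon,\delta \leq 1/2$) verifies the inequality; alternatively, for the way the lemma is used in this paper ($d$ constant, $\delta$ inverse-polynomial in $n$, and $\tilde\Theta$ hiding polylogarithmic factors) the weaker bound would already suffice. Apart from this last-step repair, your proposal is correct and matches the cited proof.
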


We will also need the following result:

\begin{lemma}[\cite{ChW89}]\label{lem:diff}
For every hypergraph ${\cal H} = (X,R)$, let $\hat{R} := \{ q \Delta q' \mid q,q' \in R\}$ be the set of symmetric differences between hyperedges.
If ${\cal H}$ has VC-dimension at most $d$ then, $\hat{\cal H} := (X,\hat{R})$ has bounded VC-dimension.
\end{lemma}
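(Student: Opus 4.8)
The plan is to derive the bound by a counting argument built on the Sauer--Shelah--Perles Lemma (Lemma~\ref{lem:sauer-lemma}). The key elementary observation is that symmetric difference commutes with restriction to a subset: for any $Y \subseteq X$ and any $q, q' \in R$ one has $(q \Delta q') \cap Y = (q \cap Y) \Delta (q' \cap Y)$. Writing $R[Y] = \{ q \cap Y \mid q \in R \}$ and $\hat R[Y] = \{ r \cap Y \mid r \in \hat R \}$ in the notation of Lemma~\ref{lem:vc-sub}, this gives the inclusion $\hat R[Y] \subseteq \{ A \Delta B \mid A, B \in R[Y] \}$, and hence $|\hat R[Y]| \le |R[Y]|^2$.

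First I would bound $|R[Y]|$. By Lemma~\ref{lem:vc-sub} the hypergraph ${\cal H}[Y]$ still has VC-dimension at most $d$, so by the Sauer--Shelah--Perles Lemma $|R[Y]| \le \sum_{i=0}^{d} \binom{|Y|}{i} \le \left( \frac{e\,|Y|}{d} \right)^{d}$ whenever $|Y| \ge d$. Combined with the previous paragraph, $|\hat R[Y]| \le \left( \frac{e\,|Y|}{d} \right)^{2d}$, i.e. $\hat{\cal H}$ has polynomially many traces on any fixed set, of degree only $2d$.

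Next, suppose toward a contradiction that some $Y$ with $D := |Y|$ is shattered by $\hat{\cal H}$. Then $\hat R[Y] = 2^{Y}$, so $2^{D} = |\hat R[Y]| \le (eD/d)^{2d}$, which forces $D \le 2d \log_2(eD/d)$. Since $2^{D}$ eventually dominates any fixed power of $D$, this inequality can hold only for $D$ below some threshold $D_0(d) = {\cal O}(d \log d)$. Therefore no set larger than $D_0(d)$ is shattered by $\hat{\cal H}$, so its VC-dimension is at most $D_0(d) = {\cal O}(d \log d)$, which is in particular bounded in terms of $d$, as claimed.

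I do not expect a genuine obstacle here: the whole argument is the standard ``symmetric differences blow up the shatter function by at most a square'' trick. The only step needing a little care is extracting the explicit ${\cal O}(d \log d)$ threshold from the transcendental inequality $2^{D} \le (eD/d)^{2d}$; if no explicit constant is required one may simply quote Lemma~\ref{lem:sauer-lemma} in its ${\cal O}(n^{d})$ form, noting that a shattered set $Y$ needs $2^{|Y|}$ distinct traces while $|\hat R[Y]| = {\cal O}(|Y|^{2d})$, which already bounds $|Y|$.
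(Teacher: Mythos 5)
Your argument is correct: the identity $(q \Delta q') \cap Y = (q \cap Y) \Delta (q' \cap Y)$ plus Lemma~\ref{lem:vc-sub} and the Sauer--Shelah--Perles bound gives $|\hat R[Y]| \le |R[Y]|^2 \le (e|Y|/d)^{2d}$, and comparing with $2^{|Y|}$ for a shattered $Y$ bounds the VC-dimension of $\hat{\cal H}$ by ${\cal O}(d\log d)$. The paper does not prove this lemma itself but merely cites \cite{ChW89} (see also \cite{EiA07}) and remarks that an ${\cal O}(d\log d)$ bound follows; your shatter-function squaring argument is precisely the standard proof behind that citation, so it matches the intended approach and even recovers the explicit bound the paper mentions.
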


We observe that no explicit upper bound on the VC-dimension of $\hat{\cal H}$ was stated in~\cite{ChW89}. 
Nevertheless it can be easily deduced from their proof that it is in ${\cal O}(d\log{d})$ (see also~\cite{EiA07}).

\smallskip
The following partition lemma which derives from the two above lemmata is the cornerstone of our algorithm.

\begin{lemma}\label{lem:partition-lemma}
Let $G=(V,E)$ be a graph of distance VC-dimension at most $d$, and let $S$ be any random subset of size $\tilde{\Theta}(d/\varepsilon)$.
Then w.h.p., for every $\ell \geq 0$ and for every $u,v \in V$ such that $N_G^{\ell}[u] \cap S = N_G^{\ell}[v] \cap S$, we have $|\ N_G^{\ell}[u] \Delta N_G^{\ell}[v] \ | = \tilde{\cal O}(\varepsilon n)$.
\end{lemma}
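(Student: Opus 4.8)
The plan is to derive Lemma~\ref{lem:partition-lemma} by applying the $\varepsilon$-net guarantee of Lemma~\ref{lem:eps-net} not to the ball hypergraph ${\cal B}(G)$ directly, but to the ``symmetric-difference hypergraph'' $\hat{\cal B}(G) := (V,\hat{R})$ with $\hat{R} = \{ q \Delta q' \mid q,q' \in {\cal B}(G) \}$. First I would observe that, since $G$ has distance VC-dimension at most $d$, Lemma~\ref{lem:diff} gives that $\hat{\cal B}(G)$ has VC-dimension at most some $d' = {\cal O}(d\log d)$, a constant depending only on $d$. Hence by Lemma~\ref{lem:eps-net}, taking $\delta = n^{-c}$ for a suitable constant $c$, a uniformly random subset $S \subseteq V$ of size ${\Omega}\!\left(\frac{d'}{\varepsilon}\log\frac{1}{\varepsilon\delta}\right) = \tilde{\Theta}(d/\varepsilon)$ is an $\varepsilon$-net for $\hat{\cal B}(G)$ with probability $1 - n^{-c}$, i.e. w.h.p.

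\smallskip
\noindent The second step is to unfold what being an $\varepsilon$-net for $\hat{\cal B}(G)$ means in terms of balls. Fix any $\ell \geq 0$ and any $u,v \in V$. Both $N_G^{\ell}[u]$ and $N_G^{\ell}[v]$ are hyperedges of ${\cal B}(G)$, so their symmetric difference $q := N_G^{\ell}[u]\,\Delta\,N_G^{\ell}[v]$ is a hyperedge of $\hat{\cal B}(G)$. By the defining property of an $\varepsilon$-net, if $|q| \geq \varepsilon n$ then $S \cap q \neq \emptyset$. Contrapositively, if $S \cap q = \emptyset$ then $|q| < \varepsilon n = \tilde{\cal O}(\varepsilon n)$. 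So it remains only to check that the hypothesis $N_G^{\ell}[u] \cap S = N_G^{\ell}[v] \cap S$ is exactly equivalent to $S \cap q = \emptyset$: indeed a vertex $x \in S$ lies in $q = N_G^{\ell}[u]\,\Delta\,N_G^{\ell}[v]$ iff it belongs to exactly one of the two balls, which happens iff $x$ distinguishes the two intersections with $S$; hence $S$ is disjoint from $q$ precisely when the two balls induce the same trace on $S$. Combining, whenever $N_G^{\ell}[u] \cap S = N_G^{\ell}[v] \cap S$ we get $|N_G^{\ell}[u]\,\Delta\,N_G^{\ell}[v]| < \varepsilon n$, which is the claimed bound, and the whole argument holds simultaneously for all $\ell$ and all pairs $u,v$ since it used a single fixed $S$ and a single $\varepsilon$-net event.

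\smallskip
\noindent The only mild subtlety — and the place I would be most careful — is bookkeeping the failure probability and the VC-dimension constant: one has to make sure that the hidden constant in the VC-dimension bound for $\hat{\cal B}(G)$ from Lemma~\ref{lem:diff} depends only on $d$ (which the excerpt already asserts, citing the ${\cal O}(d\log d)$ bound), and that the $\tilde{\Theta}$ in the sample size is large enough to absorb both the $\log\frac1\varepsilon$ and the $\log n$ coming from $\delta = \mathrm{poly}(1/n)$, so that the single net event fails with probability at most $n^{-c}$. No union bound over $\ell$ or over pairs is needed because a single $\varepsilon$-net for $\hat{\cal B}(G)$ handles every symmetric difference of balls at once; this is exactly why routing the argument through $\hat{\cal B}(G)$ rather than ${\cal B}(G)$ is the right move.
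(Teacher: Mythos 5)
Your proposal is correct and matches the paper's proof essentially verbatim: both apply Lemma~\ref{lem:diff} to the hypergraph of symmetric differences of balls to get VC-dimension ${\cal O}(d\log d)$, invoke Lemma~\ref{lem:eps-net} so that the single random set $S$ is w.h.p.\ an $\varepsilon$-net for that hypergraph, and then observe that equal traces on $S$ force the symmetric difference to have size below the $\varepsilon n$ threshold. The only difference is cosmetic (you spell out the choice of $\delta$ and the equivalence between equal traces and disjointness from the symmetric difference, which the paper states as a one-line implication).
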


\begin{proof}
Let $\hat{R} = \{ \ N_G^{\ell_1}[x] \Delta N_G^{\ell_2}[y] \ \mid x,y \in V \ \mbox{and} \ \ell_1,\ell_2 \geq 0 \}$ be the set of the symmetric differences between the balls of $G$.
Since $G$ has distance VC-dimension at most $d$ then, by Lemma~\ref{lem:diff}, the hypergraph $\hat{\cal H} = (V,\hat{R})$ has VC-dimension in ${\cal O}(d\log{d})$.
Then by Lemma~\ref{lem:eps-net}, w.h.p. $S$ is an $\varepsilon$-net for $\hat{\cal H}$.
Therefore, for every $\ell \geq 0$ and for every $u,v \in V$, $|\ N_G^{\ell}[u] \Delta N_G^{\ell}[v] \ | > \varepsilon n \Longrightarrow  (N_G^{\ell}[u] \Delta N_G^{\ell}[v]) \cap S \neq \emptyset$. We stress that $(N_G^{\ell}[u] \Delta N_G^{\ell}[v]) \cap S \neq \emptyset \Longrightarrow N_G^{\ell}[u] \cap S \neq N_G^{\ell}[v] \cap S$. 
\end{proof}

This above partition lemma will be useful in order to group the vertices in a small number of groups, with every two vertices in a group having almost the same ball of radius $\ell$.
Here there is a trade-off between the number of groups (that we upper-bound by using the Sauer-Shelah-Perles Lemma) and, for every two vertices in the same group, the maximum number of vertices in which their respective balls of radius $\ell$ can differ. 

\smallskip
More precisely, our approach in the next two sections can be summarized as follows:
\begin{enumerate}
\item We compute a spanning path $P_k'$ for ${\cal N}_k(G)$ of low {\em average} stabbing number, with the latter being equal to $\frac 1 n \cdot \sum_{v \in V} |E_{P_k'}(N_G^k[v])|$;
\item Then, we compute an $\varepsilon$-net, for some well-chosen $\varepsilon$, and in doing so we partition the vertex-set into $p(\varepsilon)$ disjoint groups $V_1, V_2,\ldots,V_{p(\varepsilon)}$. For every $j$ we select a unique $v_j \in V_j$. We restrict ourselves to ${\cal H}_k := (V, \{ N_G^k[v_j] \mid 1 \leq j \leq p(\varepsilon) \})$. We compute a spanning path $P_k$ of low stabbing number for this subhypergraph.
\item We observe that if $P_k$ is a spanning path of stabbing number $t$ for ${\cal H}_k$, then it is also a spanning path of stabbing number $t + {\cal O}(\varepsilon n)$ for ${\cal N}_k(G)$. Finally, for every $1 \leq j \leq p(\varepsilon)$, we consider the unselected vertices $u \in V_j \setminus \{v_j\}$ sequentially. We compute the set of all the edges in $E(P_k)$ that are stabbed by $N_G^k[u]$. For that, it suffices to compute the ${\cal O}(\varepsilon n)$ vertices of $N_G^k[u] \Delta N_G^k[v_j]$. We do so efficiently by using the auxiliary spanning path $P_k'$.
\end{enumerate}

We next give a first application of our approach (we will give another such application in the proof of Theorem~\ref{thm:subLin}).

\begin{proofof}{Theorem~\ref{thm:boundedDiam}}
Let $\varepsilon_d$ be the constant of Theorem~\ref{thm:stabbingNb}.
We shall prove the following claim by finite induction: 

\begin{myclaim}\label{claim:+1}
For every $1 \leq i \leq k-1$, we can compute a spanning path of stabbing number $\tilde{\cal O}(n^{1 - \varepsilon_d})$ for ${\cal N}_i(G)$. Moreover, it can be done in time $\tilde{\cal O}(i \cdot mn^{1-\varepsilon_d})$.
\end{myclaim}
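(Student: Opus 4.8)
The plan is to prove Claim~\ref{claim:+1} by finite induction on $i$, with the (necessary) strengthening that the procedure actually solves \textsc{$f$-Approx Stabbing Number} for ${\cal N}_i(G)$ with $f:d\mapsto 1/\varepsilon_d$, i.e.\ the spanning path $P_i$ comes together with the sets $E_{P_i}(N_G^i[v])$ for all $v\in V$ (equivalently, a representation of each $N_G^i[v]$ as a union of $\tilde{\cal O}(n^{1-\varepsilon_d})$ intervals of $P_i$). Before anything, I would fix once and for all a random $S\subseteq V$ with $|S|=\tilde{\Theta}(d\,n^{\varepsilon_d})$, so that by Lemma~\ref{lem:partition-lemma} applied with $\varepsilon:=n^{-\varepsilon_d}$ the following holds w.h.p.\ for \emph{every} radius $\ell$: any $u,v$ with $N_G^{\ell}[u]\cap S=N_G^{\ell}[v]\cap S$ satisfy $|N_G^{\ell}[u]\,\Delta\,N_G^{\ell}[v]|=\tilde{\cal O}(n^{1-\varepsilon_d})$. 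Throughout I will use that $n^{2-\varepsilon_d}={\cal O}(mn^{1-\varepsilon_d})$ (since $G$ is connected) and that $(d+1)\varepsilon_d<1$, which is immediate from the formula for $\varepsilon_d$ in Theorem~\ref{thm:stabbingNb}. The base case $i=1$ is then easy: ${\cal N}_1(G)$ has size ${\cal O}(m)$ and, being a subfamily of ${\cal B}(G)$, has VC-dimension at most $d$, so Theorem~\ref{thm:stabbingNb} directly produces $P_1$ and all the $E_{P_1}(N_G^1[v])$ in time $\tilde{\cal O}(m+n^{2-\varepsilon_d})=\tilde{\cal O}(mn^{1-\varepsilon_d})$.

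For the inductive step $i-1\to i$, my plan follows the three-stage scheme announced before the proof, with $P_{i-1}$ in the role of the auxiliary path. (i) Using the distances $dist_G(s,\cdot)$ obtained by one BFS from each $s\in S$ -- total time $\tilde{\cal O}(|S|\,m)=\tilde{\cal O}(mn^{1-\varepsilon_d})$ since $\varepsilon_d\le 1/2$ -- I partition $V$ into the groups $V_1,\dots,V_p$ of vertices sharing the trace $N_G^i[\cdot]\cap S$; by Lemmata~\ref{lem:vc-sub} and~\ref{lem:sauer-lemma} there are only $p={\cal O}(|S|^d)=\tilde{\cal O}(n^{d\varepsilon_d})$ of them, and in each $V_j$ I pick as representative a vertex $v_j$ of \emph{minimum degree}. (ii) One BFS from every $v_j$ (total time $\tilde{\cal O}(pm)=\tilde{\cal O}(mn^{1-\varepsilon_d})$, using $(d+1)\varepsilon_d<1$) gives the balls $N_G^i[v_j]$ explicitly; I feed the subhypergraph ${\cal H}_i:=(V,\{N_G^i[v_j]\})$, of size $\tilde{\cal O}(pn)=\tilde{\cal O}(n^{1+d\varepsilon_d})$ and VC-dimension at most $d$, to Theorem~\ref{thm:stabbingNb}, which in time $\tilde{\cal O}(pn+n^{2-\varepsilon_d})=\tilde{\cal O}(mn^{1-\varepsilon_d})$ returns a spanning path $P_i$ of stabbing number $\tilde{\cal O}(n^{1-\varepsilon_d})$ for ${\cal H}_i$, together with the sets $E_{P_i}(N_G^i[v_j])$.

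The crux is stage (iii): extending $P_i$ to a solution for all of ${\cal N}_i(G)$, i.e.\ computing $E_{P_i}(N_G^i[u])$ for every unselected $u\in V_j\setminus\{v_j\}$. The obstacle is that $N_G^i[u]$ and $N_G^i[v_j]$ may each contain $\Omega(n)$ vertices while their symmetric difference $D$ has only $\tilde{\cal O}(n^{1-\varepsilon_d})$ of them, so $D$ must be extracted without scanning either ball entrywise. My plan is to use $P_{i-1}$ as a common coordinate system: since $N_G^i[x]=\bigcup_{w\in N_G[x]}N_G^{i-1}[w]$ and each $N_G^{i-1}[w]$ is, by the induction hypothesis, a union of $\tilde{\cal O}(n^{1-\varepsilon_d})$ intervals of $P_{i-1}$, I obtain $N_G^i[u]$ (resp.\ $N_G^i[v_j]$, computed once per group) as a sorted union of $\tilde{\cal O}(deg_G(u)\,n^{1-\varepsilon_d})$ (resp.\ $\tilde{\cal O}(deg_G(v_j)\,n^{1-\varepsilon_d})$) intervals of $P_{i-1}$, in proportional time. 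A single merge of these two sorted lists yields $D=N_G^i[u]\,\Delta\,N_G^i[v_j]$ as intervals of $P_{i-1}$; as $u$ and $v_j$ share their trace on $S$ we have $|D|=\tilde{\cal O}(n^{1-\varepsilon_d})$, so the vertices of $D$ can be listed in $\tilde{\cal O}(n^{1-\varepsilon_d})$ time. Finally, since an edge of $P_i$ is stabbed by $N_G^i[u]$ if and only if it is stabbed by exactly one of the vertex sets $N_G^i[v_j]$ and $D$, I get $E_{P_i}(N_G^i[u])=E_{P_i}(N_G^i[v_j])\,\Delta\,E_{P_i}(D)$ with $|E_{P_i}(D)|\le 2|D|$, in $\tilde{\cal O}(n^{1-\varepsilon_d})$ further time. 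Summing over all unselected $u$: the $deg_G(u)$ terms total $\tilde{\cal O}(mn^{1-\varepsilon_d})$ since $\sum_u deg_G(u)=2m$; the $deg_G(v_j)$ terms total $\tilde{\cal O}\!\left(n^{1-\varepsilon_d}\sum_j|V_j|\,deg_G(v_j)\right)=\tilde{\cal O}(mn^{1-\varepsilon_d})$ \emph{precisely because} $v_j$ was chosen of minimum degree, so that $|V_j|\,deg_G(v_j)\le\sum_{v\in V_j}deg_G(v)$; and the remaining $\tilde{\cal O}(n^{1-\varepsilon_d})$-per-vertex terms total $\tilde{\cal O}(n^{2-\varepsilon_d})=\tilde{\cal O}(mn^{1-\varepsilon_d})$. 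In particular $|E_{P_i}(N_G^i[v])|=\tilde{\cal O}(n^{1-\varepsilon_d})$ for all $v$, so $P_i$ has stabbing number $\tilde{\cal O}(n^{1-\varepsilon_d})$ for ${\cal N}_i(G)$, and step $i$ costs $\tilde{\cal O}(mn^{1-\varepsilon_d})$ on top of computing $P_{i-1}$; induction then gives the $\tilde{\cal O}(i\cdot mn^{1-\varepsilon_d})$ bound, and applying the claim with $i=k-1$ and Lemma~\ref{lem:reduction} with $f:d\mapsto 1/\varepsilon_d$ finishes Theorem~\ref{thm:boundedDiam}. The one genuine difficulty, as indicated, is stage (iii): it forces the auxiliary path $P_{i-1}$ (hence the whole cascade of $k-1$ hypergraphs, which is where the factor $k$ comes from) in order to represent the possibly huge balls compactly, and it requires ruling out a pathological heavy group -- which is exactly what the minimum-degree choice of representatives and the bound $\sum_j|V_j|\,deg_G(v_j)\le 2m$ achieve.
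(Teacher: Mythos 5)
Your proof is correct and follows the same overall scheme as the paper's (the $\varepsilon$-net/Sauer--Shelah partition of $V$ into $\tilde{\cal O}(n^{\varepsilon_d d})$ trace-classes, one representative per class, Theorem~\ref{thm:stabbingNb} applied to the reduced hypergraph ${\cal H}_i$, and then patching in the unselected vertices via their small symmetric difference with the representative, using the level-$(i-1)$ path as a compact coordinate system -- this is exactly the paper's Subclaims~\ref{subclaim:subopt} and~\ref{subclaim:apply-eps-net}). Where you genuinely diverge is in how the symmetric difference $D=N_G^i[u]\,\Delta\,N_G^i[v_j]$ is extracted. The paper turns $P_{i-1}$ into a suboptimal representation of every $N_G^i[u]$ and then builds, once per group, two balanced binary search trees over the positions of $N_G^i[v_j]$ and of its complement; each interval of $I_i(u)$ (and of $\overline{I_i(u)}$) is then queried output-sensitively, so the per-vertex cost is $\tilde{\cal O}(|E_{P_i'}(N_G^i[u])|+n^{1-\varepsilon_d})$ and never sees $deg_G(v_j)$. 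You instead merge the canonical interval lists of $N_G^i[u]$ and $N_G^i[v_j]$ on $P_{i-1}$ directly, which does incur a $\tilde{\cal O}(deg_G(v_j)\,n^{1-\varepsilon_d})$ term for every unselected $u\in V_j$; your minimum-degree choice of representative, giving $\sum_j |V_j|\,deg_G(v_j)\le 2m$, is a clean and valid way to neutralize this (the paper needs no such trick because of the BSTs), and your identity $E_{P_i}(A\,\Delta\,B)=E_{P_i}(A)\,\Delta\,E_{P_i}(B)$ correctly replaces the paper's direct recomputation of the stabbed edges from $D$. Your other deviations are harmless or mild improvements: you fix a single sample $S$ for all radii (Lemma~\ref{lem:partition-lemma} indeed covers all $\ell$ simultaneously, so the failure probability does not accumulate over the $k-1$ levels), and you make explicit the strengthening -- also implicit in the paper -- that the induction must carry the stabbed-edge sets $E_{P_i}(N_G^i[v])$ and not just the path. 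All the running-time estimates check out ($(d+1)\varepsilon_d<1$, $n^{2-\varepsilon_d}={\cal O}(mn^{1-\varepsilon_d})$ for connected $G$), so both routes give $\tilde{\cal O}(mn^{1-\varepsilon_d})$ per level and $\tilde{\cal O}(i\cdot mn^{1-\varepsilon_d})$ overall.
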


The proof of Theorem~\ref{thm:boundedDiam} will follow from this claim and Lemma~\ref{lem:reduction} by taking $i = k-1$. 

\begin{proofclaim}
By Theorem~\ref{thm:stabbingNb}, the claim is true for the base case $i=1$. 
Assume by our induction hypothesis that the claim holds for $i-1$.
We divide the remainder of the proof into two subclaims.

\begin{subclaim}\label{subclaim:subopt}
Let $P_{i-1}$ be a spanning path of stabbing number $t$ for ${\cal N}_{i-1}(G)$.
We can transform $P_{i-1}$ into a spanning path $P_i'$ for ${\cal N}_i(G)$, such that $\sum_{v \in V} |E_{P_i'}(N_G^i[v])| = {\cal O}(tm)$.
Moreover, the transformation takes time ${\cal O}(tm)$.
\end{subclaim}

\begin{proofsubclaim}
Let $u \in V$.
Then in time ${\cal O}(deg_G(u) \cdot t)$, we can collect the edge-sets $E_{P_{i-1}}(N_G^{i-1}[w])$ of all the edges of $P_{i-1}$ that are stabbed by $w$, for $w \in N_G[u]$. We compute from these edge-sets a (suboptimal) representation of $N_G^i[u]$ into ${\cal O}(deg_G(u) \cdot t)$ intervals of $P_{i-1}$.
\end{proofsubclaim}

\begin{subclaim}\label{subclaim:apply-eps-net}
Let $P_i'$ be a spanning path for ${\cal N}_i(G)$, such that $\sum_{v \in V} |E_{P_i'}(N_G^i[v])| = {\cal O}(tm)$.
Then, in time $\tilde{\cal O}((n^{1 - \varepsilon_d} + t)\cdot m)$, we can compute a spanning path $P_i$ of stabbing number $\tilde{\cal O}(n^{1-\varepsilon_d})$.
\end{subclaim}

\begin{proofsubclaim}
Let $\varepsilon := \Theta(n^{-\varepsilon_d})$.
We perform a breadth-first search from every vertex in some random subset $S$ of cardinality $\tilde{\cal O}(d/\varepsilon) = \tilde{\cal O}(n^{\varepsilon_d})$.
In doing so we define an equivalence relation $\sim$ on $V$ such that $u \sim v \Longleftrightarrow^{def} N_G^i[u] \cap S = N_G^i[v] \cap S$.
We so partition $V$ into some groups $V_1, V_2, \ldots, V_p$.
Since by the hypothesis $G$ has distance VC-dimension at most $d$ then, by Lemma~\ref{lem:sauer-lemma} we have $p = {\cal O}(|S|^d) = \tilde{\cal O}(n^{\varepsilon_dd})$. 
Furthermore by Lemma~\ref{lem:partition-lemma}, we have w.h.p. $u \sim v \Longrightarrow |\ N_G^{i}[u] \Delta N_G^{i}[v] \ | = \tilde{\cal O}(\varepsilon n) = \tilde{\cal O}(n^{1-\varepsilon_d})$.
The algorithm now proceeds as follows:
\begin{enumerate}
\item For every $1 \leq j \leq p$, we select a unique $v_j \in V_j$, and then we start a breadth-first search from this vertex.  Since $p = \tilde{\cal O}(n^{\varepsilon_dd})$ and we have $\varepsilon_d \ll 1/d$, this phase can be implemented in time $\tilde{\cal O}(mn^{\varepsilon_d d}) = \tilde{o}(mn^{1-\varepsilon_d})$, that is truly subquadratic.
\item Let $R_i := \{ N_G^i[v_j] \mid 1 \leq j \leq p \}$, and let ${\cal H}_i := (V,R_i)$.
Note that since ${\cal H}_i \subseteq {\cal B}(G)$, the VC-dimension of ${\cal H}_i$ is at most $d$.
Furthermore, the order and size of ${\cal H}_i$ are, respectively, $n$ and $m_i := {\cal O}(pn) = \tilde{\cal O}(n^{1+\varepsilon_d d})$.
By Theorem~\ref{thm:stabbingNb}, we can compute a spanning path $P_i$ for ${\cal H}_i$ of stabbing number $\tilde{\cal O}(n^{1-\varepsilon_d})$ in time $\tilde{\cal O}(m_i + n^{2-\varepsilon_d}) = \tilde{\cal O}(n^{1+\varepsilon_d d} + n^{2-\varepsilon_d}) = \tilde{\cal O}(n^{1-\varepsilon_d}m)$.
\item We observe that $P_i$ is a spanning path of ${\cal N}_i(G)$ of stabbing number:
$$\tilde{\cal O}(n^{1-\varepsilon_d}) + \max\limits_{1 \leq j \leq p}\max_{u \in V_j \setminus \{v_j\}} | \ N_G^i[u] \Delta N_G^i[v_j] \ | =  \tilde{\cal O}(n^{1-\varepsilon_d}).$$
We are now left with computing, for every $1 \leq j \leq p $ and $u \in V_j \setminus \{v_j\}$, the set $E_{P_i}(N_G^i[u])$ of all the edges stabbed by the ball of radius $i$ centered at $u$.
For that, since we are already given $E_{P_i}(N_G^i[v_j])$, it suffices to compute $N_G^i[u] \Delta N_G^i[v_j]$.
We proceed in three steps:
\begin{itemize}
\item By our hypothesis, we computed a spanning path $P_i'$ for ${\cal N}_{i}(G)$, such that $\sum_{u \in V} |E(N_G^i[u])| = {\cal O}(tm)$.
Then, we can compute from $P_i'$ a (suboptimal) representation $I_i(u)$ of $N_G^i[u]$ into ${\cal O}(|E_{P_i'}(N_G^i[u])|)$ intervals. 
In doing so, we also compute within the same amount of time a representation $\overline{I_i(u)}$ of $V \setminus N_G^i[u]$ into ${\cal O}(|E_{P_i'}(N_G^i[u])|)$ intervals of $P_{i}'$. 
Overall this step takes total time $\tilde{\cal O}(tm)$.
\item Let $\sigma_i : V \to V(P_{i}')$ be the permutation that maps every vertex to its position in the spanning path $P_{i}'$. For every $1 \leq j \leq p$, we construct two balanced binary search trees whose items are, respectively, $\{ \sigma_{i}(x) \mid x \in N_G^i[v_j]  \}$ and $\{ \sigma_{i}(y) \mid y \notin N_G^i[v_j] \}$.
Overall, this takes total time $\tilde{\cal O}(np) = \tilde{\cal O}(n^{1+\varepsilon_d d}) = \tilde{o}(mn^{1-\varepsilon_d})$. 
\item Finally, let us again consider some $u \in V_j \setminus \{v_j\}$ for some $j$. For every interval from $I_i(u)$, we want to enumerate the vertices of $V \setminus N_G^i[v_j]$ that lie on this interval. Since we stored all of $V \setminus N_G^i[v_j]$ into a balanced binary search tree, this can be done in time ${\cal O}(\log{n})$ plus ${\cal O}(1)$ extra time per solution.
In the same way, for every interval from $\overline{I_i(u)}$, we enumerate the vertices of $N_G^i[v_j]$ that lie on this interval. For a fixed $u$, the total time for this step is in $\tilde{\cal O}( \ |I_i(u)| + |\overline{I_i(u)}| + |N_G^i[u]\Delta N_G^i[v_j]| \ ) = \tilde{\cal O}( |E_{P_i'}(N_G^i[u])| + n^{1 - \varepsilon_d})$. 
Therefore, this last step takes total time $\tilde{\cal O}(tm + n^{2-\varepsilon_d})$.
\end{itemize}
\end{enumerate}
\end{proofsubclaim}

Now, by the induction hypothesis we get a spanning path of stabbing number $\tilde{\cal O}(n^{1-\varepsilon_d})$ for ${\cal N}_{i-1}(G)$.
By Subclaim~\ref{subclaim:subopt} we transform such spanning path into a spanning path $P_i'$ for ${\cal N}_{i}(G)$, where $\sum_{u \in V} |E_{P_i'}(N_G^i[u])| = \tilde{\cal O}(mn^{1-\varepsilon_d})$.
Finally, by Subclaim~\ref{subclaim:apply-eps-net} we can use $P_i'$ in order to compute, in time $\tilde{\cal O}(mn^{1-\varepsilon_d})$, a spanning path $P_i$ of stabbing number $\tilde{\cal O}(n^{1-\varepsilon_d})$.
The above algorithm achieves proving that our claim holds for $i$.
\end{proofclaim}

Summarizing, by Claim~\ref{claim:+1} we can compute a spanning path of stabbing number $\tilde{\cal O}(n^{1-\varepsilon_d})$ for the hypergraph ${\cal N}_{k-1}(G)$, in time $\tilde{\cal O}(k \cdot mn^{1-\varepsilon_d})$.
By Lemma~\ref{lem:reduction} it implies that we can also decide whether $G$ has diameter at most $k$, and if so compute $diam(G)$ exactly, in time $\tilde{\cal O}(k \cdot mn^{1-\varepsilon_d})$.
\end{proofof}

\subsection{Application to nowhere dense graph classes}\label{sec:no-dense}
A closer look at the proof of Theorem~\ref{thm:boundedDiam} shows that it also holds if, instead of having bounded distance VC-dimension, there rather exists some constant $d$ such that, for every $1 \leq i \leq k-1$, the VC-dimension of the $i$-neighbourhood hypergraph is at most $d$ (the latter value is sometimes called the distance-$i$ VC-dimension of the graph~\cite{NeO16}).
It has algorithmic implications for some special cases of sparse graphs.
Namely, $H$ is an {\em $r$-shallow minor} of a graph $G$ if it can be obtained from some subgraph of $G$ by the contraction of pairwise disjoint subgraphs of radius at most $r$~\cite{PRS94}; a graph family ${\cal G}$ is termed {\em nowhere dense} if, for any $r$, there exists a graph $H_r$ which is not an $r$-shallow minor for any graph in ${\cal G}$~\cite{NeO12}. 
Of interest here is that, for any graph class ${\cal G}$ nowhere dense, and for any $i$, the distance-$i$ VC-dimension of any graph in ${\cal G}$ is upper-bounded by some constant $d_i$~\cite{NeO16}.
By choosing $d := \max_{1 \leq i \leq k-1} d_i$, we thus obtain the following weaker version of Theorem~\ref{thm:boundedDiam} for nowhere dense graphs:

\nowhereDense*

We left open whether there exists a truly subquadratic-time {\em FPT} algorithm for diameter computation on nowhere dense graph classes (i.e., with no dependency on $k$ in the exponent).

\section{Diameter computation in truly Subquadratic time}\label{sec:refinements}

We finally improve the results of Theorem~\ref{thm:boundedDiam} for a more restricted family of graphs of bounded distance VC-dimension.
Before that, we need to introduce a bit more of graph terminology. 
A class of graphs is called {\em monotone} if it is closed by taking subgraphs.
For a connected $n$-vertex graph $G$, a {\em separator} is a subset $S$ such that $G \setminus S$ is disconnected. 
It is called {\em balanced} if every connected component of $G \setminus S$ has order at most $2n/3$.
Finally, a class of graphs has {\em strongly sublinear} balanced separator if every connected $n$-vertex graph in the class has a balanced separator of cardinality at most $C \cdot n^{\alpha}$ for some constants $C$ and $\alpha < 1$.

\subLin*

We postpone the technical proof of this result to Sec.~\ref{sec:proof-sublin}.
Let us emphasize that Theorem~\ref{thm:subLin} cannot be applied to {\em all} graph classes of bounded distance VC-dimension.
For instance, we proved in Lemma~\ref{lem:interval} that the intervals graphs have distance VC-dimension at most two.
However, there exist intervals graphs with no balanced separators of sublinear size. 
We give some interesting cases where Theorem~\ref{thm:subLin} {\em does} apply in Sec.~\ref{sec:applications-sublin}.

\medskip
Finally, we say that a class of graphs ${\cal G}$ has {\em polynomial expansion} if there exists a polynomial $p$ such that, for every $r$-shallow minor of a graph in ${\cal G}$ (cf. Section~\ref{sec:no-dense}), the average degree is at most $p(r)$. 
We want to stress that there is an equivalence between the monotone classes of graphs ${\cal G}$ with strongly sublinear balanced separators and those of polynomial expansion~\cite{DvN16}.
In particular, the graphs in ${\cal G}$ have bounded degeneracy, and so, they are sparse (i.e., with $m = {\cal O}(n)$ edges).
We will often use this property in what follows.  

\subsection{Application to $H$-minor free graphs}\label{sec:applications-sublin}

Let us now review some interesting classes where Theorem~\ref{thm:subLin} does apply.
Since planar graphs have distance VC-dimension at most four~\cite{BoT15} then, it follows from the planar separator theorem of Lipton and Tarjan~\cite{LiT79} that it is the case for planar graphs.
Therefore, Theorem~\ref{thm:subLin} gives us a new subquadratic-time algorithm for diameter computation on {\em unweighted} planar graphs, but with a slower running-time than for the algorithms presented in~\cite{Cab18,GKHM+18}. 
More generally, the following separator theorem is from Alon et al.:

\begin{lemma}[\cite{AST90}]\label{lem:h-minor}
Every $K_h$-minor free graph has a balanced separator of cardinality ${\cal O}(h^{3/2}\sqrt{n})$.
Moreover, such a separator can be found in ${\cal O}(n^{3/2})$ time.
\end{lemma}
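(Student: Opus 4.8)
The plan is to reduce the lemma to the following sharper dichotomy and then prove that: for every connected $n$-vertex graph $G$, \emph{either} $G$ has a balanced separator of size at most $h^{3/2}\sqrt n$, \emph{or} $G$ contains $K_h$ as a minor. Since a $K_h$-minor free graph cannot satisfy the second alternative, it must satisfy the first. To make an induction go through I would actually prove the weighted form (each vertex carries a nonnegative weight summing to $1$, and each component of $G-X$ is required to have weight at most $2/3$), the unweighted statement being the uniform special case; the induction is on $h$, with the trivial base case $h\le 2$ (a $K_2$-minor free connected graph is a single vertex) and the degenerate case of small $n$ handled by taking $X=V(G)$, which is legitimate since $|V(G)|\le h^{3/2}\sqrt{n}$ once $n\le h^{3}$.

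The engine of the proof is breadth-first search. Run a BFS from an arbitrary vertex, producing layers $L_0,L_1,\dots,L_q$, where $L_i$ is the set of vertices at distance exactly $i$ from the root. Two elementary facts drive everything: (i) deleting any \emph{band} $L_a\cup L_{a+1}\cup\cdots\cup L_b$ of consecutive layers separates the vertices below it from those above it, so every layer and every band is a separator; and (ii) in the BFS tree the parent of a vertex in $L_i$ lies in $L_{i-1}$, hence $L_{<a}$ is connected, $L_{\ge a}$ is connected, and every vertex above layer $a$ reaches $L_a$ along a tree path. Let $L_m$ be the \emph{median layer}: the unique index whose layers carry weight strictly below $1/2$ up to $L_{m-1}$ and weight at least $1/2$ up to $L_m$. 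By (i), any band containing $L_m$ is automatically balanced, so if $|L_m|\le h^{3/2}\sqrt n$ we simply output $X=L_m$ and are done.

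The remaining case is that the layers are so \emph{concentrated} that $L_m$ (and every narrow band through it) is large; the claim is that this forces a $K_h$ minor. Following \cite{AST90}, one exploits the large layer $L_m$ as a big separator: contract one BFS side to a single vertex, which by (ii) becomes adjacent to all of $L_m$, and recurse on the contracted graph, which is still $K_h$-minor free but in which one ``clique slot'' has effectively been consumed, so the recursion runs with parameter $h-1$. Each such contraction contributes one more connected branch set joined to all the previous ones via BFS-tree columns inside the successive large layers; hence after at most $h-1$ steps the recursion must instead terminate through a genuinely small median layer, for otherwise the $h$ accumulated branch sets realize $K_h$, a contradiction. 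Reassembling, the separator is a union of $O(h)$ layers that were certified small, and balancing the number of candidate bands ($\Theta(\sqrt{n/h})$) against the size threshold forced on each unavoidable band makes the per-level threshold $\Theta(\sqrt{hn})$, so the total is $O(h\cdot\sqrt{hn})=O(h^{3/2}\sqrt n)$.

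I expect the genuine obstacle to be precisely this last induction: organizing the contractions so that the branch sets are simultaneously connected, pairwise vertex-disjoint, \emph{and} pairwise adjacent, and bookkeeping tightly enough that the bound lands on the $h^{3/2}$ factor rather than a larger polynomial in $h$ (a careless routing of the $\binom{h}{2}$ required adjacencies through the large layers loses an extra factor of $h$); here I would invoke the detailed argument of \cite{AST90} rather than re-derive the constants. For the algorithmic claim, a BFS and the scan of the prefix sums of layer weights cost $O(n+m)$, and a $K_h$-minor free graph is sparse ($m=O(h\sqrt{\log h}\,n)$), so one BFS pass is $O_h(n)$; the balanced separator is then produced by peeling, each round deleting a separator of size $O(h^{3/2}\sqrt n)$ from a piece of size at most $n$ and recursing on the strictly smaller pieces, giving $O(n)$ work per round over $O(\sqrt n)$ rounds, i.e.\ the stated $O(n^{3/2})$ bound of \cite{AST90}.
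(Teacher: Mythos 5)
Since the paper gives no proof of this lemma at all---it is imported verbatim from Alon, Seymour and Thomas \cite{AST90}---your ultimate deferral to that reference for ``the detailed argument'' is exactly what the paper does, and in that sense the citation is the proof. But judged as a standalone argument, the sketch you wrap around the citation has a genuine gap, located precisely where you suspect it: the step ``contract one BFS side to a single vertex and recurse with parameter $h-1$'' only produces a \emph{nested} family of contracted balls. Each contracted vertex is adjacent to the next large layer, hence to the next contracted piece, but not to all earlier ones; iterating this $h-1$ times therefore builds something path-shaped, not a $K_h$ minor, and no contradiction with $K_h$-minor freeness is obtained. The actual argument of \cite{AST90} is not organized around BFS median layers at all (that framing is closer to Lipton--Tarjan): it iteratively grows up to $h$ pairwise adjacent, vertex-disjoint connected branch sets, where each new branch set is chosen inside a heavy component of the remainder as a small tree---a union of at most $h-1$ short paths, of total size ${\cal O}(\sqrt{hn})$---meeting the neighbourhoods of \emph{all} previously chosen sets; when no such small tree exists, a balanced separator of the desired size falls out directly. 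This is also where the exponent comes from: at most $h$ branch sets of size ${\cal O}(\sqrt{hn})$ each give ${\cal O}(h^{3/2}\sqrt{n})$, rather than your accounting of ``$\Theta(\sqrt{n/h})$ candidate bands against a per-band threshold,'' which does not correspond to a step of the real proof.

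Two smaller points. First, your base/degenerate case $X=V(G)$ when $n\le h^3$ is fine, and the weighted reformulation is indeed how \cite{AST90} state their theorem, so that part of the setup is faithful. Second, the runtime justification is off: finding a \emph{single} balanced separator does not involve peeling over ${\cal O}(\sqrt n)$ recursive rounds (that would be the construction of an $r$-division, which is a different statement); the algorithm of \cite{AST90} finds one separator in ${\cal O}(h^{1/2}n^{1/2}m)$ time, which on $K_h$-minor-free (hence sparse) graphs is the stated ${\cal O}(n^{3/2})$ for fixed $h$. So the figure you land on is right, but for the wrong reason.
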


See also~\cite{KaR10,Wul11} for various trade-offs between the size of the separator and the time that is needed in order to find it. 
We recall that $K_h$-minor free graphs have distance VC-dimension at most $h-1$~\cite{BoT15}.
By combining this result with Lemma~\ref{lem:h-minor}, we so prove the following meta-theorem:

\begin{corollary}\label{cor:h-minor}
The diameter of a $H$-minor free graph can be computed in time $\tilde{\cal O}(n^{2-\varepsilon_H})$, with a Monte Carlo algorithm, where $\varepsilon_H \in (0;1)$ is a constant that only depends on $H$.
\end{corollary}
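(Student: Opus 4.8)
The plan is to obtain Corollary~\ref{cor:h-minor} as a direct instantiation of Theorem~\ref{thm:subLin}, so the real work is only to check that the class of $H$-minor free graphs meets all three hypotheses of that theorem, with every relevant constant depending on $H$ alone. Throughout, write $h := |V(H)|$.

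First I would observe that $H$-minor free graphs form a \emph{monotone} class: excluding $H$ as a minor is preserved under taking subgraphs, since a minor of a subgraph of $G$ is a minor of $G$. Second, I would verify strongly sublinear balanced separators. Since $H$ is a subgraph of $K_h$, any graph admitting a $K_h$ minor also admits an $H$ minor; contrapositively, every $H$-minor free graph is $K_h$-minor free, so Lemma~\ref{lem:h-minor} applies and yields a balanced separator of cardinality ${\cal O}(h^{3/2}\sqrt{n})$, i.e.\ of the form $C\cdot n^{\alpha}$ with $\alpha = 1/2 < 1$ and $C = {\cal O}(h^{3/2})$; both the exponent and the constant depend on $H$ only. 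Third, I would recall that $K_h$-minor free graphs have distance VC-dimension at most $h-1$ (Remark~3 in~\cite{CEV07}, see also~\cite{BoT15}), hence every $H$-minor free graph has distance VC-dimension bounded by the constant $d_H := h-1$.

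With these three facts in hand, I would simply apply Theorem~\ref{thm:subLin} to the monotone class ${\cal G}$ of $H$-minor free graphs and to $d := d_H = h-1$. The theorem then provides a Monte Carlo algorithm computing the diameter of any graph in ${\cal G}$ of distance VC-dimension at most $d$ in time $\tilde{\cal O}(n^{2-\varepsilon_{\cal G}(d)})$, where $\varepsilon_{\cal G}(d) \in (0;1)$ depends only on $d$ and on the separator parameters $C,\alpha$ of ${\cal G}$ — which, as just noted, are themselves controlled by $h$. Setting $\varepsilon_H := \varepsilon_{\cal G}(h-1)$ completes the argument. As discussed after Problem~\ref{prob:stabbing-num}, the algorithm need not even be told $H$ or the distance VC-dimension: one may run it with a guessed bound and perform a dichotomic search, at no asymptotic cost. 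There is no genuine obstacle here beyond bookkeeping; the only point that deserves an explicit line of care is the reduction from ``$H$-minor free'' to ``$K_h$-minor free'', which is what lets us invoke the quantitative separator bound of Lemma~\ref{lem:h-minor} and the distance VC-dimension bound of~\cite{CEV07} with constants that are functions of $H$ only, so that $\varepsilon_H$ is well defined as a constant depending on $H$ alone.
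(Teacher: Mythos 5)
Your proposal is correct and matches the paper's own argument: the paper likewise observes that $H$-minor free graphs (being $K_h$-minor free for $h=|V(H)|$) are monotone, have distance VC-dimension at most $h-1$ by Remark~3 of~\cite{CEV07} (see also~\cite{BoT15}), and have strongly sublinear balanced separators by Lemma~\ref{lem:h-minor}, and then invokes Theorem~\ref{thm:subLin}. Your explicit check that $H$-minor freeness implies $K_h$-minor freeness is the same (implicit) reduction the paper uses, so there is nothing to add.
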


For most values of $H$ this is the first known subquadratic-time algorithm for diameter computation on $H$-minor free graphs.
In particular, this is the first known subquadratic algorithm for diameter computation on (unweighted) bounded-genus graphs to the best of our knowledge (see the planar graphs paragraph in the introduction).

\subsection{Proof of Theorem~\ref{thm:subLin}}\label{sec:proof-sublin}

The remainder of this section is devoted to the proof of Theorem~\ref{thm:subLin}.
We start by presenting, in a separate subsection, all the required background on $r$-divisions.

\subsubsection*{Algorithmic aspects of $r$-divisions}

Throughout all this section, let ${\cal G}_{\alpha,C}$ be the class of all the graphs $G$ such that, for every connected $h$-vertex subgraph of $G$, there exists a balanced separator of order at most $C \cdot h^{\alpha}$. 
The following intermediate result is built upon a previous algorithm from Plotkin et al.~\cite{PRS94}.

\begin{lemma}[\cite{Dvo18}]\label{lem:comput-sublin}
For every $n$-vertex $m$-edge graph $G \in {\cal G}_{\alpha,C}$, we can find a balanced separator of order ${\cal O}(n^{\frac{4+\alpha}{5}})$ in time ${\cal O}(mn^{\frac{4+\alpha}{5}}) = {\cal O}(n^{2-\frac{1-\alpha}5})$.
\end{lemma}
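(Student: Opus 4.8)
The plan is to follow the Plotkin--Rao--Smith paradigm~\cite{PRS94}: extract a balanced separator directly from a breadth-first layering of $G$, accepting a polynomially larger separator in exchange for a running time that never forces us to actually compute an optimal (size-$O(n^{\alpha})$) separator, and then to bootstrap this idea across a constant number of scales in the spirit of~\cite{Dvo18}. Throughout I would use that a monotone class with strongly sublinear balanced separators has polynomial expansion, hence bounded degeneracy, so that $m = O(n)$ and the two forms $O(mn^{(4+\alpha)/5}) = O(n^{2-(1-\alpha)/5})$ of the claimed bound coincide.

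The easy half of the argument is the \emph{wide} case. Run a BFS from an arbitrary root and let $L_0,\dots,L_D$ be its layers; since BFS edges join only equal or consecutive layers, deleting a single layer $L_k$ detaches $L_0\cup\dots\cup L_{k-1}$ from $L_{k+1}\cup\dots\cup L_D$, and if $k$ is chosen so that $L_k$ contains a vertex whose rank in BFS order lies between $\lceil n/3\rceil$ and $\lceil 2n/3\rceil$, both sides have at most $2n/3$ vertices. There are, say, $\ell$ such "middle" layers and together they contain at most $n$ vertices, so the lightest of them has size at most $n/\ell$; hence if $\ell \ge n^{1-\beta}$, with $\beta := \tfrac{4+\alpha}{5}$, we obtain in $O(m)$ time a balanced separator of size $O(n^{\beta})$ and we are done.

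The remaining, \emph{narrow}, case is where the middle third of the vertices is squeezed into fewer than $n^{1-\beta}$ layers, each necessarily of size $> n^{\beta}$. Here one must recurse inside this narrow heavy band of layers --- suitably "closed up" with the parts of $G$ lying below and above it so that a separator of the recursive instance genuinely disconnects $G$ and not merely a subgraph --- while also recursively separating those below/above parts, whose vertex counts have strictly dropped. The accounting that must be made to work is: (i) the union of all layers and separators deleted over all recursion levels stays of size $O(n^{\beta})$, having spent at most $\approx n^{\beta}$ per level by an averaging argument over $n^{1-\beta}$ shifts of the layering; (ii) the recursion bottoms out after few enough levels, which is driven by the BFS depth of the heavy band shrinking polynomially from one level to the next; and (iii) the base case is small enough that computing there an exact $O(\cdot^{\alpha})$-size separator (by a slower method) costs only $O(mn^{\beta})$ overall. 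Balancing the per-level deletion against the recursion depth and the base-case cost is exactly what pins the exponent to $\beta = \tfrac{4+\alpha}{5}$. I expect the main obstacle to be precisely this narrow-band recursion: making it terminate while keeping the accumulated separator strongly sublinear, and guaranteeing at every level that the set returned cuts all of $G$ rather than only the recursively treated piece --- the BFS-layer deletions compose cleanly with recursive separators for that last point, but controlling their total size simultaneously is the delicate step.
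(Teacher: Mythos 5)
Your wide case is fine, but the entire content of the lemma lies in the narrow case, and there you only list the properties (i)--(iii) that ``must be made to work'' without establishing any of them -- and two of them are genuinely problematic. First, your recursion has no progress measure on the number of vertices: in the narrow case the heavy band still contains at least $n/3$ vertices (it covers the whole middle third of the BFS order), and a shrinking BFS depth does not by itself drive the instance toward a small base case, so neither termination after few levels nor the claimed ${\cal O}(n^{\beta})$ bound on the union of all deleted layers and separators is substantiated; the ``averaging over $n^{1-\beta}$ shifts of the layering'' is not a defined procedure, and ``closing up'' the band by contracting the parts above and below it may even leave the class, since ${\cal G}_{\alpha,C}$ is only closed under subgraphs, not contractions. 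Second, and more fundamentally, your base case asks to compute an exact ${\cal O}(s^{\alpha})$-size balanced separator of an $s$-vertex graph ``by a slower method'', but membership in ${\cal G}_{\alpha,C}$ is a pure existence guarantee: no polynomial-time procedure for finding such a separator is available (exhaustive search costs $s^{\Theta(s^{\alpha})}$), and avoiding precisely this is why the lemma settles for the exponent $\frac{4+\alpha}{5}$ instead of $\alpha$.

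The proof in the cited reference (Dvo\v{r}\'ak, building on Plotkin--Rao--Smith) works differently and never needs to construct an $\alpha$-exponent separator. The PRS procedure, run with suitable parameters $l$ and $h$, outputs in time ${\cal O}(mn/l)$ either a balanced separator of size ${\cal O}(n/l + lh^2\log n)$ or a $K_h$-minor of depth ${\cal O}(l\log n)$. Because every graph in ${\cal G}_{\alpha,C}$ has polynomial expansion (strongly sublinear separators imply a polynomial bound on the density of shallow minors), the shallow-clique-minor outcome is impossible once $h$ exceeds the corresponding density bound, so the algorithm is forced to return the separator; balancing $n/l$ against $lh^2\log n$, with $h$ chosen just above the density bound determined by $\alpha$, is what produces the exponent $\frac{4+\alpha}{5}$ and the running time ${\cal O}(mn^{\frac{4+\alpha}{5}})$. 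In other words, the structural hypothesis enters only to exclude one branch of an algorithm that is efficient on all graphs, not -- as in your sketch -- as a separator-finding subroutine one must actually execute. As written, the narrow-band recursion is a gap, not a proof.
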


We will also use the following simple result:
\begin{lemma}\label{lem:balanced-bipartition}
Let $G$ be a graph and $S$ a balanced separator.
We can bipartition the connected components of $G \setminus S$ in two disjoint sets $A$ and $B$ such that $\min\{ |A|, |B| \} \leq 2n/3$.
\end{lemma}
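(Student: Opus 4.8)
The plan is to produce the bipartition by a one-pass greedy bucketing of the connected components of $G\setminus S$; it is cleanest to prove the (formally stronger, and more useful) statement that \emph{both} parts have at most $2n/3$ vertices, which of course implies the stated bound on the smaller one. List the components as $C_1,C_2,\dots,C_t$ with $|C_1|\ge|C_2|\ge\cdots\ge|C_t|$ and set $W:=\sum_{i=1}^t|C_i| = n-|S|\le n$; since $S$ is balanced we have $|C_i|\le 2n/3$ for every $i$. One should first notice why the naive ``add the next component to whichever bucket is currently smaller'' rule is not enough: along such a run the gap $\bigl|\,|A|-|B|\,\bigr|$ never exceeds $|C_1|$ (an easy induction: adding a component $C$ to the smaller side replaces a gap $\delta$ by $\bigl|\,|C|-\delta\,\bigr|\le\max\{|C|,\delta\}$), so it only yields $\max\{|A|,|B|\}\le(W+|C_1|)/2\le 5n/6$. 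The fix is to peel off the biggest component before balancing.

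Concretely I would split on $|C_1|$. If $|C_1|>n/3$, put $A:=\{C_1\}$ and let $B$ be all the other components; then $|A|=|C_1|\le 2n/3$ because $S$ is balanced, and $|B|=W-|C_1|\le n-n/3=2n/3$. If instead $|C_1|\le n/3$ -- so every component has size at most $n/3$ -- add components to $A$ in the chosen order, stopping at the first moment $|A|\ge n/3$ (if $W<n/3$ this never happens and we just take $A$ to be everything, $B$ empty), and place all remaining components in $B$; at the stopping step the last added component has size $\le n/3$ while the running total was $<n/3$, so $|A|<2n/3$, and $|B|=W-|A|\le W-n/3\le 2n/3$. In every case $\max\{|A|,|B|\}\le 2n/3$, and both buckets are nonempty as soon as $G\setminus S$ has more than one component.

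I do not expect a real obstacle -- the lemma is elementary and uses nothing about $G$ beyond the fact that $S$ is balanced -- so the only point to be careful about is the $5n/6$ versus $2n/3$ discrepancy above: a pure two-bucket balancing greedy is genuinely too weak, and one has to single out the large component(s), noting that at most two components can have size exceeding $n/3$. This is exactly the combinatorial core of the familiar fact that the components left by a Lipton--Tarjan-style separator can be $2$-coloured with each colour class of size at most $2n/3$, specialised here to the weaker notion of balanced separator used in this section.
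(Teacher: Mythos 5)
Your proof is correct and is essentially the paper's own argument: an elementary greedy grouping of the components in which a component of size exceeding $n/3$ is isolated on one side (the paper reaches the same dichotomy via a maximal prefix of total size at most $2n/3$, deducing in the bad case that the pivot component has size $>n/3$), with the balancedness of $S$ used only to cap that large component at $2n/3$; like the paper, you in fact prove the stronger bound $\max\{|A|,|B|\}\le 2n/3$, which is what is actually invoked in the $r$-division construction. The only inaccuracy is the closing aside that both buckets are nonempty whenever $G\setminus S$ has more than one component: with your stopping rule two components of size $n/5$ each would leave $B$ empty (e.g.\ when $W\ge n/3$ is first reached only at the last component), but neither the lemma nor its later use requires nonemptiness, so this does not affect correctness.
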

\begin{proof}
Let $C_1,C_2,\ldots,C_k$ be the connected components of $G \setminus S$.
We define $i_0 := \max\{ i \mid |\bigcup_{j < i} C_j| \leq 2n/3 \}$.  
Let $A' := \bigcup_{j < i_0} C_j$ and $B' := \bigcup_{j > i_0} C_j$.
If $|B' \cup C_{i_0}| \leq 2n/3$ then we are done by setting $A := A', \ B := B' \cup C_{i_0}$.
Thus, from now on let us assume that $|B' \cup C_{i_0}| > 2n/3$.
Note that since $S$ is a balanced separator, it implies that $i_0 < k$.
Then, by the very definition of $i_0$ we also have $|A' \cup C_{i_0}| > 2n/3$.
Overall, $|A'| + 2|C_{i_0}| + |B'| >4n/3$.
Since $|A'| + |B'| + |C_{i_0}| < n$, we obtain $|C_{i_0}| > n/3$.
We are done by setting $A := A' \cup B'$ and $B := C_{i_0}$.
\end{proof}

Now, set $\beta := \frac{4+\alpha}{5} < 1$\footnote{More generally, let ${\cal G} \subseteq {\cal G}_{\alpha,C}$. We may choose any parameter $\beta \in [\alpha;1)$ such that for all the graphs in ${\cal G}$ we can compute a balanced separator of size ${\cal O}(n^{\beta})$ in truly subquadratic-time. For instance by Lemma~\ref{lem:h-minor}, if ${\cal G}$ is proper minor-closed then we can set $\beta = \alpha = 1/2$.}.
By Lemma~\ref{lem:comput-sublin}, for every $n$-vertex $m$-edge graph in ${\cal G}_{\alpha,C}$ we can compute a balanced separator of order ${\cal O}(n^{\beta})$ in time ${\cal O}(n^{1+\beta})$.
Following Federickson~\cite{Fed87}, we define an {\em $r$-division} for an $n$-vertex graph $G \in {\cal G}_{\alpha,C}$ as follows:
\begin{itemize}
\item If $n \leq r$ then, we output $G$;
\item Otherwise, let $S$ be a balanced separator of cardinality ${\cal O}(n^{\beta})$. Since $S$ is balanced then, by Lemma~\ref{lem:balanced-bipartition} we can partition the connected components of $G \setminus S$ in two disjoint sets $A$ and $B$ of cardinality $\leq 2n/3$. We end up computing an $r$-division for the induced subgraphs $G[A \cup S]$ and $G[B \cup S]$ separately. Note that since $S$ is a separator, all edges of $G$ are covered by these two subgraphs.
\end{itemize}
Therefore by construction, an $r$-division of a connected graph $G$ is a collection of connected induced subgraphs of order at most $r$ that cover all edges of $G$.
We will use the terminology from~\cite{HaQ17}.
In particular, the subgraphs in an $r$-division are termed {\em clusters}.
A vertex is {\em interior} if it is contained in a unique cluster, otherwise it is a {\em boundary} vertex.
Finally, if the sum of the orders of all the clusters is $n + q$ then, we call $q$ the {\em excess}. 

The following result is essentially a reformulation of~\cite[Lemma 2.2]{HaQ17}.

\begin{lemma}[\cite{HaQ17}]\label{lem:sum-order}
Set $\beta := \frac{4+\alpha}{5}$.
There exists a constant $r_0$ such that, for any $n$-vertex graph $G \in {\cal G}_{\alpha,C}$ and $r \geq r_0$, any $r$-division of $G$ has an excess in ${\cal O}(n/r^{1-\beta})$.
\end{lemma}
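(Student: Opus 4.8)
The plan is to prove Lemma~\ref{lem:sum-order} by analysing the recursion tree of the $r$-division construction described just above, bounding the excess contributed at each node. Recall that at a node corresponding to an $h$-vertex subgraph with $h > r$, we pick a balanced separator $S$ of size ${\cal O}(h^{\beta})$, split the components of $G\setminus S$ into sets $A,B$ of size at most $2h/3$ via Lemma~\ref{lem:balanced-bipartition}, and recurse on $G[A\cup S]$ and $G[B\cup S]$. The crucial observation is that the \emph{excess} is exactly the total number of vertex-copies created beyond the original $n$; each recursion step at a node of size $h$ duplicates the separator $S$ once, so it adds ${\cal O}(h^{\beta})$ to the excess. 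Hence the excess is ${\cal O}\bigl(\sum_{\text{internal nodes } v} (\text{size of } v)^{\beta}\bigr)$, and the whole task reduces to bounding this sum.

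First I would set up the recursion tree: the root has size $n$, each internal node of size $h$ has two children of sizes $h_A + |S|$ and $h_B + |S|$ with $h_A,h_B \le 2h/3$ and $h_A + h_B + |S| \le h$, and leaves have size at most $r$. The natural approach is to organize internal nodes by ``levels'' according to their size: level $t$ consists of nodes whose size lies in $(n(2/3)^{t+1},\, n(2/3)^{t}]$. Since each step shrinks the larger part by a constant factor $2/3$ (the added separator $|S| = {\cal O}(h^{\beta}) = o(h)$ is absorbed for $h \ge r_0$ large enough, which is exactly where the hypothesis $r \ge r_0$ enters), a node at level $t$ has all its ancestors at strictly smaller levels, so the nodes on any given level are vertex-disjoint except for the duplicated separators. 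The standard accounting — as in the proof of~\cite[Lemma 2.2]{HaQ17} — shows that the total size summed over all nodes at a fixed level $t$ is ${\cal O}(n)$ (the duplications only inflate this by a constant factor because separators are sublinear), and there are only ${\cal O}(\log(n/r))$ levels before every branch reaches size $\le r$.

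Next I would bound the per-level contribution to the excess. On level $t$, write the node sizes as $h_1,\dots,h_s$ with $\sum_i h_i = {\cal O}(n)$ and each $h_i > r$. By concavity of $x \mapsto x^{\beta}$ (equivalently, since $x^{\beta} = x \cdot x^{\beta-1} \le x \cdot r^{\beta-1}$ using $h_i > r$ and $\beta < 1$), we get $\sum_i h_i^{\beta} \le r^{\beta-1}\sum_i h_i = {\cal O}(n/r^{1-\beta})$. Crucially this bound is the \emph{same} for every level, and summing over the ${\cal O}(\log(n/r))$ levels gives excess ${\cal O}\bigl(\tfrac{n}{r^{1-\beta}}\log(n/r)\bigr)$. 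To remove the logarithmic factor and match the clean statement ${\cal O}(n/r^{1-\beta})$, I would instead sum geometrically: the total \emph{size} over all internal nodes, grouped by level, is $\sum_{t\ge 0} {\cal O}(n) $ over ${\cal O}(\log(n/r))$ levels — but a sharper per-level bound uses that level-$t$ nodes have size at most $n(2/3)^t$, hence $\sum_{\text{level }t} h_i^{\beta} \le (\text{number of level-}t\text{ nodes})\cdot (n(2/3)^t)^{\beta}$ and the node count at level $t$ is ${\cal O}((3/2)^t)$ up to the separator inflation; this yields a geometric series in $(3/2)^t (2/3)^{t\beta} = (2/3)^{-t(1-\beta)}$, which seems to diverge, so in fact the right move is the first accounting (total size ${\cal O}(n)$ per level, bound $h_i^{\beta} \le h_i r^{\beta-1}$) combined with the observation that a branch terminates as soon as it hits size $r$, so only ${\cal O}(\log(n/r))$ levels carry mass; absorbing this logarithm into the ${\cal O}(\cdot)$ by a slightly larger choice of $r_0$, or simply noting $\log(n/r) = {\cal O}(r^{\epsilon})$ is too lossy — instead I would cite that~\cite{HaQ17} already does this bookkeeping and gives excess ${\cal O}(n/r^{1-\beta})$ directly, with the separators duplicated at each internal node.

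The main obstacle, then, is the precise per-level accounting that avoids a spurious $\log(n/r)$ factor: one must argue that the contributions $\sum h_i^{\beta}$ over successive levels form a convergent geometric series, not ${\cal O}(\log(n/r))$ equal terms. This is handled by tracking, at each level $t$, not just that node sizes are $\le n(2/3)^t$ but that their \emph{total} size is $\le n \cdot (1 + o(1))^t \le 2n$ (the $(1+o(1))$ coming from separator duplication being sublinear, so the inflation per level is $1 + {\cal O}(r^{\beta-1})$ and compounds to a constant over ${\cal O}(\log n)$ levels), and then using $h_i^{\beta} \le h_i \cdot (n(2/3)^{t})^{\beta - 1}$ to get a level-$t$ contribution of ${\cal O}\bigl(n \cdot (n(2/3)^{t})^{\beta-1}\bigr) = {\cal O}\bigl(n^{\beta}(2/3)^{t(\beta-1)}\bigr)$, which is geometric with ratio $(2/3)^{\beta-1} > 1$ — dominated by its \emph{last} term, where $(2/3)^{t} \cdot n \asymp r$, giving ${\cal O}(n^{\beta} (n/r)^{1-\beta}) = {\cal O}(n/r^{1-\beta})$. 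That matches the claim, and the requirement $r \ge r_0$ is precisely what guarantees $|S| = {\cal O}(h^\beta)$ stays a small fraction of $h$ throughout, so that the ``$2h/3$'' contraction and the constant total-inflation bound both hold.
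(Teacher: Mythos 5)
Your skeleton is the right one (and note the paper itself gives no proof of this lemma -- it simply cites \cite[Lemma 2.2]{HaQ17}, whose proof is the standard induction on the recursion): the excess is exactly $\sum_{\text{internal nodes}} |S_v| = {\cal O}(\sum_v h_v^{\beta})$, and the geometric-series-by-size-level accounting in your last paragraph does produce the correct exponent, being dominated by the bottom level where sizes are $\asymp r$. The genuine gap is the per-level mass bound you lean on, namely that the total size of the subgraphs at each level stays ${\cal O}(n)$. Your justification -- inflation $1+{\cal O}(r^{\beta-1})$ per level, ``compounds to a constant over ${\cal O}(\log n)$ levels'' -- is quantitatively false in the regime the lemma is stated for: $(1+{\cal O}(r^{\beta-1}))^{\Theta(\log (n/r))} = (n/r)^{{\cal O}(r^{\beta-1})}$, which is a constant only when $r^{1-\beta} = \Omega(\log n)$, whereas the lemma fixes a \emph{constant} $r_0$ and allows any $r \geq r_0$. (It would suffice for the paper's application, where $r = n^{\gamma}$, but not for the statement.) Worse, the quantity you are trying to bound per level -- total current size minus $n$ -- \emph{is} the accumulated excess, so any honest bound on it is essentially the lemma itself; asserting it level by level without an induction is circular.

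The non-circular repair is the recurrence/induction that \cite{HaQ17} actually performs: with $f(h)$ the worst-case excess on $h$ vertices, $f(h) \leq C h^{\beta} + f(h_1) + f(h_2)$ where $h_1 + h_2 \leq h + Ch^{\beta}$, $h_1,h_2 \leq \tfrac{2h}{3} + Ch^{\beta}$, and $f(h)=0$ for $h \leq r$. One proves $f(h) \leq A\left(h/r^{1-\beta} - h^{\beta}\right)$ by induction: the inflation term $A C h^{\beta}/r^{1-\beta}$ and the new separator $Ch^{\beta}$ are absorbed because $h_1^{\beta}+h_2^{\beta} \geq \left((3/4)^{\beta}+(1/4)^{\beta}\right) h^{\beta} = \mu h^{\beta}$ with $\mu > 1$, so it suffices to take $A \geq 2C/(\mu-1)$ and $r_0^{1-\beta} \geq 2C/(\mu-1)$ -- this is exactly where the constant $r_0$ (independent of $n$) comes from, replacing your compounding step. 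If you prefer to keep your level-by-level presentation, you must prove the ${\cal O}(n)$ per-level mass bound by this same induction (e.g., total size at depth $j$ is at most $n + f(n)$, bounded via the recurrence), at which point the level decomposition becomes redundant.
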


Note that in our applications, we will choose $r = n^{\gamma}$ for some $\gamma \in (0;1)$ that only depends on $\beta$ and on the distance VC-dimension.

\smallskip
It is easy to prove that an $r$-division can be computed in polynomial time~\cite{HaQ17}.
Next we use the known connections between strongly sublinear separators and {\em polynomial expansion}~\cite{Dvo18} in order to bound the running-time by some truly subquadratic function.

\begin{corollary}\label{cor:rdivision-runtime}
Set $\beta := \frac{4+\alpha}{5}$.
Then, for any $n$-vertex $m$-edge graph $G \in {\cal G}_{\alpha,C}$, we can compute an $r$-division in time $\tilde{\cal O}(n^{1+\beta})$.
\end{corollary}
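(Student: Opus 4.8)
The plan is to run the recursive $r$-division construction described just above and carefully track the total running time of all its recursive calls. First I would note that the construction has two types of costs at each node of the recursion tree: the cost of computing a balanced separator of size ${\cal O}(h^\beta)$ for an $h$-vertex subgraph, which by Lemma~\ref{lem:comput-sublin} is ${\cal O}(h_e \cdot h^\beta)$ where $h_e$ is its number of edges, plus the ${\cal O}(h)$-time partitioning of the components via Lemma~\ref{lem:balanced-bipartition}; and the cost of passing the two induced subgraphs $G[A \cup S]$ and $G[B \cup S]$ down to the recursive calls, which is linear in their sizes. So the whole analysis reduces to bounding $\sum_{\text{nodes } v} h(v)^{1+\beta}$, summed over all subgraphs $h(v)$ appearing in the recursion.

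Next I would organize the recursion tree by levels, where level $0$ is $G$ itself and each level splits a graph into two pieces each of order at most $2h/3 + {\cal O}(h^\beta)$. Since $\beta < 1$, for $h$ larger than some constant the piece sizes are at most, say, $3h/4$, so the order of the subgraphs shrinks geometrically down each branch and the recursion has depth ${\cal O}(\log n)$. The key quantitative fact is that the excess is well-controlled: by Lemma~\ref{lem:sum-order}, for $r \geq r_0$ any $r$-division has total order $n + {\cal O}(n/r^{1-\beta}) = {\cal O}(n)$, and more to the point one can show (by the same kind of recurrence that proves Lemma~\ref{lem:sum-order}) that the sum of the orders of all subgraphs at any fixed level of the recursion is ${\cal O}(n)$ — the separators added at a level contribute only a lower-order term because of the geometric shrinking combined with $\beta<1$. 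Hence $\sum_{v \text{ at level } i} h(v) = {\cal O}(n)$ for each of the ${\cal O}(\log n)$ levels.

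Then I would bound $\sum_v h(v)^{1+\beta}$ by exploiting convexity/superadditivity of $x \mapsto x^{1+\beta}$ in the easy direction: within a fixed level, $\sum_v h(v)^{1+\beta} \le \big(\sum_v h(v)\big) \cdot \max_v h(v)^{\beta} \le {\cal O}(n) \cdot n^{\beta} = {\cal O}(n^{1+\beta})$, since each subgraph has order at most $n$. Summing over the ${\cal O}(\log n)$ levels gives $\sum_v h(v)^{1+\beta} = \tilde{\cal O}(n^{1+\beta})$, and the same bound (without the $n^\beta$ factor, hence dominated) holds for the $\sum_v h(v)$ contributions from partitioning and from passing subgraphs down. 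Therefore the total running time is $\tilde{\cal O}(n^{1+\beta})$, as claimed. Since $G \in {\cal G}_{\alpha,C}$ has polynomial expansion and thus bounded degeneracy, $m = {\cal O}(n)$, so writing the bound in terms of $n$ alone loses nothing.

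The main obstacle I expect is the level-by-level bound $\sum_{v \text{ at level } i} h(v) = {\cal O}(n)$, i.e.\ controlling how the added separators inflate the subgraph sizes down the recursion. A naive bound would let the excess grow by a factor per level and blow up after ${\cal O}(\log n)$ levels; the point is that a separator of size ${\cal O}(h^\beta)$ is genuinely sublinear in $h$, so the total extra mass introduced across a whole level is ${\cal O}(n \cdot n^{\beta - 1}) = o(n)$ and, iterating, the cumulative inflation stays ${\cal O}(n)$ at every level rather than growing geometrically — this is exactly the content behind Lemma~\ref{lem:sum-order} and I would invoke (a mild strengthening of) it rather than reprove it. Once that is in hand, the convexity step and the geometric-depth argument are routine.
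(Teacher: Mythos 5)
Your proposal is correct and takes essentially the same route as the paper's proof: a level-by-level analysis of the recursion with ${\cal O}(\log n)$ depth (pieces shrink to at most $3h/4$ since the separators are sublinear), where the per-level cost ${\cal O}(n^{1+\beta})$ follows from the ${\cal O}(n)$ bound on the total order/edge count of the intermediate clusters -- obtained from the excess bound of Lemma~\ref{lem:sum-order} together with the ${\cal O}(1)$-degeneracy of graphs in ${\cal G}_{\alpha,C}$ (monotonicity plus polynomial expansion) -- combined with the ${\cal O}(h_e \cdot h^{\beta})$ separator cost of Lemma~\ref{lem:comput-sublin}. The only cosmetic difference is that the paper bounds the number of edges in the clusters directly via degeneracy, whereas you bound vertex orders and invoke degeneracy at the end; the two are equivalent.
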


\begin{proof}
Let us assume that at the initialization step, $n > r$ (otherwise, we are done). 
We claim that it is sufficient to prove that the total number of edges in the final clusters is in ${\cal O}(n)$. 
Indeed, if this is true for the final clusters then, this is also true for the intermediate clusters at any given step of the decomposition.
In particular, every step runs in time ${\cal O}(n^{1+\beta})$.
Furthermore, since we only consider balanced separators of {\em sublinear} cardinality, for every $n$ above some constant the two induced subgraphs constructed have truly sublinear order (say, $\leq 3n/4$).
Therefore it takes ${\cal O}(\log{n})$ steps to decrease the order of all the subgraphs in this collection to less than $r$.
This upper-bound on the number of steps proves, as claimed, that the total running time is in $\tilde{\cal O}(n^{1+\beta})$. 

\smallskip
We are left proving that the total number of edges in the final clusters is indeed in ${\cal O}(n)$.
For that, let us consider any of the clusters $W_i$.
Since ${\cal G}_{\alpha,C}$ is monotone, we have $W_i \in {\cal G}_{\alpha,C}$.
Furthermore, every graph in ${\cal G}_{\alpha,C}$ must be ${\cal O}(1)$-degenerate ({\it e.g.}, see~\cite[Lemma 2 (b)]{Dvo18} where the author proved a stronger result, namely that ${\cal G}_{\alpha,C}$ has polynomial expansion).
It implies that $W_i$ has size ${\cal O}(|V(W_i)|)$.
Overall, if the total excess is $q$ then, the total number of edges in the clusters is in ${\cal O}(n+q)$.
By Lemma~\ref{lem:sum-order} we have $q = {\cal O}(n)$, and so the total number of edges is also in ${\cal O}(n)$.
\end{proof} 

\subsubsection*{Boundary Hypergraphs}

Let $G$ be a graph equipped with some $r$-division, and let $\ell$ be a positive integer.
Roughly, our objective is to use the $r$-division in order to compute, for every vertex, a compact interval representation of its balls of radius $\ell$.
This leads us to the following natural object:

\begin{definition}\label{def:boundary-hypergraph}
Let $\Lambda_r$ be an $r$-division of a graph $G$, and let $\ell$ be a positive integer.
The {\em $\ell$-boundary hypergraph} ${\cal H}_{\ell,G}(\Lambda_r)$ has for vertex-set $V$.
Moreover, for every cluster $W_i \in \Lambda_r$ and for every $u,v \in V(W_i)$, if $v$ is a boundary vertex and $dist_G(u,v) < \ell$, then the ball $N_G^{\ell-dist_G(u,v)}[v]$ is a hyperedge of ${\cal H}_{\ell,G}(\Lambda_r)$.
\end{definition}

To better understand this above construction, let $W_i$ be a cluster, let $u \in V(W_i)$ be internal and let $z \notin V(W_i)$.
Then, since an $r$-division is also an edge-covering, we have $dist_G(u,z) \leq \ell$ if and only if there exists a boundary vertex $v \in V(W_i)$ such that $dist_G(u,v) + dist_G(v,z) \leq \ell$.
Equivalently, we have $z \in N_G^{\ell-dist_G(u,v)}[v]$.

\begin{lemma}\label{lem:boundary-num}
Set $\beta := \frac{4+\alpha}{5}$.
Then, for any $n$-vertex graph $G \in {\cal G}_{\alpha,C}$, and for any $r$-division $\Lambda_r$, the $\ell$-boundary hypergraph ${\cal H}_{\ell,G}(\Lambda_r)$ has ${\cal O}(nr^{\beta})$ hyperedges.
\end{lemma}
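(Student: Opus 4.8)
The plan is to bound the number of hyperedges of ${\cal H}_{\ell,G}(\Lambda_r)$ by counting, for each cluster, how many balls it can contribute, and then summing over all clusters. By Definition~\ref{def:boundary-hypergraph}, a hyperedge arising from a cluster $W_i$ is of the form $N_G^{\ell - dist_G(u,v)}[v]$ for some $u \in V(W_i)$ and some boundary vertex $v \in V(W_i)$. So the total number of hyperedges contributed by $W_i$ is at most the number of choices of the pair $(u,v)$, which is at most $|V(W_i)| \cdot b_i$, where $b_i$ denotes the number of boundary vertices in $W_i$. Since every cluster has order at most $r$, we have $|V(W_i)| \leq r$, so $W_i$ contributes at most $r \cdot b_i$ hyperedges.

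Next I would sum over all clusters. Let $\Lambda_r = \{W_1, \ldots, W_s\}$. The total hyperedge count is at most $\sum_{i} r \cdot b_i = r \cdot \sum_i b_i$. Now $\sum_i b_i$ counts each boundary vertex once per cluster containing it; a vertex contained in $t$ clusters contributes $t$ to this sum, hence contributes $t-1 \geq 1$ to the excess $q$. Actually more simply, $\sum_i |V(W_i)| = n + q$ by definition of the excess $q$, and since each interior vertex is counted exactly once we get $\sum_i b_i \leq \sum_i |V(W_i)| - (\text{number of interior vertices}) + (\text{something})$; the clean bound is $\sum_i b_i \leq 2q$ (a vertex in $t$ clusters contributes $t$ to $\sum_i b_i$ and $t-1$ to $q$, and $t \leq 2(t-1)$ for $t \geq 2$), or even just $\sum_i b_i \leq n + q$ trivially. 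Either way, by Lemma~\ref{lem:sum-order} the excess satisfies $q = {\cal O}(n/r^{1-\beta})$, so $\sum_i b_i = {\cal O}(n + n/r^{1-\beta}) = {\cal O}(n)$. Therefore the number of hyperedges is ${\cal O}(r \cdot n)$, which is already ${\cal O}(nr^{\beta} \cdot r^{1-\beta})$ — this is too weak, so I need the sharper estimate using boundary vertices rather than all vertices.

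The key refinement is that a boundary vertex is counted in $\sum_i b_i$ as many times as the number of clusters containing it, and this multiplicity is absorbed by the excess: indeed, if a vertex lies in $t \geq 2$ clusters it contributes $t$ to $\sum_i b_i$ and $t - 1$ to $q$, and since $t \leq 2(t-1)$ we obtain $\sum_i b_i \leq 2q = {\cal O}(n/r^{1-\beta})$. Hence the number of hyperedges is at most $r \cdot \sum_i b_i = {\cal O}(r \cdot n/r^{1-\beta}) = {\cal O}(n r^{\beta})$, as claimed. The main obstacle is precisely this bookkeeping step: one must resist the temptation to bound $\sum_i b_i$ by $n$ (which only gives ${\cal O}(nr)$) and instead exploit that boundary vertices, by definition, lie in at least two clusters, so their total count over clusters is controlled by the excess via Lemma~\ref{lem:sum-order}. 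Everything else — the per-cluster count of pairs $(u,v)$ and the bound $|V(W_i)| \leq r$ — is immediate from the definitions.

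I should also double-check the degenerate edge cases: clusters of order $\leq r$ that were output without further subdivision still satisfy $|V(W_i)| \leq r$, so the per-cluster bound holds uniformly; and the condition $dist_G(u,v) < \ell$ only restricts the set of valid pairs further, so it does not affect the upper bound. Assembling these gives the $r^{\beta}n$ bound directly.
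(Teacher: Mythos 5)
Your proof is correct and follows essentially the same route as the paper: bound each cluster's contribution by $r\cdot b_i$ pairs $(u,v)$, then observe that $\sum_i b_i$ is at most twice the excess (since a boundary vertex in $t\geq 2$ clusters contributes $t\leq 2(t-1)$), and apply Lemma~\ref{lem:sum-order} to get ${\cal O}(r)\times{\cal O}(n/r^{1-\beta})={\cal O}(nr^{\beta})$. Your justification of the bound $\sum_i b_i\leq 2q$ is in fact spelled out in more detail than in the paper, which simply asserts it.
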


\begin{proof}
For every $W_i \in \Lambda_r$, we create ${\cal O}(r \cdot b_i)$ hyperedges, where $b_i$ denotes the number of boundary vertices in the cluster.
We observe that $\sum_{W_i \in \Lambda_r} b_i$ is at most twice the excess.
Then, by Lemma~\ref{lem:sum-order} we have ${\cal O}(r) \times {\cal O}(n/r^{1-\beta}) = {\cal O}(nr^{\beta})$ hyperedges.
\end{proof}

We stress that by Lemma~\ref{lem:boundary-num}, a boundary hypergraph may have a superlinear number of edges.
Therefore, if we restrict ourselves to subquadratic-time computation, we cannot compute this hypergraph explicitly.
Fortunately, we show next that this is not needed if one just wants to compute for this hypergraph a spanning path of low stabbing number.

\begin{lemma}\label{lem:boundary-spanning-path}
Set $\beta := \frac{4+\alpha}{5}$, and let $G \in {\cal G}_{\alpha,C}$ have distance VC-dimension at most $d$.
Then, there exists a constant $\varepsilon_d \in (0;1)$ that only depends on $d$ and such that, for any $r$-division $\Lambda_r$, the stabbing number of ${\cal H}_{\ell,G}(\Lambda_r)$ is in $\tilde{\cal O}(n^{1-\varepsilon_d})$.
Moreover, we can compute a spanning path reaching this upper bound in deterministic time $\tilde{\cal O}(n^2/r^{1-\beta} + n^{2-\varepsilon_d}r^{\beta})$.  
\end{lemma}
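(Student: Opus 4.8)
\textbf{Proof plan for Lemma~\ref{lem:boundary-spanning-path}.}
The plan is to mimic the three-step template of Theorem~\ref{thm:boundedDiam} (suboptimal representation, $\varepsilon$-net grouping, correction via symmetric differences), but adapted so that we never materialize the whole boundary hypergraph, which by Lemma~\ref{lem:boundary-num} has $\Theta(nr^\beta)$ hyperedges. The key structural observations I would exploit are: (i) the hyperedges of ${\cal H}_{\ell,G}(\Lambda_r)$ have only ${\cal O}(n/r^{1-\beta})$ distinct centers, namely the boundary vertices (summing the $b_i$ is at most twice the excess, which is ${\cal O}(n/r^{1-\beta})$ by Lemma~\ref{lem:sum-order}); and (ii) for a fixed boundary vertex $v$, the balls $N_G^{\ell-dist_G(u,v)}[v]$ that arise (over the various $u$ in clusters containing $v$) form a \emph{nested chain}, since the radii are integers in $[1,\ell-1]$. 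So the hypergraph is encoded compactly by one BFS tree rooted at each boundary vertex plus, for each boundary vertex, the list of radii actually used. Computing all these BFS trees costs $\tilde{\cal O}(n \cdot n/r^{1-\beta}) = \tilde{\cal O}(n^2/r^{1-\beta})$ time (sparse graphs), which accounts for the first term in the claimed running time.

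Next I would build the suboptimal interval representation. Each hyperedge $N_G^{\ell-dist_G(u,v)}[v]$ is a ball of radius $<\ell$, hence a union of intervals along a precomputed spanning path for ${\cal N}_{\ell-1}(G)$ (or rather: we need spanning paths for ${\cal N}_j(G)$ for all $j<\ell$; but $\ell={\cal O}(1)$ here as in the bounded-diameter setting, or more precisely we take $\ell=k-1$ and reuse the spanning paths produced by Claim~\ref{claim:+1} in time $\tilde{\cal O}(\ell\cdot n^{2-\varepsilon_d})$). Actually the cleaner route, and the one I would take, is: having a spanning path of stabbing number $\tilde{\cal O}(n^{1-\varepsilon_d})$ for ${\cal N}_{\ell-1}(G)$ already in hand, each hyperedge $q=N_G^{\ell-dist_G(u,v)}[v]$ is a sub-ball of $N_G^{\ell-1}[v]$ and can be represented by intersecting intervals, giving ${\cal O}(n^{1-\varepsilon_d})$ intervals per hyperedge, hence a spanning path $P'$ (namely that same path) with $\sum_q |E_{P'}(q)| = \tilde{\cal O}(nr^\beta \cdot n^{1-\varepsilon_d})$. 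This is the auxiliary object that plays the role of $P_i'$.

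Then comes the $\varepsilon$-net step, which is the real point. Set $\varepsilon := \Theta(n^{-\varepsilon_d})$ and sample $S\subseteq V$ of size $\tilde\Theta(d/\varepsilon)=\tilde{\cal O}(n^{\varepsilon_d})$. Since every hyperedge of the boundary hypergraph is a ball of $G$, and $G$ has distance VC-dimension $\leq d$, the whole family of hyperedges — together with the family of pairwise symmetric differences — has bounded VC-dimension (Lemma~\ref{lem:diff}), so by Lemma~\ref{lem:partition-lemma} w.h.p.\ any two hyperedges whose traces on $S$ coincide differ in only $\tilde{\cal O}(\varepsilon n)=\tilde{\cal O}(n^{1-\varepsilon_d})$ vertices. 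Group the hyperedges by their trace on $S$: one BFS from each vertex of $S$ computes, for each boundary vertex $v$ and each used radius, the trace, in time $\tilde{\cal O}(n\cdot|S|)=\tilde o(n^{2-\varepsilon_d})$. By Sauer-Shelah-Perles (Lemma~\ref{lem:sauer-lemma}) there are ${\cal O}(|S|^d)=\tilde{\cal O}(n^{\varepsilon_d d})$ groups — here one must check $\varepsilon_d d \ll 1$, exactly as in Theorem~\ref{thm:boundedDiam}, which is why $\varepsilon_d=\varepsilon_d(d)$ must be chosen small enough. Pick one representative hyperedge per group, forming ${\cal H}^{rep}$ with $\tilde{\cal O}(n^{\varepsilon_d d})$ hyperedges and size $\tilde{\cal O}(n^{1+\varepsilon_d d})$; apply Theorem~\ref{thm:stabbingNb} to ${\cal H}^{rep}$ (VC-dimension $\leq d$ by Lemma~\ref{lem:vc-sub}) to get a spanning path $P$ of stabbing number $\tilde{\cal O}(n^{1-\varepsilon_d})$ in time $\tilde{\cal O}(n^{1+\varepsilon_d d}+n^{2-\varepsilon_d})$. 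Because a representative and any member of its group differ in $\tilde{\cal O}(n^{1-\varepsilon_d})$ vertices, $P$ is a spanning path of stabbing number $\tilde{\cal O}(n^{1-\varepsilon_d})$ for the entire boundary hypergraph. Finally, for every non-representative hyperedge $q$ in a group with representative $q_0$, we recover $E_P(q)$ from $E_P(q_0)$ by computing $q\Delta q_0$ explicitly: using the auxiliary path $P'$ (interval representations $I(q)$, $\overline{I(q)}$) together with two balanced binary search trees per group storing $\{\sigma_{P}(x):x\in q_0\}$ and its complement, we enumerate $q\Delta q_0$ in time $\tilde{\cal O}(|E_{P'}(q)| + n^{1-\varepsilon_d})$ per hyperedge — verbatim the mechanism of Subclaim~\ref{subclaim:apply-eps-net}. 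Summed over all $\tilde{\cal O}(nr^\beta)$ hyperedges this is $\tilde{\cal O}(\,nr^\beta\cdot n^{1-\varepsilon_d}\,)=\tilde{\cal O}(n^{2-\varepsilon_d}r^\beta)$, the second term in the bound, while the BFS-from-boundary-vertices step contributes $\tilde{\cal O}(n^2/r^{1-\beta})$, completing the running-time accounting.

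The main obstacle I anticipate is bookkeeping the compact encoding of the boundary hypergraph correctly: one must never list hyperedges individually when there are superlinearly many of them, so every phase — trace computation, representative selection, symmetric-difference correction — has to be organized \emph{per center} (boundary vertex), exploiting the nested-chain structure so that the per-vertex work telescopes, and one must be careful that the ${\cal O}(nr^\beta)$ factor multiplying $n^{1-\varepsilon_d}$ in the final correction step does not blow up the exponent, which forces $r=n^\gamma$ with $\gamma$ small enough that $\beta\gamma<\varepsilon_d$ (this choice is deferred, as the lemma statement leaves $r$ free and only bounds the running time in terms of $r$). A secondary subtlety, also inherited from Theorem~\ref{thm:boundedDiam}, is that the constant $\varepsilon_d$ produced here is a \emph{possibly smaller} constant than the one in Theorem~\ref{thm:stabbingNb}, reflecting the slack needed to absorb both the $n^{\varepsilon_d d}$ grouping overhead and the error term $\tilde{\cal O}(\varepsilon n)$ from the $\varepsilon$-net; I would simply fix $\varepsilon_d$ at the end to be the minimum constant making all the exponent inequalities hold.
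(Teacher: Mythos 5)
There is a genuine gap, and it sits exactly at the step your whole plan leans on: the auxiliary path $P'$ with a small suboptimal representation of every hyperedge. Inside Lemma~\ref{lem:boundary-spanning-path} you have no precomputed spanning paths at all -- the lemma must work for an arbitrary $\ell$ and an arbitrary $r$-division, within $\tilde{\cal O}(n^2/r^{1-\beta}+n^{2-\varepsilon_d}r^{\beta})$ time. Your first suggestion (spanning paths for ${\cal N}_j(G)$ for all $j<\ell$, ``$\ell={\cal O}(1)$'') does not apply: in the intended use (Claim~\ref{claim:subopt-sublin}) one has $\ell=2k_{i-1}$, which can be polynomially large, and avoiding the factor $\ell$ is precisely the point of Theorem~\ref{thm:subLin}; moreover the lemma is invoked when only a path for ${\cal N}_{k_{i-1}}(G)$ exists, so assuming paths for all radii below $\ell$ is circular. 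Your ``cleaner route'' is unsound: a spanning path with low stabbing number for ${\cal N}_{\ell-1}(G)$ controls only balls of radius exactly $\ell-1$; a sub-ball $N_G^{\ell-\mathrm{dist}_G(u,v)}[v]$ is not a union of few intervals of that path, and ``intersecting intervals'' produces no such representation. Since your symmetric-difference correction step (the Subclaim~\ref{subclaim:apply-eps-net} mechanism) needs $I(q)$ and $\overline{I(q)}$ on $P'$ for each of the $\tilde{\cal O}(nr^{\beta})$ hyperedges, the proof as written does not go through. A further mismatch is that the lemma claims \emph{deterministic} time, whereas your $\varepsilon$-net grouping is Monte Carlo.

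The paper proceeds quite differently and needs neither $P'$ nor randomness. It partitions $V$ arbitrarily into ${\cal O}(n^{1-\eta})$ parts of size ${\cal O}(n^{\eta})$ and computes, for each part $V_i$, the set of \emph{distinct} traces of all hyperedges on $V_i$, by sweeping the distance layers from each boundary vertex and maintaining a duplicate-free collection via partition refinement; Sauer--Shelah--Perles bounds the number of distinct traces at every step by ${\cal O}(n^{\eta d})$, which keeps this deterministic subroutine within $\tilde{\cal O}(n^{1+\eta}/r^{1-\beta}+n^{2(d+1)\eta})$ per part. Concatenating the per-part paths as in Claim~\ref{claim:stabbing-partition} gives the path, and the stabbed-edge sets are then computed \emph{per center}: for each boundary vertex $v$ one stores the used radii in a balanced search tree and scans the path edges once, charging ${\cal O}(n)$ per center (total $\tilde{\cal O}(n^2/r^{1-\beta})$) plus ${\cal O}(1)$ per stabbed pair (total $\tilde{\cal O}(n^{2-\varepsilon_d}r^{\beta})$), exploiting the nested-chain structure you correctly identified. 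Incidentally, this last per-center trick would also have rescued the back end of your plan -- you could compute $E_P(q)$ for all hyperedges directly from the boundary BFS distances without ever forming symmetric differences -- but the construction of the path itself would still have to be made deterministic to match the statement.
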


\begin{proof}
By construction, ${\cal H}_{\ell,G}(\Lambda_r)$ is a subhypergraph of ${\cal B}(G)$, the ball hypergraph of $G$.
Therefore, the VC-dimension of ${\cal H}_{\ell,G}(\Lambda_r)$ is at most $d$.
Let $\varepsilon_d$ be the constant of Theorem~\ref{thm:stabbingNb}.
In order to prove the result, we are left proving that we can adapt the algorithm of Theorem~\ref{thm:stabbingNb} so that it runs in time $\tilde{\cal O}(nm/r^{1-\beta} + n^{2-\varepsilon_d}r^{\beta})$ when it is given ${\cal H}_{\ell,G}(\Lambda_r)$ as input.
For that, let $F$ be the set of the boundary vertices.
We have that $|F|$ is at most twice the excess, and so, by Lemma~\ref{lem:sum-order} we get $|F| = {\cal O}(n/r^{1-\beta})$.
\begin{enumerate}
\item
We start with a breadth-first search from every vertex of $F$.
This pre-processing phase takes time ${\cal O}(|F|m) = {\cal O}(n^2/r^{1-\beta})$.
Furthermore, note that in doing so we computed all the pairs $(v,t) \in F \times [\ell]$ such that $N_G^t[v]$ is a hyperedge of ${\cal H}_{\ell,G}(\Lambda_r)$.   

\item Let $\eta = 2^{d+1}\varepsilon_d$.
We arbitrarily partition the vertex-set $V$ into subsets $V_1,V_2,\ldots,V_p$ such that $p = {\cal O}(n^{1-\eta})$ and, for every $1 \leq i \leq p$, $|V_i| = {\cal O}(n^{\eta})$.
Furthermore, as explained in the proof of Theorem~\ref{thm:stabbingNb} (i.e., Claim~\ref{claim:stabbing-partition}), we can compute a spanning path of stabbing number $\tilde{\cal O}(n^{1-\varepsilon_d})$ for ${\cal H}_{\ell,G}(\Lambda_r)$ if we are given the subhypergraphs ${\cal H}_1, {\cal H}_2, \ldots, {\cal H}_p$ that are induced by $V_1,V_2,\ldots, V_p$ respectively. 
It takes time $\tilde{\cal O}(n^{1 + \eta[c(d+1)-1]})$ for some constant $c$, that is in $\tilde{\cal O}(n^{2-\varepsilon_d})$.

\medskip
In order to compute all the subhypergraphs ${\cal H}_i$, we could proceed by brute-force, as follows.
For every $i$ and for any boundary vertex $v$, we read the vertices of $V_i$ by non-decreasing distance to $v$. 
Furthermore, if $N_G^t[v]$ is a hyperedge of ${\cal H}_{\ell,G}(\Lambda_r)$, then as soon as we exceed distance $t$ all the vertices read so far are exactly $N_G^t[v] \cap V_i$.
Overall, for a fixed boundary vertex $v$ we could obtain this way up to ${\cal O}(|V_i|)$ different subsets of order ${\cal O}(|V_i|)$ each.
But unfortunately, that would give us a time complexity in ${\cal O}(|F||V_i|^2) = {\cal O}(n^{1+2\eta}/r^{1-\beta})$ for a given $i$, and so a total running time in ${\cal O}(n^{2+\eta}/r^{1-\beta})$.
In order to lower this running-time, we proceed as follows.

\begin{enumerate}
\item For every $v \in F$, we group all the vertices in $V_i$ at equal distance to $v$.
We totally order this partition by increasing distance of its vertices to $v$.
Doing so we get exactly $n_i := |V_i|$ ordered groups (possibly, by adding some empty groups in the sequence), denoted $V_i^1(v), V_i^2(v), \ldots, V_i^{n_i}(v)$. 
Overall, this phase takes time ${\cal O}(|F||V_i|) = {\cal O}(n^{1+\eta}/r^{1-\beta})$.

\item Then, we introduce a complex subprocedure in order to gradually remove the duplicates from the sets $N_G^t[v] \cap V_i, \ \mbox{for} \ v \in F \ \mbox{and} \ t \geq 0$. For every $j = 0\ldots n_i$, we map every boundary vertex $v$ to $\bigcup_{j' \leq j} V_i^{j'}(v)$. More precisely, we maintain some collection of different subsets of $V_i$, denoted ${\cal P}_j = \left( V_i^{j,1}, V_i^{j,2}, \ldots, V_i^{j,s_i(j)} \right)$ (note that ${\cal P}_j$ is a list of lists). For every $v \in F$ we ensure that there is a unique $t$ such that $V_i^{j,t} =    \bigcup_{j' \leq j} V_i^{j'}(v)$.
Then, there is a pointer from vertex $v$ to this $t^{th}$ subset (equivalently, for every list in ${\cal P}_j$, we store an auxiliary list of all the corresponding vertices of $F$).

\smallskip
We will show next that it is easy to construct ${\cal P}_{j+1}$ from ${\cal P}_j$, but that the natural method for doing so might generate some duplicates. Roughly, by using in our analysis the Sauer-Shelah-Perles lemma, we prove that it is more efficient to remove duplicates at every single step rather than doing it only once at the end of the subprocedure. 

\smallskip
We observe that initially for $j=0$, there is a unique subset $V_i^{0,1} = \emptyset$. 
Furthermore if all the subsets $V_i^{j,t}$ have been computed at step $j$, then we can compute those at step $j+1$, as follows: 
\begin{itemize}
\item For every $v \in F$, if we have $V_i^{j,t} = \bigcup_{j' \leq j} V_i^{j'}(v)$, then we add a copy of $V_i^{j+1}(v)$ into some buffer $b_{j+1}'(t)$ and a pointer from $v$ to this copy. It takes time ${\cal O}(\sum_{v \in F} |V_i^{j+1}(v)|)$. 
\item Then, for every $1 \leq t \leq s_i(j)$, we remove all the duplicated subsets in the buffer $b_{j+1}'(t)$.
The new buffer that we get is denoted $b_{j+1}(t)$.
We can compute it by using partition refinement ({\it e.g.}, see~\cite{HMPV00}), that takes time ${\cal O}(\sum_{W \in b_{j+1}'(t)}|W|)$ up to some ${\cal O}(|V_i|)$-time pre-processing.
Overall the removal of all the duplicates, for all $t$, takes total time ${\cal O}(n^{\eta} + \sum_{v \in F} |V_i^{j+1}(v)|)$.
Furthermore on our way to remove the duplicates, we also need to actualize the pointers between the boundary vertices and the buffer contents, that takes additional time ${\cal O}(|F|) = {\cal O}(n/r^{1-\beta})$.
\item For every $1 \leq t \leq s_i(j)$, we can now refine $V^{j,t}_i$ in $|b_{j+1}(t)|$ new subsets.
Every such subset has order ${\cal O}(n^{\eta})$, and so this operation takes total time ${\cal O}(n^{\eta}|b_{j+1}(t)|)$.
Overall, we obtain a new collection of ${\cal O}(\sum_t |b_{j+1}(t)|)$ subsets. 
Furthermore, on our way to construct this collection, we can add a pointer from every boundary vertex $v$ to {\em one} subset equal to $\bigcup_{j' \leq j+1}V_i^{j'}(v)$ (there may be duplicated subsets). 
By carefully using the pointers added between the boundary vertices and the buffer contents during the previous phases, this operation takes additional time ${\cal O}(|F|) = {\cal O}(n/r^{1-\beta})$.

\item Finally, since all the subsets in the new collection have order ${\cal O}(n^{\eta})$, by using again partition refinement we can merge all the duplicated subsets in time ${\cal O}(|V_i| + n^{\eta} \cdot \sum_t |b_{j+1}(t)|) = {\cal O}(n^{\eta} \cdot \sum_t |b_{j+1}(t)|)$. 
We also need to actualize the pointers between the boundary vertices and the subsets, that takes total time ${\cal O}(|F|) = {\cal O}(n/r^{1-\beta})$. 
\end{itemize}
Let us upper bound $s_i(j)$. 
For that we stress that every subset $V^{j,t}_i$ represents a different intersection of $V_i$ with a ball of $G$, hence of a hyperedge of ${\cal B}(G)$.
Since ${\cal B}(G)$ has VC-dimension at most $d$, by Lemma~\ref{lem:vc-sub} so does its subhypergraph ${\cal H}_i'$ induced by $V_i$.
In particular, every $V^{j,t}_i$ is a hyperedge of ${\cal H}_i'$.
By Lemma~\ref{lem:sauer-lemma} we get that $s_i(j) = {\cal O}(n^{\eta d})$.
In the same way, since for a fixed $t$ the $|b_{j+1}(t)|$ new subsets that are obtained by refinement of $V_i^{j,t}$ are pairwise different, we have $|b_{j+1}(t)| \leq s_i(j+1) = {\cal O}(n^{\eta d})$. 
As a result, the passing from step $j$ to step $j+1$ takes time: $${\cal O}\left(\left[\sum_{v \in F} \left|V_i^{j+1}(t)\right|\right] + n/r^{1-\beta} + n^{\eta} \cdot n^{\eta d} \cdot n^{\eta d}\right) = {\cal O}\left(\left[\sum_{v \in F} \left|V_i^{j+1}(t)\right|\right] + n/r^{1-\beta} + n^{(2d+1)\eta}\right).$$
There are ${\cal O}(n^{\eta})$ loops, that gives us a total running time of: 
\begin{align*}
{\cal O}\left(\left[\sum_{v \in F}\sum_{j=1}^{n_i} \left|V_i^{j+1}(t)\right|\right] + n^{1+\eta}/r^{1-\beta} + n^{(2d+2)\eta}\right) &= {\cal O}\left(\left[\sum_{v \in F}n^{\eta}\right] + n^{1+\eta}/r^{1-\beta} + n^{2(d+1)\eta}\right) \\
&= {\cal O}\left(n^{1+\eta}/r^{1-\beta} + n^{2(d+1)\eta}\right).
\end{align*}
\item Here the key observation is that $\bigcup_j {\cal P}_j$ contains the intersection with $V_i$ of {\em all} the balls whose center is in $F$. We so computed a {\em superset} of order ${\cal O}(n^{(d+1)\eta})$ (i.e., ${\cal O}(n^{\eta d})$ per loop) that contains all possible intersections between a hyperedge of ${\cal H}_{\ell,G}(\Lambda_r)$ and $V_i$. Since every subset in $\bigcup_j {\cal P}_j$ represents the intersection of a hyperedge of ${\cal B}(G)$ with $V_i$, and furthermore ${\cal B}(G)$ has VC-dimension at most $d$, then for simplicity we may replace ${\cal H}_i$ by the slightly larger hypergraph ${\cal H}_i'$ of which these are the hyperedges (i.e., the hyperedges of ${\cal H}_i'$ are the intersections of $V_i$ with all the balls whose center is in $F$).
Note that in order to compute ${\cal H}_i'$, it is sufficient to eliminate all the duplicated elements in this collection $\bigcup_j {\cal P}_j$, that takes total time ${\cal O}(n^{(d+2)\eta})$.
\end{enumerate}
The running-time is in $\tilde{\cal O}(n^{1+\eta}/r^{1-\beta} + n^{2(d+1)\eta})$ for any fixed $i$.
Therefore, the total running-time is in $\tilde{\cal O}(n^{2}/r^{1-\beta} + n^{1 + [2(d+1)-1]\eta})$.
Recall (see Theorem~\ref{thm:stabbingNb} and its proof) that we have $\tilde{\cal O}(n^{1 + \eta[c(d+1)-1]}) = \tilde{\cal O}(n^{2-\varepsilon_d})$ for some constant $c > 2$.
As a result, the running-time of this part is also in $\tilde{\cal O}(n^{2}/r^{1-\beta} + n^{2-\varepsilon_d})$.
\item
By continuing the algorithm of Theorem~\ref{thm:stabbingNb} with the hypergraphs ${\cal H}_1', {\cal H}_2', \ldots, {\cal H}_p'$, we get a spanning path of ${\cal H}_{\ell,G}(\Lambda_r)$ whose stabbing number is in $\tilde{\cal O}(n^{1-\varepsilon_d})$.
It remains to compute, for every hyperedge of ${\cal H}_{\ell,G}(\Lambda_r)$, the set of the stabbed edges.
For that, let $v \in F$ be fixed.
We add all the radii $t$ such that $N_G^t[v]$ is a hyperedge of ${\cal H}_{\ell,G}(\Lambda_r)$ in a balanced binary research tree $T_v$.
Then, we scan all the edges $xy$ of the spanning path.
By symmetry let us assume that $dist_G(v,x) \leq dist_G(v,y)$.
The edge $xy$ is stabbed by all the hyperedges $N_G^t[v]$ such that $dist_G(v,x) \leq t < dist_G(v,y)$.
Then by using $T_v$, after some pre-computation in time ${\cal O}(\log{n})$ every value $t$ in the range $[dist_G(v,x);dist_G(v,y))$ can be enumerated in constant-time.

Overall, by Lemma~\ref{lem:boundary-num} there are ${\cal O}(nr^{\beta})$ edges, and so the construction of all the balanced binary research trees takes time $\tilde{\cal O}(nr^{\beta})$.
Scanning all the edges, for every boundary vertex, takes total time $\tilde{\cal O}(n^2/r^{1-\beta})$.
Any other operation corresponds to an edge of the spanning path that is stabbed by a hyperedge of ${\cal H}_{\ell,G}(\Lambda_r)$, and as a result there can only be $\tilde{\cal O}(n^{1-\varepsilon_d}) \times {\cal O}(nr^{\beta}) = \tilde{\cal O}(n^{2-\varepsilon_d}r^{\beta})$ such operations.
\end{enumerate}
Altogether combined, the running time of the algorithm is in $\tilde{\cal O}(n^2/r^{1-\beta} + n^{2-\varepsilon_d}r^{\beta})$.  
\end{proof}

\subsubsection*{The algorithm}

We are now ready to prove the main result of this section.

\begin{proofof}{Theorem~\ref{thm:subLin}}
By a classical dichotomic argument it is sufficient to prove that for any $k$, we can decide whether $diam(G) \leq k$ in truly subquadratic time.
Furthermore in order to solve this decision problem, by Lemma~\ref{lem:reduction} we are left with computing a spanning path of strongly sublinear stabbing number for the $(k-1)$-neighbourhood hypergraph.
For that, let $b_0b_1\ldots b_{s-1}b_s$ be the binary decomposition of $k-1$ (from the most significant to the less significant bit).
Furthermore, for every $0 \leq i \leq s$, let $k_i$ be the positive integer of binary decomposition $b_0b_1\ldots b_{i-1}b_i$.
We will prove by finite induction that one can compute a spanning path of strongly sublinear stabbing number for the $k_i$-neighbourhood hypergraph of $G$.
More precisely, let $\varepsilon_d$ be the constant of Theorem~\ref{thm:stabbingNb}.
We will compute such spanning path of stabbing number $\tilde{\cal O}(n^{1-\varepsilon_d})$ in $\tilde{\cal O}(n^{2-\varepsilon'})$ time for some $\varepsilon'>0$. 
Note that since $s = {\cal O}(\log{k})$, that will indeed give us a truly subquadratic algorithm for deciding whether $diam(G) \leq k$.

If $i=0$, then $b_0 = 1$ and the result follows from Theorems~\ref{thm:diamTwo} and~\ref{thm:stabbingNb}.
Thus from now on assume $i > 0$.
We observe that $k_i = 2k_{i-1} + b_i$.
Furthermore if $b_i = 1$, then we explained in the proof of Theorem~\ref{thm:boundedDiam} (Subclaims~\ref{subclaim:subopt} and~\ref{subclaim:apply-eps-net}) how to compute a spanning path of stabbing number $\tilde{\cal O}(n^{1-\varepsilon_d})$ for ${\cal N}_{k_i}(G)$ from such spanning path for ${\cal N}_{k_i-1}(G) = {\cal N}_{2k_{i-1}}(G)$; it takes $\tilde{\cal O}(mn^{1-\varepsilon_d}) = \tilde{\cal O}(n^{2-\varepsilon_d})$ time, that is truly subquadratic. 
In order to complete the proof, we now adapt this algorithm of Theorem~\ref{thm:boundedDiam} so as to compute a spanning path with stabbing number $\tilde{\cal O}(n^{1-\varepsilon_d})$ for ${\cal N}_{2k_{i-1}}(G)$ from such spanning path for ${\cal N}_{k_{i-1}}(G)$. 
Specifically, let $C$ and $\alpha < 1$ be such that ${\cal G} \subseteq {\cal G}_{\alpha,C}$ and set $\beta := \frac{4+\alpha}5 < 1$. The running time for computing the adequate spanning path will be $\tilde{\cal O}\left(n^{2-\left(1- \frac 1 {2-\beta}\right)\varepsilon_d}\right)$.
We first prove the following intermediate result for any value $r>0$ (we will use $r=\Theta(n^{\frac{\varepsilon_d}{2-\beta}})$):

\begin{myclaim}\label{claim:subopt-sublin}
Let $P_{i-1}$ be a spanning path of stabbing number $\tilde{\cal O}(n^{1-\varepsilon_d})$ for ${\cal N}_{k_{i-1}}(G)$.
In $\tilde{\cal O}(n^{2-\varepsilon_d}r + n^2/r^{1-\beta})$ time, we can compute from $P_{i-1}$ a spanning path $P_i'$ for ${\cal N}_{2k_{i-1}}(G)$, such that \\ $\sum_{v \in V} |E_{P_i'}(N_G^{2k_{i-1}}[v])| = \tilde{\cal O}(n(r + n^{1-\varepsilon_d} r^{\beta}))$.
\end{myclaim}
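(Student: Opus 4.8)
The plan is to use the $r$-division machinery together with the $\ell$-boundary hypergraph. First I would compute an $r$-division $\Lambda_r$ of $G$ in time $\tilde{\cal O}(n^{1+\beta})$ via Corollary~\ref{cor:rdivision-runtime}; recall $G$ is sparse ($m = {\cal O}(n)$) because ${\cal G}_{\alpha,C}$ has bounded degeneracy. Observe that for $u,z \in V$ we have $dist_G(u,z) \le 2k_{i-1}$ if and only if \emph{either} $u$ and $z$ lie together in a common cluster (testable within a cluster), \emph{or} there is a boundary vertex $v$ of some cluster containing $u$ with $dist_G(u,v) + dist_G(v,z) \le 2k_{i-1}$, i.e. $z \in N_G^{2k_{i-1}-dist_G(u,v)}[v]$. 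So $N_G^{2k_{i-1}}[u]$ is covered by a ``local part'' inside the clusters containing $u$ plus ${\cal O}(\sqrt r)$ balls centered at boundary vertices --- exactly the hyperedges collected in the $\ell$-boundary hypergraph ${\cal H}_{\ell,G}(\Lambda_r)$ with $\ell = 2k_{i-1}$.

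Next I would apply Lemma~\ref{lem:boundary-spanning-path}: since $G$ has distance VC-dimension at most $d$, we can compute in deterministic time $\tilde{\cal O}(n^2/r^{1-\beta} + n^{2-\varepsilon_d} r^{\beta})$ a spanning path $P$ of ${\cal H}_{2k_{i-1},G}(\Lambda_r)$ with stabbing number $\tilde{\cal O}(n^{1-\varepsilon_d})$, together with the stabbed-edge sets $E_P(q)$ for each boundary ball $q$. This $P$ will serve as $P_i'$. I then need to bound $\sum_{v\in V}|E_{P_i'}(N_G^{2k_{i-1}}[v])|$. For each vertex $u$, write $N_G^{2k_{i-1}}[u]$ as the union of: (a) the intersection with the ${\cal O}(1)$ clusters through $u$ --- a set of size ${\cal O}(r)$, hence decomposable into ${\cal O}(r)$ intervals of any spanning path, contributing ${\cal O}(r)$ stabbed edges; and (b) the ${\cal O}(\sqrt r)$ boundary balls, each contributing $\tilde{\cal O}(n^{1-\varepsilon_d})$ stabbed edges by the stabbing-number bound on $P$. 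So $|E_{P_i'}(N_G^{2k_{i-1}}[u])| = \tilde{\cal O}(r + \sqrt r\cdot n^{1-\varepsilon_d}) = \tilde{\cal O}(r + n^{1-\varepsilon_d} r^{\beta})$ (using $\beta \ge \alpha \ge 1/2 \ge 1/2$, so $\sqrt r = {\cal O}(r^\beta)$), giving the claimed total $\tilde{\cal O}(n(r + n^{1-\varepsilon_d}r^{\beta}))$. To actually \emph{compute} this interval representation I would use the breadth-first searches from the ${\cal O}(n/r^{1-\beta})$ boundary vertices (already performed inside the proof of Lemma~\ref{lem:boundary-spanning-path}, cost ${\cal O}(n^2/r^{1-\beta})$) to read off, for each $u$ and each relevant boundary vertex $v$, the radius $2k_{i-1}-dist_G(u,v)$ and look up $E_P$ of the corresponding hyperedge; handling the intra-cluster part requires a BFS restricted to each cluster from each of its vertices, costing ${\cal O}(r)$ per vertex and ${\cal O}(nr)$ overall (dominated by $n^{2-\varepsilon_d}r$ once $r = \Theta(n^{\varepsilon_d/(2-\beta)})$). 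Summing all contributions yields running time $\tilde{\cal O}(n^{2-\varepsilon_d}r + n^2/r^{1-\beta})$ as stated.

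The main obstacle I anticipate is the bookkeeping needed to assemble, for each of the $n$ vertices, its ${\cal O}(\sqrt r)$ relevant boundary balls \emph{and} the precomputed $E_P(\cdot)$ sets for them without blowing the budget --- in particular one must be careful that the total size $\sum_u |E_{P_i'}(N_G^{2k_{i-1}}[u])| = \tilde{\cal O}(n^{2-\varepsilon_d}r^{\beta})$ of the output is itself within the claimed time bound, which it is since $r^\beta \le r$. A secondary subtlety is that different clusters through a boundary vertex must not cause the same ball to be counted with too large multiplicity; since each vertex lies in ${\cal O}(1)$ clusters (by the excess bound of Lemma~\ref{lem:sum-order}, the average is ${\cal O}(1)$, and one can refine the $r$-division so this holds pointwise, or simply absorb the constant), this is harmless. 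Everything else is a routine combination of the $r$-division excess bounds, the Sauer--Shelah--Perles lemma, and the stabbing-number guarantee of Lemma~\ref{lem:boundary-spanning-path}.
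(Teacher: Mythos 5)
Your overall architecture is the same as the paper's: compute an $r$-division, pass to the $2k_{i-1}$-boundary hypergraph, obtain a spanning path $P_i'$ with its stabbed-edge sets via Lemma~\ref{lem:boundary-spanning-path}, and assemble each ball from an intra-cluster part plus boundary balls. However there are two genuine gaps. First, the same-cluster part. Your ``if and only if'' is false as stated: two vertices of a common cluster can be at $G$-distance at most $2k_{i-1}$ while every such short path leaves the cluster, so this is not ``testable within a cluster'', and a BFS restricted to the cluster does not compute $N_G^{2k_{i-1}}[u]\cap V(W)$. This is precisely where the hypothesis of the claim is used in the paper: given $P_{i-1}$ for ${\cal N}_{k_{i-1}}(G)$, one decides $dist_G(x,y)\le 2k_{i-1}$ for same-cluster pairs by testing whether $N_G^{k_{i-1}}[x]\cap N_G^{k_{i-1}}[y]\neq\emptyset$ on the spanning path, in $\tilde{\cal O}(n^{1-\varepsilon_d})$ time per pair, i.e.\ $\tilde{\cal O}(n^{2-\varepsilon_d}r)$ overall --- this is the source of the $n^{2-\varepsilon_d}r$ term in the claimed running time; your proposal never uses $P_{i-1}$ at all. (One could try to repair your variant without $P_{i-1}$ by arguing that any same-cluster vertex missed by the in-cluster BFS is reached through a boundary vertex of the cluster and is therefore already covered by a boundary ball, but you neither state nor prove this covering argument, and your text suggests you believe the cluster-local test is exact.)

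Second, the bound of ``${\cal O}(\sqrt r)$ boundary balls per vertex'' is not available here. The $r$-division used in this paper is the weak one governed by Lemma~\ref{lem:sum-order}: it only bounds the \emph{total} excess by ${\cal O}(n/r^{1-\beta})$, and gives no bound on the number $b_W$ of boundary vertices of an individual cluster (Federickson's per-piece ${\cal O}(\sqrt r)$ bound is specific to planar graphs and is explicitly not used). Hence your per-vertex estimate $\tilde{\cal O}(r+\sqrt r\cdot n^{1-\varepsilon_d})$ is unjustified. The correct derivation bounds only the sum: $|E_{P_i'}(N_G^{2k_{i-1}}[u])| = \tilde{\cal O}(b_W\, n^{1-\varepsilon_d} + r)$ for $u$ internal to the cluster $W$, and then $\sum_{u}|E_{P_i'}(N_G^{2k_{i-1}}[u])| = \tilde{\cal O}\bigl(nr + n^{1-\varepsilon_d}\sum_W b_W|V(W)|\bigr) = \tilde{\cal O}\bigl(nr + r\, n^{1-\varepsilon_d}\sum_W b_W\bigr) = \tilde{\cal O}\bigl(n(r+n^{1-\varepsilon_d}r^{\beta})\bigr)$, using only the excess bound. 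Your closing remark about refining the division ``so this holds pointwise'' concerns the number of clusters containing a vertex, not $b_W$, and does not fill this gap.
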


\begin{proofclaim}
By Lemma~\ref{cor:rdivision-runtime} we can compute an $r$-division, denoted $\Lambda_r$, in time $\tilde{\cal O}(n^{1+\beta}) = \tilde{\cal O}(n^2/r^{1-\beta})$.
Then, we proceed as follows.
\begin{enumerate}
\item
We first consider all the clusters $W \in \Lambda_r$ sequentially. For every $x,y \in W$ we have $dist_G(x,y) \leq 2k_{i-1}$ if and only if we have $N_G^{k_{i-1}}[x] \cap N_G^{k_{i-1}}[y] \neq \emptyset$.  
Given a spanning path of stabbing number $\tilde{\cal O}(n^{1-\varepsilon_d})$ for ${\cal N}_{k_{i-1}}(G)$, this test can be easily done in time $\tilde{\cal O}(n^{1-\varepsilon_d})$ (i.e., by sorting the ends of the $\tilde{\cal O}(n^{1-\varepsilon_d})$ intervals on this spanning path that cover the balls $N_G^{k_{i-1}}[x]$ and $N_G^{k_{i-1}}[y]$).
Furthermore by Lemma~\ref{lem:sum-order} we have $\sum_{W \in \Lambda_r} |W| = \Theta(n)$, and so this step takes time $\tilde{\cal O}(n^{1-\varepsilon_d}) \times \sum_{W \in \Lambda_r} {\cal O}(|W|^2)= \tilde{\cal O}(rn^{1-\varepsilon_d}) \times \sum_{W \in \Lambda_r} |W| = \tilde{\cal O}(rn^{2-\varepsilon_d})$.
Overall for every $u \in V$, we computed all the vertices of $N_G^{2k_{i-1}}[u]$ that belong to a common cluster with $u$.
\item
Let us now consider the $2k_{i-1}$-boundary hypergraph ${\cal H}_{2k_{i-1},G}(\Lambda_r)$. By Lemma~\ref{lem:boundary-spanning-path} we can compute a spanning path $P_i'$ of stabbing number $\tilde{\cal O}(n^{1-\varepsilon_d})$ for this hypergraph, in time $\tilde{\cal O}(nm/r^{1-\beta} + n^{2-\varepsilon_d}r^{\beta}) = \tilde{\cal O}(n^2/r^{1-\beta} + n^{2-\varepsilon_d}r^{\beta})$.
Let $u \in V$.
There are two cases:
\begin{itemize}
\item{\it Case $u$ is a boundary vertex.} Since $N_G^{2k_{i-1}}[u]$ is a hyperedge of the boundary hypergraph, we have $|E_{P_{i}'}(N_G^{2k_{i-1}}[u])| = \tilde{\cal O}(n^{1-\varepsilon_d})$ (already computed).
\item{\it Case $u$ is an internal vertex.}
Let $W \in \Lambda_r$ be the unique cluster containing $u$, and set initially $E_{P_i'}(N_G^{2k_{i-1}}[u]) := \emptyset$. 
For every boundary vertex $v \in V(W)$, if $dist_G(u,v) < 2k_{i-1}$ then, we add all of $E_{P_i'}(N_G^{2k_{i-1}-dist_G(u,v)}[v])$ to $E_{P_i'}(N_G^{2k_{i-1}}[u])$.
Assuming there are $b_W$ boundary vertices in $W$, we obtain that $|E_{P_i'}(N_G^{2k_{i-1}}[u])| = \tilde{\cal O}(b_W \cdot n^{1-\varepsilon_d})$.
Furthermore, this above set of stabbed edges defines a collection of intervals that covers exactly the balls $N_G^{2k_{i-1}-dist_G(u,v)}[v]$, for the boundary vertices $v \in V(W)$.
Denote this collection of intervals by $I_i(u)$, and its complementary by $\overline{I_i(u)}$.
By construction, every vertex on an interval of $I_i(u)$ is at a distance $\leq 2k_{i-1}$ to $u$; conversely, since $\Lambda_r$ is also an edge-covering, every vertex of $N_G^{2k_{i-1}}[u] \setminus W$ must be on one of these intervals.
As a result, in order to construct $E_{P_i'}(N_G^{2k_{i-1}}[u])$, it suffices to update this set using the vertices of $N_G^{2k_{i-1}}[u] \cap W$ that lie on some interval of $\overline{I_i(u)}$.
Note that in doing so, we can only modify the cardinality of $E_{P_i'}(N_G^{2k_{i-1}}[u])$ by an ${\cal O}(|W|) = {\cal O}(r)$. 
\end{itemize}
Overall, we obtain that $\sum_{u \in V} |E_{P_i'}(N_G^{2k_{i-1}}[u])| = \tilde{\cal O}(nr + n^{1-\varepsilon_d} \cdot \sum_W (b_W \cdot |V(W)|)) = \tilde{\cal O}(nr + rn^{1-\varepsilon_d} \cdot \sum_W b_W)$. Again we observe that $\sum_W b_W$ is at most twice the excess, and so by Lemma~\ref{lem:sum-order} $\sum_W b_W = {\cal O}(n/r^{1-\beta})$.
Therefore, $\sum_{u \in V} |E_{P_i'}(N_G^{2k_{i-1}}[u])| = \tilde{\cal O}(n(r + n^{1-\varepsilon_d} r^{\beta}))$.
\end{enumerate}
\end{proofclaim}
By combining Claim~\ref{claim:subopt-sublin} with Subclaim~\ref{subclaim:apply-eps-net}, in $\tilde{\cal O}(n^{2-\varepsilon_d}r + n^2/r^{1-\beta} + n^{1-\varepsilon_d}m)$ time we compute a spanning path of stabbing number $\tilde{\cal O}(n^{1-\varepsilon_d})$ for ${\cal N}_{2k_{i-1}}(G)$, thereby completing the description of our algorithm.
Overall, the running-time of our algorithm is optimized when we have $n^2/r^{1-\beta} = n^{2-\varepsilon_d}r$.
As a result, a good choice is $r = \Theta(n^{\frac{\varepsilon_d}{2-\beta}})$.
Finally, we stress that in this case, $n^{2-\varepsilon_d}r = n^{2-\left(1- \frac 1 {2-\beta}\right)\varepsilon_d}$ is truly subquadratic, because $\beta < 1$ and so $1- \frac 1 {2-\beta} > 0$.
\end{proofof}

\section{Open problems}\label{sec:ccl}
We left open whether we can compute the diameter of all the graphs of constant distance VC-dimension and unbounded diameter in truly subquadratic time.
In order to prove that it is the case, we stress that by our Theorem~\ref{thm:boundedDiam} we only need to consider the graphs of large diameter, i.e., above some polynomial.

Another interesting direction could be to extend the constant distance VC-dimension framework to weighted graphs.

\smallskip
Finally, we observe that there exist graph families of {\em unbounded} (distance) VC-dimension for which we can compute the diameter very efficiently. 
For instance, the class of all the graphs with a universal vertex has unbounded VC-dimension.
Such graphs are a particular case of {\em dually chordal} graphs, for which we know how to compute the diameter in linear time~\cite{BCDV98}.
We observe that the ball hypergraphs of dually chordal graphs also admit some nice characterizations.
Thus, it would be very interesting to study whether a truly subquadratic algorithm for computing the diameter could be derived from some common property of dually chordal graphs and graphs of constant distance VC-dimension (say, a bounded fractional Helly number~\cite{Mat04}).


\end{document}